\title{Wreath/cascade products and related decomposition results for the concurrent setting of Mazurkiewicz traces (extended version)} 
\titlerunning{Wreath products in the concurrent setting (extended version)} 
\author{Bharat Adsul}{IIT Bombay, India}{adsul@cse.iitb.ac.in}{}{}
\author{Paul Gastin}{LSV, ENS Paris-Saclay, CNRS, Universit\'e Paris-Saclay,France}{paul.gastin@ens-paris-saclay.fr}{https://orcid.org/0000-0002-1313-7722}{Supported by IRL ReLaX}
\author{Saptarshi Sarkar}{IIT Bombay, India}{sapta@cse.iitb.ac.in}{}{}
\author{Pascal Weil}{Univ. Bordeaux, LaBRI, CNRS UMR 5800, F-33400 Talence, France\\CNRS, ReLaX, IRL 2000, Siruseri, India}{pascal.weil@labri.fr}{}{Partially supported by the DeLTA project (ANR-16-CE40- 0007)}
\authorrunning{B.\,Adsul, P.\,Gastin, S.\,Sarkar and P.\,Weil} 
\keywords{Mazurkiewicz traces, asynchronous automata, wreath product, cascade product, Krohn Rhodes decomposition theorem, local temporal logic over traces} 
\newtheorem{question}{Question}
\newcommand{\alphabet}{\Sigma}
\newcommand{\tralphabet}{\alphabet^{\|_S}}
\newcommand{\tralphabetgossip}{\alphabet^{\|_{S^\gossip}}}
\newcommand{\gtralphabet}{\alphabet^{\gs}}
\newcommand{\dtralphabet}{\widetilde{\tralphabet}}
\newcommand{\dgtralphabet}{\widetilde{\gtralphabet}}
\newcommand{\eat}[1]{}
\newcommand{\pset}{\mathcal{P}}
\newcommand{\dalphabet}{\widetilde{\alphabet}}
\newcommand{\loc}{\mathrm{loc}}
\newcommand{\ind}{I}
\newcommand{\dep}{D}
\newcommand{\labf}{\lambda}
\newcommand{\isucc}{\lessdot}
\newcommand{\dcset}[1]{{\downarrow}#1}
\newcommand{\traces}{TR(\dalphabet)}
\newcommand{\tracesdtr}{TR(\dtralphabet\!)}
\newcommand{\tracesdgtr}{TR(\dgtralphabet\!)}
\newcommand{\configs}[1]{\mathcal{C}_{#1}}
\newcommand{\view}[2]{{\downarrow}^{#1}(#2)}
\newcommand{\pview}[2]{{\downarrow}_{\mathrm{prev}}^{#1}(#2)}
\newcommand{\ls}{S}
\newcommand{\gs}{\ls_{\pset}}
\newcommand{\asa}{A}
\newcommand{\lt}{\delta}
\newcommand{\gt}{\Delta}
\newcommand{\et}[1]{\Rightarrow_{#1}}
\newcommand{\etl}[2]{\overset{#2}{\Rightarrow}_{#1}}
\newcommand{\at}[1]{\rightarrow_{#1}}
\newcommand{\atl}[2]{\xrightarrow{#2}_{#1}}
\newcommand{\run}{\rho}
\newcommand{\gf}{\ls_{\text{fin}}}
\newcommand{\tmi}{(X,M)}
\newcommand{\tmii}{(Y,N)}
\newcommand{\divides}{\prec}
\newcommand{\atma}{(\{\ls_i\},M)}
\newcommand{\wt}{\widetilde}
\newcommand{\gossip}{\mathcal{G}}
\newcommand{\latest}{\mathrm{latest}}
\newcommand{\globalstate}{\mathrm{globalstate}}
\newcommand{\gc}{\circ_g}
\newcommand{\lc}{\circ_{\ell}}
\newcommand{\bbb}{B}
\let\phi\varphi
\let\ab\allowbreak
\let\subset\subseteq
\newcommand{\s}{\Sigma}
\newcommand{\aaa}{A}
\newcommand{\trans}{\mathcal{F}}
\newcommand{\wpis}{X \times Y}
\newcommand{\Y}{\mathop{\mathsf{Y}\vphantom{a}}\nolimits}
\newcommand{\Si}{\mathbin{\mathsf{S}}}
\newcommand{\loctl}{\mathsf{LocTL}[\Y_i,\Si_i]}
\newcommand{\sprtl}{\mathsf{LocTL}[\Si_i]}
\newcommand{\Oo}{\mathop{\mathsf{O}\vphantom{a}}\nolimits}
\begin{document}

\maketitle

\begin{abstract}
We develop a new algebraic framework to reason about languages of Mazurkiewicz
traces. This framework supports true concurrency and provides a non-trivial generalization
of the wreath product operation to the trace setting.
A novel local wreath product principle has been established. 
The new framework is crucially used to propose a decomposition result for 
recognizable trace languages, which is an analogue of the Krohn-Rhodes theorem. 
We prove this decomposition result in the special case of acyclic architectures
and apply it to extend Kamp's theorem to this setting.
We also introduce and analyze distributed automata-theoretic operations called local and global cascade products.
Finally, we show that aperiodic trace languages can be characterized using global cascade
products of localized and distributed two-state reset automata.
\end{abstract}

\section{Introduction}
Transformation monoids provide an abstraction of transition systems.
One of the key tools in their analysis is the notion of 
wreath product \cite{Eilenberg1976AutomataLA,str_cirBook,MPRI-notes} which, when translated to the language of finite state automata, 
corresponds to the cascade product. In the 
cascade product of automata $A$ and $B$, with $A$ `followed by' $B$, 
the automaton $A$ runs on the input sequence, while the automaton $B$ runs
on the input sequence as well as the state sequence produced by
the automaton $A$. The wreath product principle (see \cite{str_cirBook,MPRI-notes,infinitewords}) is a key result
which relates a language accepted by a cascade/wreath product
to languages accepted by the individual automata. 

In this work, we are interested in generalizing the wreath product operation
from the sequential setting to the concurrent setting involving multiple 
processes. Towards this, 
we work with Mazurkiewicz traces (or simply traces) \cite{Mazurkiewicz_1977, traces-book} which are 
well established as models of true concurrency, and asynchronous automata \cite{zielonka1987notes}
which are natural distributed finite state devices working on traces.
A trace represents a concurrent behaviour as a labelled partial order which 
faithfully captures the distribution of events across processes, and
causality and concurrency between them. An asynchronous automaton runs
on the input trace in a distributed fashion and respects the underlying
causality and concurrency between events. During the run, when working on an event,
only the local states of the processes participating in that event are
updated; the rest of the processes remain oblivious to the occurrence of
the event at this point.

A natural generalization of the above mentioned sequential
cascade product to asynchronous automata $A$ and $B$ is as follows: the 
asynchronous automaton $A$ runs on the input trace, 
thus assigning, for each event, a local state for every process participating
in that event. Now the asynchronous automaton $B$ runs on the input trace 
with the \emph{same} set of events which are \emph{additionally} labelled by the 
previous local states of the participating processes in $A$. 
It is easy to capture this operational semantics by another asynchronous
automaton which we call the \emph{local} cascade product of $A$ and $B$.
Such a construction is used in \cite{DBLP:conf/fsttcs/AdsulS04} 
to provide an asynchronous automata-theoretic characterization of aperiodic
trace languages.

Here we propose a new algebraic framework to deal with the issues
posed by the concurrent setting. More precisely, we introduce a new class
of transformation monoids called \emph{asynchronous transformation monoids} (in short, \emph{atm}).
These monoids make a clear distinction between local and global `states'
and allow us to reason about whether a global transformation is 
essentially induced by a particular subset of processes. 
Recall that, from a purely algebraic viewpoint, the set of all traces 
forms a free partially commutative monoid in which \emph{independent} actions
commute \cite{traces-book}. In order to recognize a trace language via an atm, we introduce the notion of an \emph{asynchronous
morphism} which exploits the locality of the underlying atm. It is rather
easy to see that asynchronous morphisms are the 
algebraic counterparts of asynchronous automata.

One of the central results of this work 
is a wreath product principle in the new algebraic framework. 
It turns out that the \emph{standard} wreath 
product operation yields an operation on asynchronous transformation monoids. 
Let $T_1$ and $T_2$ be atm's and $T_1 \wr T_2$ be the wreath product atm.
Our \emph{local} wreath product principle describes a trace language 
recognized by $T_1 \wr T_2$  in terms of a \emph{local asynchronous transducer} 
which is a natural \emph{causality and concurrency preserving} map from traces 
to traces (over an appropriately extended alphabet), and trace languages recognized by $T_1$ and $T_2$. It is a novel generalization of the
standard wreath product principle.
The work \cite{GUAIANA1998277} presents a wreath product principle for traces in the setting
of transformation monoids but it seems less significant since it uses \emph{non-trace} structures.

The importance of the standard wreath product operation is 
clearly highlighted by the fundamental Krohn-Rhodes decomposition theorem \cite{KR}
which, broadly speaking, says that any finite transformation monoid can 
be simulated by wreath products of `simple' transformation monoids. 
The wreath product
principle along with the Krohn-Rhodes theorem can be used to provide 
alternate and conceptually simpler proofs (see \cite{starfreenote-meyer,cohen1993expressive}) of several interesting classical 
results about formal 
languages of words such as Sch\"utzenberger's theorem \cite{SCHUTZENBERGER1965190}, 
McNaughton-Papert's theorem \cite{McNaughton-Papert}
and Kamp's theorem \cite{Kamp1968-KAMTLA} which together show the equivalence between
star-free, aperiodic, first-order-definable and linear-temporal-logic  definable word languages.
Motivated by these applications, we investigate an analogue of the fundamental 
Krohn-Rhodes decomposition theorem over traces.
We use the new algebraic framework to propose a simultaneous generalization
of the Krohn-Rhodes theorem (for word languages) and the Zielonka theorem (for trace 
languages). The proof of this generalization for the special case of
acyclic architectures is another significant result. As an application, we extend
Kamp's theorem: we formulate a natural local temporal logic and show that it
is expressively complete.

It turns out that asynchronous
morphisms into wreath products correspond to the aforementioned 
distributed automata-theoretic
local cascade products. 
We also introduce the \emph{global} cascade product operation
and show that it
can be realized as the local 
cascade product with the help of the ubiquitous gossip automaton from \cite{DBLP:journals/dc/MukundS97}. 

Our final major contribution concerns aperiodic trace languages and is
in the spirit of the Krohn-Rhodes theorem for the aperiodic case.
We establish that aperiodic trace languages can be characterized using global cascade
products of localized and distributed two-state reset automata.
The proof of this characterization crucially uses an expressively complete
process-based local temporal logic over traces from \cite{DBLP:journals/iandc/DiekertG06}.

The rest of the paper is organized as follows. After setting up the
preliminaries in Section~\ref{sec:prelim}, we develop the new algebraic
framework in Section~\ref{sec:alg}. In Section~\ref{sec:wpp},
we establish the local wreath product principle.
In Section~\ref{sec:dec}, we postulate a new decomposition result, and we establish it for
acyclic architectures.
We introduce and analyze local and global cascade products in Section~\ref{sec:cascade}.
The global cascade product based characterization of aperiodic trace languages appears
in Section~\ref{sec:aperiodic}. Finally, we conclude in Section~\ref{sec:conclusion}.

\section{Preliminaries}\label{sec:prelim}
\subsection{Basic Notions in Trace Theory}\label{section:traces} 
Let $\pset$ be a finite set of agents/processes. A
\emph{distributed alphabet} over $\pset$ is a family 
$\dalphabet = \{\alphabet_i\}_{i \in \pset}$. Let $\alphabet = 
\bigcup_{i \in \pset} \alphabet_i$. For $a \in \alphabet$, 
we set $\loc(a) = \{i \in \pset ~|~ a \in \alphabet_i\}$.
By $(\alphabet, \ind)$ we denote the corresponding trace 
alphabet, i.e., $\ind$ is the \emph{independence relation} 
$\{(a,b) \in \alphabet^2 ~|~ \loc(a) \cap \loc(b) 
= \emptyset \}$ induced by $\dalphabet$. The corresponding 
\emph{dependence relation} $\alphabet^2 \setminus \ind$ is
denoted by $\dep$.

A $\alphabet$-labelled poset is a structure $t = (E, \leq, 
\labf)$ where $E$ is a set, $\leq$ is a partial order on 
$E$ and $\labf \colon E \to \alphabet$ is a labelling function.
For $e, e' \in E$, define $e \isucc e'$ if and only if $e < e'$ and 
for each $e''$ with $e \leq e'' \leq e'$ either $e = e''$ 
or $e' = e''$. For $X \subseteq E$, let $\dcset{X} = 
\{y \in E ~|~ y \leq x \text{ for some } x \in X \}$. For 
$e \in E$, we abbreviate $\dcset{\{e\}}$ by simply $\dcset e$. 

A \emph{trace} over $\dalphabet$ is a finite 
$\alphabet$-labelled poset $t = (E, \leq, \labf)$ such that
\begin{itemize}
    	\item If $e, e' \in E$ with $e \isucc e'$ 
        then $(\labf(e), \labf(e')) \in D$

	\item If $e, e' \in E$ with $(\labf(e), \labf(e')) \in D$, then $e \leq
        e'$ or $e' \leq e$
\end{itemize}

Let $\traces$ denote the set of all traces over $\dalphabet$.
Henceforth a trace means a trace over $\dalphabet$ unless 
specified otherwise. Let $t = (E, \leq, \labf) \in \traces$.
The elements of $E$ are referred to as \emph{events} in $t$ and for
an event $e$ in $t$, $\loc(e)$ abbreviates $\loc(\labf(e))$. Further, let 
$i \in \pset$. The set of $i$-events in $t$ is $E_i = 
\{e \in E ~|~i \in \loc(e)\}$. This is the set of events in
which process $i$ participates. It is clear that $E_i$ is 
totally ordered by $\leq$.

A subset $c \subseteq E$ is a \emph{configuration} of $t$ if and only if
$\dcset{c} = c$. 
We let $\configs{t}$ denote the set of all configurations of
$t$. Notice that 
$\emptyset$, the empty set, and $E$ are configurations.
More importantly, 
$\dcset{e}$ is a configuration for every $e \in E$. 
There are two 
natural transition relations that one may associate with
the configurations of $t$. The event based transition relation ${\et{t}}
\subseteq \configs{t} \times E \times \configs{t}$ is 
defined by $c \etl{t}{e} c'$ if and only if $e \notin c$ and 
$c \cup \{e\} = c'$. The action based transition relation
${\at{t}} \subseteq \configs{t} \times \alphabet 
\times \configs{t}$ is defined by $c \atl{t}{a} 
c'$ if and only if there exists $e \in E$ such that $\labf(e) = a$ 
and $c \etl{t}{e} c'$.

Now we turn our attention to the important operation of concatenation of traces.
Let $t = (E, \leq, \labf) \in \traces$ and $t' = (E', \leq'
, \labf') \in \traces$. Without loss of generality, we can assume $E$ and $E'$
to be disjoint. We define $tt' \in \traces$ to be 
the trace $(E'', \leq'', \labf'')$ where 
\begin{itemize}
	\item $E'' = E \cup E'$,
	\item $\leq''$ is the transitive closure of 
		${\leq} \cup {\leq'} \cup \{(e,e') \in E \times
			E' ~|~ (\labf(e), \labf'(e')) \in 
		\dep \}$,
	\item $\labf''\colon E'' \to \alphabet$ where 
		$\labf''(e) = \labf(e)$ if $e \in E$;
		otherwise, $\labf''(e) = \labf'(e)$.
\end{itemize}
This operation, henceforth referred to as \emph{trace 
	concatenation}, gives $\traces$ a monoid structure.
Observe that, with $a$ (resp. $b$) denoting the singleton trace with
the only event labelled $a$ (resp. $b$), if $(a,b) \in I$ then
$ab=ba$ in $\traces$.

A basic result in trace theory gives a presentation of the trace monoid
as a quotient of the \emph{free} word monoid $\Sigma^*$. See \cite{traces-book} for more details.
Let ${\sim_I} \subset \Sigma^* 
\times \Sigma^*$ be the congruence generated by $ab \sim_I ba$ 
for $(a,b) \in I$.

\begin{proposition}\label{basicprop} The canonical morphism from
$\Sigma^* \to \traces$, sending a letter $a \in \Sigma$ to the trace $a$, factors through 
the quotient monoid $\Sigma^*/{\sim_I}$  and induces an isomorphism from 
$\Sigma^*/{\sim_I}$ to the trace monoid $\traces$.
\end{proposition}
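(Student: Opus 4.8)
The plan is to establish the stated isomorphism in two stages: first show the canonical morphism $\phi \colon \Sigma^* \to \traces$ factors through $\Sigma^*/{\sim_I}$, and then show the induced map $\bar\phi \colon \Sigma^*/{\sim_I} \to \traces$ is a bijective monoid morphism, hence an isomorphism. The factorization is the easy half: since $\phi$ is a monoid morphism and, as observed right after the definition of trace concatenation, $\phi(ab) = \phi(a)\phi(b) = \phi(b)\phi(a) = \phi(ba)$ whenever $(a,b) \in I$, the congruence $\sim_I$ (being \emph{generated} by these relations) is contained in the kernel congruence of $\phi$. The universal property of the quotient then yields a unique morphism $\bar\phi$ with $\bar\phi \circ \pi = \phi$, where $\pi \colon \Sigma^* \to \Sigma^*/{\sim_I}$ is the canonical projection.

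It remains to prove $\bar\phi$ is bijective. \emph{Surjectivity} is straightforward: I would argue by induction on the number of events that every trace $t = (E, \leq, \labf) \in \traces$ is the $\phi$-image of some word, by picking any linear extension $e_1 < e_2 < \cdots < e_n$ of the partial order (a linearization consistent with $\leq$) and checking that $\phi(\labf(e_1)\cdots\labf(e_n)) = t$. The verification that trace concatenation of singletons reconstructs $t$ uses the defining axioms of a trace, namely that dependent events are ordered and immediate successors are dependent.

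\emph{Injectivity} is the main obstacle and the genuinely substantive part. I need to show that if $\phi(u) = \phi(v)$ then $u \sim_I v$, i.e.\ that any two words mapping to the same trace are connected by a finite chain of adjacent transpositions of independent letters. The natural strategy is to fix the target trace $t$ and show that the set of \emph{all} words $w$ with $\phi(w) = t$ forms a single $\sim_I$-class. Concretely, I would prove that any two linearizations of a finite poset satisfying the trace axioms differ by a sequence of swaps of adjacent incomparable (hence, by the axioms, independent) elements — a standard fact about linear extensions of partial orders, provable by induction on the number of inversions between the two linearizations. The delicate point is translating ``adjacent incomparable events in the poset'' into ``adjacent independent letters in the word,'' which relies on the second trace axiom: if two consecutive letters in a word are dependent, the corresponding events must be $\leq$-comparable, so only independent adjacent pairs can be swapped, and each such swap is exactly a generator of $\sim_I$.

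Since this presentation of the trace monoid is entirely classical, I would in practice simply cite the reference \cite{traces-book} for the full argument, as the statement explicitly defers there; the sketch above indicates the proof one would reconstruct if needed.
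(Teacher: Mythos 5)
Your proposal is correct: the paper gives no proof of this proposition at all, deferring entirely to \cite{traces-book}, and your sketch---factorization via the universal property of the quotient, surjectivity via linearizations (using both trace axioms to check the concatenation of singletons reconstructs $\le$), and injectivity by connecting any two linear extensions through swaps of adjacent incomparable, hence independent by the second trace axiom, events---is exactly the standard argument found in that reference. There is nothing to flag beyond the routine point, which you implicitly handle, that traces are taken up to isomorphism of labelled posets when identifying $\phi(u)$ and $\phi(v)$ with linearizations of the same poset.
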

\subsection{Transformation Monoids and Trace Languages}
A map from a set $X$ to itself is  called a \emph{transformation}
of $X$. Under function composition, the set of all such 
transformations forms a monoid; let us denote this monoid
by $\trans(X)$. The function composition $f_1f_2$ (sometimes also denoted 
$f_1\circ f_2$)
applies from left-to-right, that is, $(f_1f_2)(\cdot) = f_2(f_1(\cdot))$.

A \emph{transformation monoid} (or simply \emph{tm}) is 
a pair $T=\tmi$ where $M$ is a submonoid of $\trans(X)$. The tm $\tmi$ is called \emph{finite} if $X$ is finite. 

\begin{example}
Consider $X = \{1,2\}$ with the monoid $M = \{\mbox{id}_X, r_1, r_2\}$ 
where $\mbox{id}_X$ is the identity transformation and $r_i$ maps every element in $X$ to element $i$.  
Note that $r_1 r_2 = r_2$ and $r_2 r_1 = r_1$.
Then $(X, M)$ is a tm. We will refer to it as $U_2$.
\end{example}

Let $T=(X,M)$ be a tm. By a morphism $\phi$ from $\traces$ to $T$, we mean
a (monoid) morphism $\phi \colon \traces \rightarrow M$. We abuse the
notation and also write this as $\phi \colon \traces \rightarrow T$.
Observe that, if $(a,b) \in I$, then as $ab=ba$ in $\traces$, $\phi(a)$ and $\phi(b)$
must commute in $M$. In fact, in view of Proposition~\ref{basicprop}, any function
$\phi\colon  \Sigma \rightarrow M$ which has the property that $\phi(a)$ and $\phi(b)$ commute
for every $(a,b) \in I$, can be uniquely
extended to a morphism from $\traces$ to $M$.

Transformation monoids can be naturally used to recognize trace languages. 
Let $L \subseteq \traces$ be a trace language. We say that $L$ is 
\emph{recognized by} a tm $T=(X,M)$ if there exists a morphism $\phi\colon \traces \rightarrow T$,
an \emph{initial} element $x_{\text{in}} \in X$
and a \emph{final} subset $X_{\text{fin}} \subset X$ such that 
$L = \{t \in \traces \mid \phi(t)(x_{\text{in}}) \in X_{\text{fin}}\}$.
A trace language is said to be \emph{recognizable} if it is recognized by a finite tm.

\section{New Algebraic Framework}\label{sec:alg}
\subsection{Asynchronous Transformation Monoids}
Recall that we have a fixed finite set $\pset$ of processes.
If $\pset$ is clear from the context, we use 
the simpler notation $\{X_i\}$ to denote the 
$\pset$-indexed family ${\{X_i\}}_{i \in \pset}$. The elements
of the sets in a $\pset$-indexed family will be typically called \emph{states}.

We begin with some notation involving local and global states.
Suppose that each process $i \in \pset$ is equipped with a finite non-empty 
set of \emph{local $i$-states}, denoted $\ls_i$. We set 
$\ls = \bigcup_{i \in \pset} \ls_i$ and call 
$\ls$ the set of \emph{local states}. We let $P$ range over 
non-empty subsets of $\pset$ and let $i,j$ range over 
$\pset$. A \emph{$P$-state} is a map $s\colon  P \to \ls$ such that 
$s(j) \in \ls_j$ for every $j \in P$. We let 
$\ls_P$ denote the set of all $P$-states. We call 
$\gs$ the set of all \emph{global states}.

If $P' \subset P$ and $s \in \ls_P$ then $s_{P'}$ is
$s$ restricted to $P'$. We use the shorthand ${-\!P}$ to indicate
the complement of $P$ in $\pset$. We 
sometimes split a global state 
$s \in \gs$ as $(s_P, s_{-\!P}) \in \ls_P \times \ls_{-\!P}$.
We let $S_a$ denote the set of all $\loc(a)$-states which we also call $a$-states
for simplicity.
Thus if $\loc(a) \subset P$ and $s$ is a 
$P$-state we shall write $s_a$ to mean $s_{\loc(a)}$. 

Now we are ready to introduce a new class of transformation monoids. 

\begin{definition}
    An \emph{asynchronous transformation monoid} (in short, atm) $T$ (over $\mathcal{P}$)
    is a 
    pair $(\{S_{i}\}, M)$ where
	\begin{itemize}
		\item $S_{i}$ is a finite non-empty set for each process $i \in \pset$.
		\item $M$ is a submonoid of $\trans(\gs)$, the monoid of
            all transformations from $\gs$ to itself. 
	\end{itemize}
\end{definition}

Note that this definition is dependent on $\pset$ and 
an atm $T=(\{S_i\}, M)$ naturally induces the tm $(\gs, M)$.
We abuse the notation and write $T$ also for this tm.

A crucial feature of the definition of an atm is that it makes
a clear distinction between local and global states. Observe that
the underlying transformations operate on global states.
It will be useful to know whether a global transformation is essentially
induced by a particular subset of processes. We develop some notions
to make this precise.

Fix an atm $(\{S_i\}, M)$ and $P \subseteq \pset$.
Let $f\colon  S_{P} \rightarrow S_{P}$ be a map.
We define $g\colon  \gs \rightarrow \gs$ as: for
$s \in \gs$,
\begin{center}
	$g(s) = s'$ iff $f(s_P) = s'_{P}$ and $s_{-\!P} = 
    s'_{-\!P}$
\end{center}
We refer to $g$ as the extension of $f$. More generally, $h\colon  S_{\mathcal{P}}
\rightarrow S_{\mathcal{P}}$ is said to be a $P$-map if it is the extension of some
$f\colon  S_{P} \rightarrow S_{P}$. Note
that, in this case, for all $s=(s_P, s_{-\!P}) \in \gs$,
$h((s_P, s_{-\!P}))= (f(s_P), s_{-\!P})$
and $f$ is uniquely determined by $h$.
It is worth pointing out that a map $h\colon  \gs \rightarrow \gs$
with the property that for every $s \in \gs$ there exists $s'_P \in S_P$
such that $h((s_P, s_{-\!P})) = (s'_P, s_{-\!P})$ is \emph{not} necessarily a $P$-map.
This condition merely says that the $(-P)$-component of a global state
is not updated by $h$. The update of the $P$-component may still depend
on the $(-P)$-component.

The following lemma provides a characterization of $P$-maps. We skip
the easy proof.
\begin{lemma}\label{easylemma} Let $h\colon  \gs \rightarrow \gs$. Then $h$ is a $P$-map
	if and only if for every $s$ in $\gs$, ${[h(s)]}_{-\!P} = s_{-\!P}$ and
	for every $s, s'$ in $\gs$, $s_P=s'_P$ implies that ${[h(s)]}_P = {[h(s')]}_P$.
\end{lemma}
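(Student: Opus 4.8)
The plan is to prove both directions of the biconditional characterizing $P$-maps directly from the definition of extension. The statement asserts that $h\colon \gs \to \gs$ is a $P$-map if and only if two conditions hold: first, $h$ never alters the $(-P)$-component (i.e.\ ${[h(s)]}_{-\!P} = s_{-\!P}$ for all $s$), and second, the $P$-component of the output depends only on the $P$-component of the input (i.e.\ $s_P = s'_P$ implies ${[h(s)]}_P = {[h(s')]}_P$).

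For the forward direction, I would assume $h$ is a $P$-map, so by definition $h$ is the extension of some $f\colon S_P \to S_P$. Unwinding the definition of extension, for every $s = (s_P, s_{-\!P})$ we have $h(s) = (f(s_P), s_{-\!P})$. The first condition is then immediate: ${[h(s)]}_{-\!P} = s_{-\!P}$. The second condition is equally direct: if $s_P = s'_P$, then ${[h(s)]}_P = f(s_P) = f(s'_P) = {[h(s')]}_P$, since $f$ reads only the $P$-component. This direction is essentially a restatement of the definition.

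For the converse, I would assume the two conditions and construct the map $f$ witnessing that $h$ is a $P$-map. The key point is that the second condition guarantees the $P$-component of $h(s)$ is well-defined as a function of $s_P$ alone, so I can define $f\colon S_P \to S_P$ by $f(s_P) = {[h(s)]}_P$, picking any global state $s$ whose $P$-component is $s_P$; the second condition ensures this is independent of the choice of the $(-P)$-component. Here I would note that $\gs = \prod_{i} S_i$ is nonempty (each $S_i$ is nonempty), so such a witnessing $s$ always exists for each $s_P \in S_P$, making $f$ total. It then remains to check that $h$ is the extension of this $f$: for an arbitrary $s = (s_P, s_{-\!P})$, the first condition gives ${[h(s)]}_{-\!P} = s_{-\!P}$, while by construction ${[h(s)]}_P = f(s_P)$, so $h(s) = (f(s_P), s_{-\!P})$, which is exactly the extension of $f$.

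The proof is entirely routine once the definitions are unwound, so I do not expect a genuine obstacle. The one subtlety worth flagging explicitly is well-definedness of $f$ in the converse: one must verify that $f$ does not depend on the arbitrary choice of the $(-P)$-component of the witnessing global state, which is precisely what the second hypothesis supplies, and one must use the nonemptiness of the local state sets to know that a witness exists at all. This is why the paper can reasonably call the proof \emph{easy} and omit it.
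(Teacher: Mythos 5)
Your proof is correct and matches the intended argument: the paper explicitly skips this proof as easy, and your routine unwinding of the definition of extension in both directions, including the well-definedness of $f$ via the second condition and the nonemptiness of the local state sets, is precisely the argument the paper leaves to the reader.
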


A simple but crucial observation regarding $P$-maps is recorded in the following lemma.
\begin{lemma}\label{simplelemma}%
	Let $f, g \colon  S_{\!\mathcal{P}} \rightarrow S_{\!\mathcal{P}}$ be such that $f$ is
    a $P$-map and $g$ is a $P'$-map. If $P \cap P' = \emptyset$, then $fg=gf$.
\end{lemma}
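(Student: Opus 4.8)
The plan is to prove Lemma~\ref{simplelemma} directly from the explicit description of how $P$-maps act on global states, which the paper records just before the statement: if $f$ is a $P$-map extending $f_0\colon S_P \to S_P$, then for every $s = (s_P, s_{-\!P}) \in \gs$ we have $f(s) = (f_0(s_P), s_{-\!P})$, and similarly $g$ extending $g_0\colon S_{P'} \to S_{P'}$ acts as $g(s) = (g_0(s_{P'}), s_{-\!P'})$. Since $P \cap P' = \emptyset$, the index set $\pset$ splits into three disjoint blocks, $P$, $P'$, and the remainder $R = \pset \setminus (P \cup P')$, and I would write a global state as a triple $s = (s_P, s_{P'}, s_R)$ accordingly. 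The whole point is that $f$ touches only the $P$-block and $g$ touches only the $P'$-block, so their effects are on disjoint coordinates and therefore commute.

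First I would make explicit how each map reads on the triple decomposition. Because $f$ is a $P$-map, it fixes the $(-\!P)$-component, which here is the pair $(s_{P'}, s_R)$, and rewrites the $P$-component as $f_0(s_P)$; thus $f(s_P, s_{P'}, s_R) = (f_0(s_P), s_{P'}, s_R)$. Symmetrically $g(s_P, s_{P'}, s_R) = (s_P, g_0(s_{P'}), s_R)$. The key subtlety, which the paper flags in its remark about the condition ``$h$ does not update the $(-P)$-component'' being strictly weaker than being a $P$-map, is that $f_0$ depends \emph{only} on $s_P$ and not on the other coordinates; this is exactly the second clause of Lemma~\ref{easylemma}, and it is what guarantees that applying $g$ first (which changes $s_{P'}$) does not alter the value $f_0(s_P)$ that $f$ will subsequently produce.

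Then I would simply compute both composites on an arbitrary $s = (s_P, s_{P'}, s_R)$. Recalling the paper's left-to-right convention $(fg)(\cdot) = g(f(\cdot))$, one composite gives
\[
(fg)(s_P, s_{P'}, s_R) = g\bigl(f_0(s_P), s_{P'}, s_R\bigr) = \bigl(f_0(s_P), g_0(s_{P'}), s_R\bigr),
\]
and the other gives
\[
(gf)(s_P, s_{P'}, s_R) = f\bigl(s_P, g_0(s_{P'}), s_R\bigr) = \bigl(f_0(s_P), g_0(s_{P'}), s_R\bigr).
\]
These agree for every $s$, so $fg = gf$ as transformations of $\gs$, which is the claim.

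I do not expect a genuine obstacle here — the result is elementary once the triple decomposition is set up. The one point requiring care, rather than difficulty, is to invoke disjointness correctly at each step: I must be sure that $f$ leaves the $P'$-coordinate untouched (so that the value $g_0(s_{P'})$ fed into $g$ is the same in both orders) and that $g$ leaves the $P$-coordinate untouched (so that the argument $s_P$ of $f_0$ is unchanged by a prior application of $g$). Both follow from $P \cap P' = \emptyset$ together with the fact that, as a $P$-map, $f$ fixes every coordinate outside $P$, and symmetrically for $g$. If a more self-contained presentation were preferred, I would instead route the argument through Lemma~\ref{easylemma}, verifying its two conditions for $fg$ and for $gf$ and checking they yield the same map, but the direct triple computation above is the cleaner route.
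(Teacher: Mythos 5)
Your proof is correct and is essentially identical to the paper's: the same three-block decomposition $s=(s_P,s_{P'},s_Q)$ with $Q=\pset\setminus(P\cup P')$, the same explicit action of the extensions $f_0,g_0$ on their respective blocks, and the same two-line computation showing both composites equal $(f_0(s_P),g_0(s_{P'}),s_Q)$. Your extra remark that $f_0$ depends only on the $P$-coordinate (the second clause of Lemma~\ref{easylemma}) is a welcome clarification but does not change the argument.
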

\begin{proof} Let $f$ (resp. $g$) be the extension of some $f'\colon  S_P \rightarrow S_P$ (resp. $g'\colon S_{P'} \rightarrow S_{P'}$). With $Q=\pset\!-\!(P\cup P')$,
we can split a global state $s \in \gs$ as $s=(s_P,s_{P'},s_Q)$.
In the split notation, we have 
\begin{align*}
fg ~((s_P, s_{P'}, s_Q))  =  g ~((f'(s_P), s_{P'}, s_Q))  =  (f'(s_P), g'(s_{P'}), s_Q) \\
gf ~((s_P, s_{P'}, s_Q))  =  f ~((s_P, g'(s_{P'}), s_Q))  =  (f'(s_P), g'(s_{P'}), s_Q)
\end{align*}
This shows that $f$ and $g$ commute.
\end{proof}

\begin{example}\label{ex:u2l} Fix a process $\ell \in \pset$.
We define the atm $U_2[\ell]= (\{S_i\}, M)$ where, $S_{\ell}=\{1,2\}$ and for each $i \neq \ell$, $S_i$ has exactly one element.
Observe that $S_\pset$ has only two global states which are completely determined by their $\ell$-components.
We will identify a global state with its $\ell$-component.
The monoid $M$ is $\{\mbox{id}_{S_\pset}, r_1, r_2\}$ 
where $\mbox{id}_{S_\pset} $ is the identity transformation and $r_i$ maps every global state to the global state $i$.  
Note that $r_1$ and $r_2$ are $\{\ell\}$-maps.
\end{example}

\subsection{Asynchronous Morphisms} Now we fix a distributed alphabet $\dalphabet = {\{\alphabet_i\}}_{i \in \pset}$ 
over $\pset$ and introduce special
morphisms from the trace monoid $\traces$ to atm's.

\begin{definition}
	Let $T = (\{S_{\! i}\}, M)$ be an atm. An \emph{asynchronous morphism} 
	from
	$TR(\widetilde{\Sigma})$ to $T$ is a (monoid) morphism $\phi \colon
	TR(\widetilde{\Sigma}) \rightarrow M$ such that for $a \in \Sigma$, 
	$\phi(a)$ is an $a$-map. 
\end{definition}
It is important to observe that, contrary to the sequential case, a morphism from $\traces$ to $M$ is not necessarily an asynchronous
morphism from $\traces$ to the atm $T=(\{S_i\},M)$. In a morphism $\psi \colon  \traces \to M$, for $(a,b) \in I$,
$\psi(a)$ and $\psi(b)$ must commute; however $\psi(a)$ (resp. $\psi(b)$) may not 
be an $a$-map (resp.\ $b$-map).

A fundamental result about asynchronous morphisms is stated in the following lemma. 
\begin{lemma}\label{lem:extends-asyn-morph}%
	Let $T = ( \{S_{\! i}\}, M)$ be an atm. Further, let $\phi \colon \Sigma \to M$ 
	be such that, for $a \in \Sigma$, $\phi(a)$ is an $a$-map. Then $\phi$ can 
	be uniquely extended to an asynchronous morphism from
	$\traces$ to $T$.
\end{lemma}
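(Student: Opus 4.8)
The goal is to extend a function $\phi\colon \Sigma \to M$, whose values $\phi(a)$ are $a$-maps, to an asynchronous morphism on all of $\traces$. The natural strategy is to invoke the universal property recorded in Proposition~\ref{basicprop}: a morphism out of the trace monoid is the same as a function on $\Sigma$ whose values respect the commutations $\phi(a)\phi(b) = \phi(b)\phi(a)$ forced by each independence pair $(a,b) \in I$. So the existence and uniqueness of the extension reduces to checking exactly that commutation condition, and the locality requirement (that each $\phi(a)$ be an $a$-map) then comes for free since we are only stipulating the images of letters.

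\emph{First} I would recall that, by definition, for $(a,b) \in I$ we have $\loc(a) \cap \loc(b) = \emptyset$. \emph{Second}, since $\phi(a)$ is an $a$-map it is in particular a $\loc(a)$-map, and likewise $\phi(b)$ is a $\loc(b)$-map. \emph{Third}, I would apply Lemma~\ref{simplelemma} with $P = \loc(a)$ and $P' = \loc(b)$: because these local sets are disjoint, the lemma yields $\phi(a)\phi(b) = \phi(b)\phi(a)$ in $M$. \emph{Fourth}, this is precisely the hypothesis needed to apply the extension statement in Proposition~\ref{basicprop} (via the paragraph following it, which observes that any $\phi\colon\Sigma\to M$ commuting on independent pairs extends uniquely to a monoid morphism $\traces \to M$). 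This gives both existence and uniqueness of a monoid morphism $\phi\colon \traces \to M$.

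\emph{Finally}, I would verify that this monoid morphism is genuinely an \emph{asynchronous} morphism, i.e.\ that it meets the extra locality clause in the definition. But that clause only constrains the images of the generators $a \in \Sigma$, and by construction $\phi(a)$ is the prescribed $a$-map; so the condition holds automatically. Uniqueness as an asynchronous morphism follows from uniqueness as a monoid morphism, since an asynchronous morphism is in particular a monoid morphism agreeing with the given data on $\Sigma$.

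The argument is essentially a packaging of two earlier results, so there is no serious obstacle; the only point requiring care is the \emph{bridge} step, namely recognizing that $\loc(a)\cap\loc(b)=\emptyset$ is exactly the disjointness hypothesis $P \cap P' = \emptyset$ of Lemma~\ref{simplelemma}, so that the commutation needed by Proposition~\ref{basicprop} is supplied automatically by the locality of the maps $\phi(a)$. Everything else is a routine appeal to the universal property of the trace monoid.
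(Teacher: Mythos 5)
Your proposal is correct and follows essentially the same route as the paper: both reduce the extension problem via Proposition~\ref{basicprop} to checking that $\phi(a)$ and $\phi(b)$ commute for $(a,b) \in I$, and both supply that commutation from Lemma~\ref{simplelemma} using $\loc(a) \cap \loc(b) = \emptyset$. Your explicit final check that the resulting monoid morphism satisfies the locality clause (and hence is asynchronous) is left implicit in the paper but is the same observation.
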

	\begin{proof} 
As the word monoid $\Sigma^*$ is the \emph{free} monoid generated by $\Sigma$, the map 
$\phi$ uniquely extends to a morphism from $\Sigma^*$ to $M$. 
Recall that, by Proposition~\ref{basicprop}, $\traces$ is the 
	quotient of $\Sigma^*$ by the relations
	of the form $ab=ba$ where $(a,b) \in I$. Therefore, in order to complete
	the proof, we simply need to show that $\phi(a)$ and $\phi(b)$ commute.
	If $(a,b) \in I$, then $\loc(a) \cap \loc(b) = \emptyset$.
	As $\phi(a)$ is an $a$-map and $\phi(b)$ is a $b$-map,
	by Lemma~\ref{simplelemma}, $\phi(a)$ and $\phi(b)$ commute.
\end{proof}

\begin{example}\label{ex:amorph}
Consider $\dalphabet = \{ \s_{p_1} = \{a,b\}, \s_{p_2} = \{b,c\}, \s_{p_3} = \{c\} \}$. A function 
$\varphi(a) = r_1$, $\varphi(b) = r_2$ and $\varphi(c) = \mathrm{id}$ extends to an asynchronous morphism 
from $\traces$ to $U_2[p_1]$.
\end{example}

Now we extend the notion of trace-language recognition from tm's to atm's via asynchronous
morphisms.
Let $L \subseteq \traces$ be a trace language. We say that $L$ is recognized by an 
atm $T = (\{S_{i}\}, M)$ if there exists an \emph{asynchronous} morphism $\phi\colon 
\traces \rightarrow T$, an \emph{initial} element $s_{\text{in}} \in \gs$ and a 
\emph{final} subset $\gf
\subseteq \gs$ such that 
\begin{center}
$L = \{t \in \traces \mid \phi(t)(s_{\text{in}}) \in \gf  \}$
\end{center}
In the rest of this subsection, we bring out the intimate relationship between
asynchronous morphisms and asynchronous automata.
We begin with the description of an asynchronous automaton -- a
model introduced by Zielonka for concurrent computation on
traces. 

An \emph{asynchronous automaton} $\aaa$ over $\dalphabet$ is a 
structure $ ( {\{\ls_i\}}_{i \in \pset}, \ab  {\{\lt_a\}}_{a \in \alphabet}, s_{\text{in}} ) $ where
\begin{itemize}
\item $\ls_i$ is a finite non-empty set of local $i$-states for each process $i$
\item For $a \in \alphabet$, $\lt_a \colon  \ls_a \to \ls_a$ is a transition 
function on $a$-states
\item $s_{\text{in}} \in \gs$ is an initial global state
\end{itemize}
Observe that an $a$-transition of $\aaa$ reads and updates only the local states of the agents which participate in $a$.
As a result, actions which involve disjoint sets of agents are processed concurrently by $\aaa$.
For $a \in \alphabet$, let $\gt_a \colon  \gs \to \gs$ be the extension of $\lt_a \colon  \ls_a \to \ls_a$. Clearly,
if $(a,b) \in I$ then $\Delta_a$ and $\Delta_b$ commute.
Similar to $\pset$-indexed families, we will follow the convention of writing $\{Y_a\}$ to denote the
$\alphabet$-indexed family ${\{Y_a\}}_{a \in \alphabet}$. 
 
Now we describe the notion of a run of $\aaa$ on an input trace. A trace run is easiest to define using
configurations. Towards this, fix a trace $t = (E, \leq, \labf) \in \traces$.
Recall that (Section~\ref{section:traces})  $\configs{t}$  is the set of all configurations of $t$ and
${\at{t}} \subset \configs{t} \times \alphabet \times \configs{t}$ is the natural
action based transition relation on configurations.
 A \emph{trace run} of $\aaa$ over $t \in \traces$ is a map 
 $\run \colon \configs{t} \to \gs$ such that $\run(\emptyset) = s_{\text{in}}$,
 and for every $(c, a, c')$ in $\at{t}$, we have $\gt_a(\run(c)) = \run(c')$.
 As $\aaa$ is deterministic, 
$t$ admits a unique trace run; it will be denoted by $\run_t$.
 
Let $L \subset \traces$ be a trace language. We say that $L$ is \emph{accepted} by $\aaa$
 if there exists a subset $\gf \subset \gs$ of final
 global states such that 
 $L = \{t = (E, \leq, \labf) \in \traces \mid \run_t(E) 
 \in \gf \}$.

Our aim is to associate with $\aaa$, a natural atm $T_\aaa$ and an asynchronous morphism $\phi_\aaa$
such that languages accepted by $\aaa$ are precisely the languages recognized via $\phi_\aaa$.

We first describe the \emph{transition} monoid $M_\aaa$ associated to $\aaa$. It is possible to
 extend the global transition functions $\{ \gt_a \}$ to arbitrary traces using
Proposition~\ref{basicprop}. For 
 $t \in \traces$, we denote this extended global transition function by $\gt_t \colon \gs\to \gs$.
 It is easy to check that, for $t = 
 (E, \leq, \labf) \in \traces$, $\gt_t(s_{\text{in}}) = \run_t(E)$.
Further, as expected, for $t,t' \in \traces$, the function composition $\gt_t 
 \gt_{t'}$ is identical to $\gt_{tt'}$. We let 
$M_\aaa$ be the finite set of functions $\{\gt_t \mid t \in \traces\}$.
Clearly, it is a monoid under the usual composition of functions.

Next, we define the \emph{transition} atm of $\aaa$ to be 
$T_\aaa = (\{\ls_i\},\ab M_\aaa)$ and the natural map $\phi_\aaa \colon \traces \to M_\aaa$ sending $t$ to $\gt_t$. 
It is clear that $\phi_\aaa$ is a morphism of monoids. Furthermore, it is an asynchronous morphism
from $\traces$ to $T_\aaa$; this is because, for $a \in \Sigma$, $\phi_\aaa(a) = \gt_a$ is in fact an $a$-map
of the atm $T_\aaa$. 
The map $\phi_\aaa$ is called the \emph{transition} (asynchronous) morphism of $\asa$.
Note that, for $t = (E, \leq, \labf) \in \traces$, 
\begin{center}
$\phi_\aaa(t)(s_{\text{in}}) = \gt_t(s_{\text{in}}) = \run_t(E)$
\end{center}
We refer to the above statement as the \emph{duality} between a run of $\aaa$ and an evaluation of $\phi_\aaa$.

The following lemma summarizes the above discussion for later reference and its proof is immediate.
  \begin{lemma}\label{lem:atmza}%
	  Given an asynchronous automaton $\aaa = (\{\ls_i \},\ab \{\lt_a\}, s_{\text{in}})$ over $\dalphabet$,
  the transition atm $T_\aaa = (\{\ls_i\}, M_\aaa)$ and the transition asynchronous morphism
  $\phi_\aaa \colon \traces \to T_\aaa$ are effectively constructible. Moreover, if $L$ 
  is a trace language, then $L$ is accepted by $\aaa$ 
  if and only if it is recognized 
  by $T_\aaa$ via $\phi_\aaa$ with $s_{\text{in}}$ as the initial state.
  \end{lemma}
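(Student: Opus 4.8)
The plan is to prove Lemma~\ref{lem:atmza} by directly unwinding the definitions assembled in the surrounding text, so that the statement becomes essentially a bookkeeping exercise. The two assertions to establish are (i) effective constructibility of $T_\aaa$ and $\phi_\aaa$, and (ii) the equivalence between acceptance by $\aaa$ and recognition by $T_\aaa$ via $\phi_\aaa$ with initial state $s_{\text{in}}$.

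For effective constructibility, I would observe that each $\gt_a$ is the extension of the given $\lt_a \colon \ls_a \to \ls_a$, hence computable on the finite set $\gs = \gs$. Since $\traces$ is generated by $\Sigma$ (Proposition~\ref{basicprop}) and the $\gt_a$ commute whenever $(a,b)\in I$, the family $\{\gt_a\}$ extends to a well-defined asynchronous morphism by Lemma~\ref{lem:extends-asyn-morph}. The transition monoid $M_\aaa = \{\gt_t \mid t \in \traces\}$ is a finite submonoid of $\trans(\gs)$, computable by saturating $\{\mathrm{id}_{\gs}\} \cup \{\gt_a \mid a \in \alphabet\}$ under composition until no new transformations appear; this terminates because $\trans(\gs)$ is finite. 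This simultaneously yields $T_\aaa = (\{\ls_i\}, M_\aaa)$ and $\phi_\aaa$, and the text has already verified that $\phi_\aaa$ is an asynchronous morphism (each $\phi_\aaa(a)=\gt_a$ is an $a$-map).

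For the acceptance-recognition equivalence, the key is the duality statement recorded just before the lemma: for $t = (E, \leq, \labf) \in \traces$,
\begin{center}
$\phi_\aaa(t)(s_{\text{in}}) = \gt_t(s_{\text{in}}) = \run_t(E)$.
\end{center}
Fixing the same final set $\gf \subseteq \gs$ on both sides, I would argue that a trace $t$ is accepted by $\aaa$ iff $\run_t(E) \in \gf$, which by the displayed equality holds iff $\phi_\aaa(t)(s_{\text{in}}) \in \gf$, which is precisely the condition for $t$ to be recognized by $T_\aaa$ via $\phi_\aaa$ with initial state $s_{\text{in}}$ and final set $\gf$. Since both notions quantify over the same choice of $\gf$, the two accepted/recognized languages coincide for each fixed $\gf$, and hence the classes agree.

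The only genuinely substantive point — and the one I would not skip — is justifying the duality equality $\gt_t(s_{\text{in}}) = \run_t(E)$, i.e.\ that the algebraically extended transition function $\gt_t$ agrees with the operational run $\run_t$ evaluated at the top configuration $E$. This is where the partial-order semantics meets the monoid semantics, and I expect it to be the main (albeit mild) obstacle. I would prove it by induction on $|E|$: picking a maximal event $e$ with $a = \labf(e)$, one writes $E = c \cup \{e\}$ with $c \in \configs{t}$ a configuration and $c \atl{t}{a} E$; the induction hypothesis gives $\run_t(c) = \gt_{t'}(s_{\text{in}})$ for the corresponding subtrace, and applying $\gt_a$ together with the compatibility condition $\gt_a(\run_t(c)) = \run_t(E)$ from the definition of a trace run closes the step. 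The commutation of independent $\gt_a$'s guarantees this is independent of the chosen maximal event, so it is well-defined. Everything else is routine definition-chasing, consistent with the text's remark that the proof is immediate.
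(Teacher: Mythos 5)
Your proposal is correct and takes essentially the same route as the paper, which declares the lemma ``immediate'' from the preceding discussion: the saturation construction of $M_\aaa$ and the induction on $|E|$ (peeling off a maximal event $e$ so that $t = t'a$ and $\gt_t(s_{\text{in}}) = \gt_a(\gt_{t'}(s_{\text{in}})) = \gt_a(\run_t(E\setminus\{e\})) = \run_t(E)$) are exactly the verifications the paper leaves implicit when it asserts the duality $\phi_\aaa(t)(s_{\text{in}}) = \run_t(E)$ is ``easy to check.'' One tiny redundancy: the appeal to commutation of independent $\gt_a$'s for well-definedness is not needed inside the induction, since $\gt_t$ is already well defined on $\traces$ (via Proposition~\ref{basicprop}) before the induction starts.
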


We now provide a form of converse to Lemma~\ref{lem:atmza}. 
Towards this, we fix an atm $T=\atma$, a state $s_{\text{in}} \in \gs$ and an 
	  asynchronous morphism $\phi \colon \traces \to T$.
Since $\phi$ is an asynchronous 
 morphism, $\phi(a)$ is an $a$-map, and is an extension of 
 some $\lt_a\colon  \ls_a \to \ls_a$ over $a$-states. 
 We set $\aaa_\phi = (\{\ls_i\}, \{\lt_a\}, s_{\text{in}})$ over $\dalphabet$.
 It turns out that the transition monoid of $\aaa_\phi$
is the image of $\phi$, a submonoid of $M$ and the transition morphism
of $\aaa_\phi$ is the appropriate restriction of $\phi$ to this submonoid.
The next lemma is 
easy to prove and we skip its proof.

\begin{lemma}\label{lem:zaatm}
Given $T=\atma$, $\phi \colon \traces \to T$ and $s_{\text{in}} \in \gs$,
the asynchronous automaton $\aaa_\phi$ 
over $\dalphabet$ is effectively constructible. Moreover,
	  a trace language $L \subset \traces$
	  is recognized by $T$ via $\phi$ (with initial state $s_{\text{in}}$) if and
	  only if it is 
	  accepted by $\aaa_\phi$.
  \end{lemma}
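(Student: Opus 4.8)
The plan is to establish Lemma~\ref{lem:zaatm} as an essentially mechanical converse to Lemma~\ref{lem:atmza}, leaning on the fact that the construction of $\aaa_\phi$ from $(T,\phi,s_{\text{in}})$ is inverse to the construction of $(T_\aaa, \phi_\aaa)$ from $\aaa$. First I would verify that $\aaa_\phi$ is a well-defined asynchronous automaton. Given $a \in \Sigma$, since $\phi$ is an asynchronous morphism, $\phi(a)$ is an $a$-map, hence by the remark following the definition of $P$-maps it is the extension of a \emph{unique} $\lt_a \colon \ls_a \to \ls_a$. This uniqueness is what makes the local transition functions well-defined, so $\aaa_\phi = (\{\ls_i\}, \{\lt_a\}, s_{\text{in}})$ is a legitimate asynchronous automaton over $\dalphabet$; effectivity is immediate since everything is finite.

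Next I would identify the transition morphism of $\aaa_\phi$ with $\phi$ itself. By construction, the global transition function $\gt_a$ of $\aaa_\phi$ is the extension of $\lt_a$, which is exactly $\phi(a)$. Both $\phi_{\aaa_\phi}$ and $\phi$ are morphisms from $\traces$ into $\trans(\gs)$ (using Proposition~\ref{basicprop} and the fact, noted in the excerpt, that $\gt_t \gt_{t'} = \gt_{tt'}$), and they agree on every generator $a \in \Sigma$. Since $\traces$ is generated by $\Sigma$ as a monoid, the two morphisms coincide on all of $\traces$; in particular $M_{\aaa_\phi} = \phi(\traces)$, the submonoid of $M$ generated by $\{\phi(a) \mid a \in \Sigma\}$, and the transition morphism of $\aaa_\phi$ is precisely $\phi$ corestricted to this image.

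Finally I would deduce the acceptance/recognition equivalence. For any $t = (E,\leq,\labf) \in \traces$, the duality statement for $\aaa_\phi$ gives $\phi_{\aaa_\phi}(t)(s_{\text{in}}) = \run_t(E)$, and by the previous step $\phi_{\aaa_\phi}(t) = \phi(t)$, so $\phi(t)(s_{\text{in}}) = \run_t(E)$. Fixing any final subset $\gf \subseteq \gs$, the language recognized by $T$ via $\phi$ with initial state $s_{\text{in}}$ and final set $\gf$ is $\{t \mid \phi(t)(s_{\text{in}}) \in \gf\}$, which equals $\{t \mid \run_t(E) \in \gf\}$, the language accepted by $\aaa_\phi$ with the same $\gf$. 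Since the two notions use the same global states and the same final sets, $L$ is recognized by $T$ via $\phi$ if and only if $L$ is accepted by $\aaa_\phi$.

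I expect no serious obstacle here: the only point demanding care is the appeal to the \emph{uniqueness} of the local map $f$ determined by an $a$-map $h$ (guaranteed in the excerpt's discussion of $P$-maps), which is what lets us recover well-defined $\lt_a$ and thereby makes the round-trip construction invert cleanly. Everything else is a routine ``agree on generators, hence agree everywhere'' argument combined with the duality already recorded for arbitrary asynchronous automata.
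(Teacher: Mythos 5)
Your proof is correct and follows exactly the route the paper intends: the paper skips this proof as routine, but its surrounding remarks (that $\phi(a)$ is the extension of a unique $\lt_a$, and that the transition monoid of $\aaa_\phi$ is the image of $\phi$ with transition morphism the restriction of $\phi$) outline precisely your argument. Your filling-in of the details --- uniqueness of the local maps, agreement of the two morphisms on the generators $\Sigma$, and the duality $\phi(t)(s_{\text{in}}) = \run_t(E)$ giving the acceptance/recognition equivalence for each choice of $\gf$ --- is exactly what the paper's sketch calls for.
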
 
\subsection{Asynchronous Wreath Product}
We begin with the crucial definition of a wreath product of transformation monoids. 
For sets $U$ and $V$, we denote the set of all functions from $U$ to $V$ by 
${\mathcal F}(U, V)$.

\begin{definition} [Wreath Product]\label{def:wp}
Let $T_1 = (X, M)$ and $T_2 = (Y, N)$ be two tm's.
We define $T = T_1 \wr T_2$ to be the tm $(X \times Y, M \times {\mathcal F}(X, N))$
where, for $m \in M$ and $f \in {\mathcal F}(X, N)$, $(m,f)$ 
represents the following transformation on $X \times Y$:
\begin{center}
$\text{~for~} (x,y) \in \wpis, \;\;\;(m,f)((x,y)) = (m(x), f(x)(y))$
\end{center}
The tm $T$ is called the \emph{wreath product} of $T_1$ and $T_2$.
It turns out that, for $(m_1, f_1), (m_2, f_2)$ in 
$M \times {\mathcal F}(X, N)$, the composition law $(m_1, f_1)(m_2,f_2) = (m,f)$ is 
such that $m = m_1m_2$ and for $x \in X, f(x) = f_1(x) + 
f_2(m_1(x))$. Here $+$ denotes the composition operation of $N$.
\end{definition}
It is a standard fact that the wreath product operation is associative~\cite{Eilenberg1976AutomataLA}.
We now adapt this operation to asynchronous transformation monoids.
\begin{definition}
Let $T_1 = (\{S_i\},M)$ and $T_2 = (\{Q_i\},N)$ be two atm's. 
We define their asynchronous wreath product, also denoted by $T_1 \wr T_2$, 
to be the atm $(\{S_i \times Q_i\}, M \times {\mathcal F}(\gs, N))$. 
An element $(m,f) \in  M \times {\mathcal F}(\gs, N)$
represents the following global\footnote{a global state (resp.\ $P$-state) of $T_1\wr T_2$ is canonically 
	identified with an element of 
$S_\pset \times Q_\pset$ (resp.\ $S_P \times Q_P$)} transformation on $S_\pset \times Q_\pset$:
\begin{center}
$\text{~for~} (s,q) \in S_\pset \times Q_\pset , \;\;\;(m,f)((s,q)) = (m(s), f(s)(q))$
\end{center}
The composition law on $M \times {\mathcal F}(\gs, N)$ is the same as in Definition~\ref{def:wp}.
\end{definition}
An important observation is that the tm associated with $T_1 \wr T_2$ is the
wreath product of the tms $(\gs,M)$ and $(Q_\pset,N)$ associated
with $T_1$ and $T_2$ respectively. Sometimes, we will refer to the asynchronous wreath product simply as wreath product. 
The associativity of the asynchronous wreath product operation follows immediately.

We now present an important combinatorial lemma regarding 
the `support' of a global transformation in the wreath product. 
It plays a crucial role later.

\begin{restatable}{lemma}{combinatorial}\label{lem:combinatorial}%
Fix atms $T_1 = (\{S_i\},M)$ and $T_2 = (\{Q_i\},N)$. 
Let $(m,f) \in M \times \trans(\gs, N)$ represent a $P$-map in 
	$T_1 \wr T_2$ for some subset $P \subset \pset$. Then
	\begin{itemize}
		\item $m$ is a $P$-map in $T_1$.
		\item For every $s \in \gs, ~f(s)$ is a $P$-map in $T_2$. Further,
		if $s, s' \in \gs$ are such that $s_P = s'_P$, then $f(s) = f(s')$.
	\end{itemize}
	\end{restatable}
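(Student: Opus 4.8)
The plan is to reduce everything to the characterization of $P$-maps given in Lemma~\ref{easylemma}, applied in turn to the three atms $T_1 \wr T_2$, $T_1$, and $T_2$. The crux is to translate the hypothesis correctly through the canonical identification (recalled in the footnote of the definition) of a global state of $T_1 \wr T_2$ with a pair $(s,q) \in \gs \times Q_\pset$, and of a $P$-state with a pair in $S_P \times Q_P$. Under this identification the $-\!P$ restriction of $(s,q)$ is $(s_{-\!P}, q_{-\!P})$ and the $P$ restriction is $(s_P, q_P)$.

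First I would unpack the hypothesis that $(m,f)$ is a $P$-map in $T_1 \wr T_2$. By Lemma~\ref{easylemma} it amounts to two conditions. The first asserts preservation of the $-\!P$ component: since $(m,f)((s,q)) = (m(s), f(s)(q))$, reading off the $-\!P$ restriction gives
\[
(m(s))_{-\!P} = s_{-\!P} \quad\text{and}\quad (f(s)(q))_{-\!P} = q_{-\!P} \qquad \text{for all } s \in \gs,\ q \in Q_\pset.
\]
The second asserts that the $P$ component of the image depends only on the $P$ component of the argument: whenever $s_P = s'_P$ and $q_P = q'_P$,
\[
(m(s))_P = (m(s'))_P \quad\text{and}\quad (f(s)(q))_P = (f(s')(q'))_P.
\]

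Next I would read off the first bullet. To show $m$ is a $P$-map in $T_1$, I apply Lemma~\ref{easylemma} to $T_1$: preservation of $s_{-\!P}$ is exactly the left equation above, and if $s_P = s'_P$ then, choosing any $q=q'$, the second condition yields $(m(s))_P = (m(s'))_P$. For the second bullet I argue symmetrically with the coordinates swapped. Fixing $s \in \gs$, the map $f(s)\colon Q_\pset \to Q_\pset$ preserves the $-\!P$ component by the right equation above, and if $q_P = q'_P$, then applying the second condition with $s' = s$ gives $(f(s)(q))_P = (f(s)(q'))_P$; hence $f(s)$ is a $P$-map in $T_2$. Finally, for the last assertion, suppose $s_P = s'_P$: for every $q$ the $-\!P$ components of $f(s)(q)$ and $f(s')(q)$ both equal $q_{-\!P}$, while the second condition applied with this common $q$ gives $(f(s)(q))_P = (f(s')(q))_P$, so $f(s)(q) = f(s')(q)$ for all $q$, i.e.\ $f(s) = f(s')$.

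The argument is essentially bookkeeping, so I expect no serious obstacle; the only point demanding care is the correct interpretation of the $P$-map hypothesis under the identification of the global (resp.\ $P$-) states of $T_1 \wr T_2$ with pairs, which cleanly separates into a statement about the $\gs$-coordinate (governing $m$) and a statement about the $Q_\pset$-coordinate (governing each $f(s)$).
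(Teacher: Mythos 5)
Your proof is correct and follows essentially the same route as the paper's: both reduce the hypothesis and the conclusions to the two conditions of Lemma~\ref{easylemma}, applied under the canonical identification of global states of $T_1 \wr T_2$ with pairs in $\gs \times Q_\pset$. The only difference is cosmetic: the paper fixes base points $x_0 \in S_{-\!P}$, $y_0 \in Q_{-\!P}$ and builds explicit witness maps $g_1, g_2$ before invoking Lemma~\ref{easylemma}, whereas you work directly with the two conditions, which slightly streamlines the bookkeeping without changing the argument.
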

\begin{proof}
	Fix $x_0 \in S_{\!-\!P}$ and $y_0 \in Q_{\!-\!P}$. We
	define
	$g_1\colon  S_P \to S_P$ and $g_2\colon  S_P \times Q_P \to Q_P$ by 
    $g_1(x) = {[m((x, x_0))]}_P$ and $g_2(x,y) = {[f((x,x_0))(y, y_0)]}_P$.
	We first show that 
	for all $s \in \gs, q \in Q_{\pset}$, 	
	$(m,f)((s,q)) =( (g_1(s_P), s_{-\!P}), (g_2(s_P, q_P), q_{-\!P}) )$.
    Take an arbitrary $(s,q) \in \gs \times Q_{\pset}$. Then consider
     the global state $((s_P,{x_0}), (q_P, {y_0}))$ 
     sharing the same $P$-component as $(s,q)$ and the fixed $-\!P$-component $(x_0,y_0)$. By the wreath product action
(see Definition~\ref{def:wp}),
$(m,f) \left(((s_P,{x_0}), (q_P, {y_0}))\right) = (m((s_P,{x_0})), f((s_P,{x_0}))((q_P, {y_0})))$.
Being 
    a $P$-map, $(m,f)$ does not change the $-\!P$-component of any global state. So we have $m((s_P,{x_0})) = 
    ({[m((s_P,{x_0}))]}_P, {x_0} )$ and 
	$f((s_P,{x_0}))((q_P, {y_0})) = 
	({[f((s_P,{x_0}))((q_P, {y_0}))]}_P, {y_0})$.
	
	Let $(m,f)((s,q)) = (s',q')$. Since $(m,f)$ is a $P$-map and 
	the two global states $(s,q)$ and $((s_P,{x_0}), (q_P, {y_0}))$
share the same
	 $P$-component, by Lemma~\ref{easylemma},
	 $s'_P = {[m((s_P,{x_0}))]}_P$ and 
	 $q'_P = {[f((s_P,{x_0}))((q_P, {y_0}))]}_P$. Further, 
$s'_{-\!P} = s_{-\!P}$ and $q'_{-\!P} = q_{-\!P}$.
Using the definitions of $g_1$ and $g_2$, we immediately
see that 
$(m,f)((s,q)) =\left( (g_1(s_P), s_{-\!P}), (g_2(s_P, q_P), q_{-\!P}) \right)$.
However, by the wreath product action,
$(m,f)((s,q))= (m(s), f(s)(q))$. Comparing this with the previous expression,
we have $m(s) = (g_1(s_P), s_{-\!P})$ and
$f(s)(q) = (g_2(s_P, q_P),\ab q_{-\!P})$.
The result now follows from Lemma~\ref{easylemma}.
\end{proof}

\subsection{Local Wreath Product Principle}\label{sec:wpp}
Let $\aaa = (\{\ls_i\}, \{\lt_a\}, s_{in})$ be an asynchronous 
automaton over $\dalphabet$. Based on $\aaa$ and 
$\dalphabet$, we define the 
alphabet
$\alphabet^{\|_S} =  \{ (a, s_a) \mid a \in \alphabet, s \in \gs \}$
where a letter $a$ in $\alphabet$ is extended with local $a$-state
 information of $\aaa$. This can naturally be viewed as a distributed
  alphabet $\widetilde{\tralphabet}$ where $\forall a \in \alphabet, \forall s \in \gs$,
  $(a, s_a) \in \tralphabet_{i}$ if and only if $a \in \alphabet_i$.
  Then $\aaa$ induces the following transducer over traces.

\begin{definition}[Local Asynchronous Transducer]
	Let $\chi_{\aaa} \colon \ab  \traces \to \tracesdtr$ be defined as follows. 
	If $t = (E,\leq, \labf) \in \traces$, then $\chi_{\aaa}(t) = t'$
	where $t' = (E, \leq, \mu) \in \tracesdtr$ with the labelling 
	$\mu\colon  E \to \tralphabet$ defined by:
	\begin{center}
	$\forall e \in E, \mu(e) = (a,s_a) \text{ where } a = \labf(e) 
	\text{ and } s = \run_t(\dcset{e}\setminus \{e\})$
	\end{center}
	(recall that $\run_t$ is the unique trace run of $\aaa$ over $t$).
	  We call $\chi_{\aaa}$ the local asynchronous 
	 transducer of $\aaa$.
\end{definition}
\begin{example}\label{ex:lat}
	Let $\chi$ be the local asynchronous transducer associated to $A_\varphi$ where $\varphi$ is as in Example~\ref{ex:amorph}. Figure~\ref{fig:lat-example} shows
	the run of $A_\varphi$ on a trace $t \in \traces$ (by showing local process states before and after each event),
	and the resulting trace $\chi(t) \in
	\tracesdtr$.
	\begin{figure}[ht]
		\centering
		\begin{tikzpicture}
\draw (0,0) -- (0, 5*0.2 + 3*0.5);

\draw (0.3,1.6) rectangle (0.8,2.1);
\draw (1.3,0.9) rectangle (1.8,2.1);
\draw (2.3,0.2) rectangle (2.8,1.4);

\draw (0,0.2 + 0.5/2) -- (2.3, 0.2 +0.5/2); 
\draw (2.8,0.2 + 0.5/2) -- (3.0,0.2 + 0.5/2); 

\draw (0,2*0.2+0.5+0.5/2) -- (1.3,2*0.2+0.5+0.5/2); 
\draw (1.8,2*0.2+0.5+0.5/2) -- (2.3,2*0.2+0.5+0.5/2); 
\draw (2.8,2*0.2+0.5+0.5/2) -- (3.0,2*0.2+0.5+0.5/2); 

\draw (0, 3*0.2 + 2*0.5 + 0.5/2) -- (0.3, 3*0.2 + 2*0.5 + 0.5/2);
\draw (0.8, 3*0.2 + 2*0.5 + 0.5/2) -- (1.3, 3*0.2 + 2*0.5 + 0.5/2);
\draw (1.8, 3*0.2 + 2*0.5 + 0.5/2) -- (3.0, 3*0.2 + 2*0.5 + 0.5/2);

\node at (0.3 + 0.5/2, 1.6 + 0.5/2) {$a$};
\node at (1.3 + 0.5/2, 0.9 + 0.5 + 0.2/2) {$b$};
\node at (2.3 +  0.5/2, 0.2 + 0.5 + 0.2/2) {$c$};

\node at (-0.2, 0.2 + 0.5/2) {$p_3$};
\node at (-0.2, 2*0.2 + 0.5 + 0.5/2) {$p_2$};
\node at (-0.2, 3*0.2 + 2*0.5 + 0.5/2) {$p_1$};

\node at (0.15, 3*0.2 + 2*0.5 +0.5/2 + 0.15) {\tiny $1$};
\node at (1.15, 3*0.2 + 2*0.5 +0.5/2 + 0.15) {\tiny $1$};
\node at (1.95, 3*0.2 + 2*0.5 +0.5/2 + 0.15) {\tiny $2$};

\node at (1.1, 2*0.2 + 0.5 +0.5/2 + 0.15) {\tiny $\bot_2$};
\node at (2.1, 2*0.2 + 0.5 +0.5/2 + 0.15) {\tiny $\bot_2$};
\node at (2.99, 2*0.2 + 0.5 +0.5/2 + 0.15) {\tiny $\bot_2$};

\node at (2.1, 0.2 + 0.5/2 + 0.15) {\tiny $\bot_3$};
\node at (2.99, 0.2 + 0.5/2 + 0.15) {\tiny $\bot_3$};

\node at (1.5, -0.4) {\small Run of trace $t$ in $A_\varphi$};

\begin{scope}[shift={(5,0)}]
\draw (0,0) -- (0, 5*0.2 + 3*0.5);

\draw (0.3,1.6) rectangle (0.8,2.1);
\draw (1.3,0.9) rectangle (1.8,2.1);
\draw (2.3,0.2) rectangle (2.8,1.4);

\draw (0,0.2 + 0.5/2) -- (2.3, 0.2 +0.5/2); 
\draw (2.8,0.2 + 0.5/2) -- (3.0,0.2 + 0.5/2); 

\draw (0,2*0.2+0.5+0.5/2) -- (1.3,2*0.2+0.5+0.5/2); 
\draw (1.8,2*0.2+0.5+0.5/2) -- (2.3,2*0.2+0.5+0.5/2); 
\draw (2.8,2*0.2+0.5+0.5/2) -- (3.0,2*0.2+0.5+0.5/2); 

\draw (0, 3*0.2 + 2*0.5 + 0.5/2) -- (0.3, 3*0.2 + 2*0.5 + 0.5/2);
\draw (0.8, 3*0.2 + 2*0.5 + 0.5/2) -- (1.3, 3*0.2 + 2*0.5 + 0.5/2);
\draw (1.8, 3*0.2 + 2*0.5 + 0.5/2) -- (3.0, 3*0.2 + 2*0.5 + 0.5/2);

\node at (0.3 + 0.5/2, 1.6 + 0.5/2 + 0.1) {\small $a$}; 
\node at (0.3 + 0.5/2, 1.6 + 0.5/2 - 0.1) {\tiny $1$}; 
\node at (1.3 + 0.5/2, 1.6 + 0.5/2) {\small $b$};
\node at (1.3 + 0.5/2, 0.9 + 0.5 + 0.2/2) {\tiny $1$};
\node at (1.3 + 0.5/2, 0.9 + 0.5/2) {\tiny $\bot_2$};
\node at (2.3 +  0.5/2, 0.2 + 0.5 + 0.2 +0.5/2) {\small $c$};
\node at (2.3 +  0.5/2, 0.2 + 0.5 + 0.2/2) {\tiny $\bot_2$};
\node at (2.3 +  0.5/2, 0.2 + 0.5/2) {\tiny $\bot_3$};

\node at (-0.2, 0.2 + 0.5/2) {$p_3$};
\node at (-0.2, 2*0.2 + 0.5 + 0.5/2) {$p_2$};
\node at (-0.2, 3*0.2 + 2*0.5 + 0.5/2) {$p_1$};

\node at (1.5, -0.4) {\small Trace $\chi(t)$};
\end{scope}
\end{tikzpicture}
		\caption{Local asynchronous transducer output on a trace; $S_{p_2} = \{\bot_2\}, S_{p_3}= \{\bot_3\}$.}%
		\label{fig:lat-example}
	\end{figure}
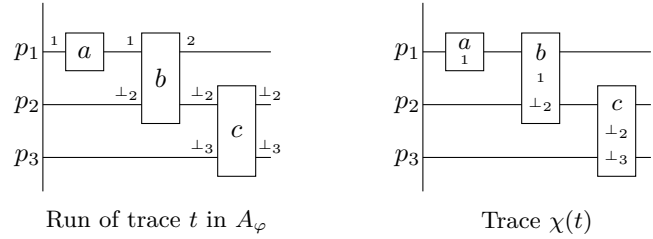
	\end{example}

Note that, in general, $\chi_\aaa$ is not a morphism of
monoids.
The following lemma is a straightforward consequence of the definition of
$\chi_A$ and the duality between trace runs of $\aaa$ and
evaluations of the asynchronous asynchronous morphism $\phi_A$.
 
 \begin{lemma}\label{lem:trfactor}
	 Let $t \in \traces$ with factorization $t = t'a$ (where $a \in \dalphabet$), and 
	$s = \phi_A(t')(s_{\text{in}})$. Then the trace $\chi_A(t) \in \tracesdtr$
	factors as $\chi_A(t) = 
 	\chi_A(t')(a,s_a)$.
 \end{lemma}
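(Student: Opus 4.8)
The plan is to compare the two traces $\chi_A(t)$ and $\chi_A(t')(a,s_a)$ event by event, exploiting that in the concatenation $t=t'a$ the new $a$-labelled event, call it $e_0$, is maximal in $t$. Write $t=(E,\leq,\labf)$ and $t'=(E',\leq',\labf')$, so that $E=E'\cup\{e_0\}$; since $e_0$ is maximal, no event of $E'$ lies above it, hence $E'$ is a configuration of $t$ and $\leq$ restricted to $E'$ coincides with $\leq'$. The first step is to observe that for every configuration $c\subseteq E'$ of $t$ the determinism of $\aaa$ (uniqueness of the trace run) forces $\run_t(c)=\run_{t'}(c)$: both maps satisfy $\run(\emptyset)=s_{\text{in}}$ and the same transition recurrence along the restriction of $\at{t}$ to configurations contained in $E'$, which coincides with $\at{t'}$. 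In particular, taking $c=E'$ and invoking the duality between runs and $\phi_A$, we obtain $\run_t(E')=\run_{t'}(E')=\phi_A(t')(s_{\text{in}})=s$.

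Next I would match the labellings. For an event $e\in E'$, maximality of $e_0$ gives $e_0\notin\dcset e$, so the downset $\dcset e$ computed in $t$ equals the one computed in $t'$; by the previous step $\run_t(\dcset e\setminus\{e\})$ and $\run_{t'}(\dcset e\setminus\{e\})$ therefore agree, and the two transducers assign $e$ the same extended letter. It then remains to compute the label of $e_0$. By definition this label is $(a,s'_a)$ with $s'=\run_t(\dcset{e_0}\setminus\{e_0\})$, so I must show $s'_a=s_a$, i.e. that the $\loc(a)$-components of $s'$ and of $s=\run_t(E')$ coincide.

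This last point carries the real content, and it rests on the dependence structure. For any $i\in\loc(a)$ and any $i$-event $e\in E'$ we have $i\in\loc(e)\cap\loc(a)$, so $(\labf(e),a)\in D$ and hence $e<e_0$ in $t$. Thus every event touching a process of $\loc(a)$ already lies in $\dcset{e_0}\setminus\{e_0\}$, so for each such $i$ the last $i$-event is the same in the configuration $\dcset{e_0}\setminus\{e_0\}$ as in $E'$. Since the local $i$-state is updated only at $i$-events, $s'(i)=s(i)$ for all $i\in\loc(a)$, giving $s'_a=s_a$ and hence the label $(a,s_a)$ for $e_0$.

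Finally I would verify that the order relations of $\chi_A(t)$ and of $\chi_A(t')(a,s_a)$ coincide. On $E'$ both reduce to $\leq'$, and the placement of $e_0$ is, in each case, governed by dependence in the extended alphabet; since $\loc((a,s_a))=\loc(a)$ by the construction of $\dtralphabet$, the extended dependence relation restricts to the original one, so $e<e_0$ in $\chi_A(t')(a,s_a)$ exactly when $(\labf(e),a)\in D$, matching $t$. Combined with the matching event sets and labellings, this yields $\chi_A(t)=\chi_A(t')(a,s_a)$. The main obstacle is the $a$-state computation for $e_0$ in the third paragraph; the rest is routine bookkeeping about trace concatenation and determinism of the run.
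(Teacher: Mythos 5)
Your proof is correct and follows exactly the route the paper indicates: the paper states this lemma without proof, calling it a straightforward consequence of the definition of $\chi_A$ and the run/morphism duality, and your argument is precisely that consequence spelled out, with the one non-routine point --- that the $\loc(a)$-components of $\run_t(\dcset{e_0}\setminus\{e_0\})$ and of $\run_t(E')=s$ coincide, because every $\loc(a)$-event of $E'$ is dependent on $a$ and hence below the maximal event $e_0$, while local $i$-states change only at $i$-events --- correctly isolated and justified. One cosmetic remark: in your final paragraph, $e<e_0$ holds in the concatenation exactly when $e$ lies below \emph{some} event of $E'$ whose label is dependent on $a$ (the order is a transitive closure), not only when $(\labf(e),a)\in D$ itself; since both $\chi_A(t)$ and $\chi_A(t')(a,s_a)$ are built from the same closure (as $\loc((b,q_b))=\loc(b)$), this slip does not affect your conclusion.
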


 \begin{restatable}{theorem}{wppfirst}\label{wpp1}%
Let $A$ be an asynchronous automaton over $\dalphabet$ and $\chi_A$ be the corresponding local asynchronous transducer.
If $L \subseteq \tracesdtr$ is recognized by an atm $T$, then $\chi_A^{-1}(L)$ is recognized by the atm $T_A \wr T$.
\end{restatable}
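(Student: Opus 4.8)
\emph{The plan} is to build an explicit asynchronous morphism into $T_A \wr T$ whose first coordinate tracks the run of $\aaa$ and whose second coordinate evaluates the recognizing morphism for $L$ on the relabelled trace $\chi_A(t)$. Write $T = (\{Q_i\}, N)$ and let $\psi \colon \tracesdtr \to T$ be an asynchronous morphism recognizing $L$, with initial global state $q_{\text{in}} \in Q_\pset$ and final set $Q_{\text{fin}} \subseteq Q_\pset$, so that $L = \{t' \mid \psi(t')(q_{\text{in}}) \in Q_{\text{fin}}\}$. On a letter $a \in \alphabet$ I would set $\Phi(a) = (\phi_A(a), g_a) \in M_\aaa \times \trans(\gs, N)$, where $g_a \colon \gs \to N$ is defined by $g_a(s) = \psi((a, s_a))$. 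Observe that $g_a(s)$ depends on $s$ only through its $\loc(a)$-component $s_a$, and that $\psi((a,s_a))$ is an $(a,s_a)$-map, i.e.\ a $\loc(a)$-map, since $\psi$ is asynchronous and $\loc((a,s_a)) = \loc(a)$.

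First I would check that each $\Phi(a)$ is an $a$-map of $T_A \wr T$, so that Lemma~\ref{lem:extends-asyn-morph} applies and $\Phi$ extends uniquely to an asynchronous morphism $\Phi \colon \traces \to T_A \wr T$. This amounts to the converse direction of Lemma~\ref{lem:combinatorial}, which is equally straightforward: with $P = \loc(a)$, the first component $\phi_A(a) = \gt_a$ is a $P$-map of $T_A$; each $g_a(s)$ is a $P$-map of $T$; and $g_a(s)$ depends only on $s_P$. A direct computation with the wreath-product action, combined with the characterization of $P$-maps in Lemma~\ref{easylemma}, then shows that $(\phi_A(a), g_a)$ leaves the $-\!P$-component of every global state unchanged and updates its $P$-component as a function of the $P$-component alone; hence it is a $P$-map.

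The heart of the argument is the identity, for every $t \in \traces$, that $\Phi(t) = (\phi_A(t), F_t)$ with $F_t(s_{\text{in}}) = \psi(\chi_A(t))$. The first coordinate is immediate: projection onto $M_\aaa$ is a monoid morphism and $\Phi$ agrees with $\phi_A$ on letters, so by uniqueness the first coordinate of $\Phi(t)$ is $\phi_A(t)$. For the second coordinate I would argue by induction on the number of events of $t$, factoring $t = t'a$ through a maximal event. Using the composition law of the wreath product, $F_t(x) = F_{t'}(x) + g_a(\phi_A(t')(x))$, so evaluating at $s_{\text{in}}$ and setting $s = \phi_A(t')(s_{\text{in}})$ gives $F_t(s_{\text{in}}) = \psi(\chi_A(t')) + g_a(s) = \psi(\chi_A(t')) \cdot \psi((a, s_a))$, the composition being taken in $N$. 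By Lemma~\ref{lem:trfactor}, $\chi_A(t) = \chi_A(t')(a, s_a)$ for this same $s$, so since $\psi$ is a morphism the right-hand side equals $\psi(\chi_A(t))$, closing the induction.

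With the identity in hand I would finish by taking $(s_{\text{in}}, q_{\text{in}}) \in \gs \times Q_\pset$ as initial state and $\gs \times Q_{\text{fin}}$ as final set of $T_A \wr T$. Indeed $\Phi(t)((s_{\text{in}}, q_{\text{in}})) = (\phi_A(t)(s_{\text{in}}), F_t(s_{\text{in}})(q_{\text{in}}))$, and $F_t(s_{\text{in}})(q_{\text{in}}) = \psi(\chi_A(t))(q_{\text{in}})$ lies in $Q_{\text{fin}}$ exactly when $\chi_A(t) \in L$, that is, exactly when $t \in \chi_A^{-1}(L)$. I expect the main obstacle to be the $a$-map verification together with the inductive identity: one must confirm that $g_a$ genuinely depends only on the local $a$-state (so that $\Phi(a)$ stays an $a$-map and the construction remains inside the asynchronous world), and that the factorization $t = t'a$ interacts correctly with $\chi_A$. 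This last point is precisely where Lemma~\ref{lem:trfactor}---and hence the causality- and concurrency-preserving nature of the local transducer---does the real work, since it guarantees that the relabelling of the last event sees exactly the state $s = \phi_A(t')(s_{\text{in}})$ produced along $t'$.
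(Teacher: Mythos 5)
Your proposal is correct and follows essentially the same route as the paper's proof: you define the same morphism $a \mapsto (\phi_A(a), s \mapsto \psi((a,s_a)))$, verify it is an $a$-map so that Lemma~\ref{lem:extends-asyn-morph} yields an asynchronous morphism into $T_A \wr T$, prove the same key identity $\pi_2(t)(s_{\text{in}}) = \psi(\chi_A(t))$ by induction on events via the wreath composition law and Lemma~\ref{lem:trfactor}, and conclude with the same initial state $(s_{\text{in}}, q_{\text{in}})$ and final set $\gs \times Q_{\text{fin}}$. The only difference is cosmetic: you spell out the $a$-map verification (via Lemma~\ref{easylemma}, as a converse to Lemma~\ref{lem:combinatorial}) that the paper dismisses as easy to check.
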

\begin{proof}
	Let $\psi \colon  \tracesdtr \to T = (\{Q_i\}, N)$ be an asynchronous morphism,
	which recognizes $L$ with $q_{\text{in}} \in Q_\pset$ as the initial
	global state, and $Q_{\text{fin}} \subseteq Q_\pset$ as the set of
	final global states. Then $L = \{ t \in \tracesdtr \mid 
	\psi(t)(q_{\text{in}}) \in Q_{\text{fin}} \}$. Note that, for $(a,s_a) \in 
    \tralphabet$, $\psi((a, s_a))$ is an $a$-map (that is, an extension of a map from $Q_a$ to 
	$Q_a$; recall that $\loc((a,s_a)) = \loc(a)$).
	
    For $a \in \alphabet$, we set
	$\eta(a) = (\phi_A(a), f_a)$ where $f_a\colon \gs \to N$ is defined by 
	$f_a(s) = \psi((a,s_a))$. It is easy to check that $\eta(a)$ is
	an $a$-map (that is, an extension of a map from $\ls_a \times Q_a$ to $\ls_a \times Q_a$).
    By Lemma~\ref{lem:extends-asyn-morph}, this uniquely defines an asynchronous morphism
    $\eta \colon \traces \to T_A \wr T$.
	
	Let $t = (E, \leq, \labf) \in \traces$. We write $\eta(t) =
	(\pi_1(t),\pi_2(t) )$. It follows from the definition of 
	wreath product that $\pi_1(t) = \phi_A(t)$. Now we claim 
	that $\pi_2(t)(s_{\text{in}}) = \psi(\chi_A(t))$. We prove this by
	induction on the cardinality of $E$. Suppose $t = t'a$. Then $\eta(t) = \eta(t')\eta(a)$.
	As a result, we have  $(\pi_1(t), \pi_2(t))  = (\pi_1(t'), \pi_2(t'))(\pi_1(a), \pi_2(a))$.
	Therefore, for $s \in \gs$, $\pi_2(t)(s) = \pi_2(t')(s) + 
	\pi_2(a)(\pi_1(t')(s))$. In particular, it holds with $s = s_{\text{in}}$. Recall that $\pi_1(t')
	= \phi_A(t')$. Also, by induction, $\pi_2(t')(s_{\text{in}}) = \psi(\chi_A(t'))$.
	Hence, with $s = \phi_A(t')(s_{\text{in}})$,
	\begin{align*}
	\pi_2(t)(s_{\text{in}}) &= \pi_2(t')(s_{\text{in}}) + \pi_2(a)(\pi_1(t')(s_{\text{in}})) \\
	&= \psi(\chi_A(t')) + \pi_2(a)(s) \\
	&= \psi(\chi_A(t')) + \psi((a,s_a)) \\
	&= \psi(\chi_A(t')(a,s_a)) \\
	&= \psi(\chi_A(t))
	\end{align*}
	The last equality follows from Lemma~\ref{lem:trfactor}. So, 
	$t \in \chi_A^{-1}(L)$ if and only if $\chi_A(t) \in L$ if and only if $\psi(\chi_A(t))(q_{\text{in}}) 
	\in Q_{\text{fin}}$ if and only if $\pi_2(t)(s_{\text{in}})(q_{\text{in}}) \in Q_{\text{fin}}$ if and only if 
	$\eta(t)(s_{\text{in}}, q_{\text{in}}) \in \gs \times Q_{\text{fin}}$. This shows that
	$\eta$ recognizes $\chi_A^{-1}(L)$ with $(s_{\text{in}}, q_{\text{in}}) \in 
	\gs \times Q_\pset$ as the initial global state, and $\gs \times 
	Q_{\text{fin}} \subseteq \gs \times Q_\pset$ as the set of final global
	states.
\end{proof}
Now we focus our attention on what is usually termed as the wreath product principle.
 
\begin{restatable}{theorem}{wppsecond}\label{thm:wpp2}%
 Let $T_1$ and $T_2$ be atms and let $L \subseteq \traces$ be a trace language recognized by an asynchronous morphism
 $\eta\colon \traces \to T_1 \wr T_2$, with initial global state $(s_{\text{in}}, q_{\text{in}})$. For each $a\in \alphabet$,
 let $\eta(a) = (m_a,f_a)$. Then $\phi\colon \traces \to T_1$, defined by $\phi(a) = m_a$, is an asynchronous morphism. 
 Finally, let $A = A_\phi$ be the asynchronous automaton associated to $\phi$ and $s_\text{in}$, and let $\chi_A$
 be the corresponding local asynchronous transducer. Then $L$ is a finite union of languages of the form $U \cap \chi_A^{-1}(V)$, where 
 $U \subseteq TR(\tilde\Sigma)$ is recognized by $T_1$, and $V \subseteq \tracesdtr$ is recognized by $T_2$.
 \end{restatable}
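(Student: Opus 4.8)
The plan is to run the construction behind Theorem~\ref{wpp1} in reverse, using the combinatorial Lemma~\ref{lem:combinatorial} to recover an asynchronous morphism $\psi\colon \tracesdtr \to T_2$ from the second coordinate of $\eta$. I would start by extracting the consequences of $\eta$ being an asynchronous morphism. For each $a \in \alphabet$ the transformation $\eta(a) = (m_a, f_a)$ is an $a$-map of $T_1 \wr T_2$, so Lemma~\ref{lem:combinatorial} applies with $P = \loc(a)$. It gives two things: $m_a$ is an $a$-map of $T_1$, so by Lemma~\ref{lem:extends-asyn-morph} the assignment $\phi(a) = m_a$ extends to an asynchronous morphism $\phi\colon \traces \to T_1$ (this is the first assertion); and, for every global state $s$, the value $f_a(s)$ is an $a$-map of $T_2$ that depends only on $s_a = s_{\loc(a)}$.

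This locality is the crux, since it is exactly what allows $\psi$ to be defined on the extended alphabet. For a letter $(a, s_a) \in \tralphabet$ I would set $\psi((a,s_a)) = f_a(s)$ for any $s \in \gs$ extending $s_a$; this is well defined by the dependence-only-on-$s_a$ property, and since $f_a(s)$ is an $a$-map of $T_2$ and $\loc((a,s_a)) = \loc(a)$, the image $\psi((a,s_a))$ is an $(a,s_a)$-map. Lemma~\ref{lem:extends-asyn-morph} then extends $\psi$ to an asynchronous morphism $\psi\colon \tracesdtr \to T_2$. With this choice of $\psi$ the given $\eta$ coincides with the morphism that the construction of Theorem~\ref{wpp1} produces (there $\eta(a) = (\phi_A(a), f_a)$ with $f_a(s) = \psi((a,s_a))$, and here $A = A_\phi$ has transition morphism $\phi$). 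Hence the key identity from that proof is available: writing $\eta(t) = (\pi_1(t), \pi_2(t))$, one has $\pi_1(t) = \phi(t)$ and $\pi_2(t)(s_{\text{in}}) = \psi(\chi_A(t))$ for every $t$, by the same induction on the number of events using Lemma~\ref{lem:trfactor}. Evaluating the wreath-product action at $(s_{\text{in}}, q_{\text{in}})$ then yields
\begin{center}
$\eta(t)(s_{\text{in}}, q_{\text{in}}) = \big(\phi(t)(s_{\text{in}}),\ \psi(\chi_A(t))(q_{\text{in}})\big)$.
\end{center}

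It then remains to convert the acceptance condition into a finite union. Let $F \subseteq \gs \times Q_\pset$ be a final set with $L = \{t \mid \eta(t)(s_{\text{in}}, q_{\text{in}}) \in F\}$. For $s \in \gs$ I would put $U_s = \{t \in \traces \mid \phi(t)(s_{\text{in}}) = s\}$, which $T_1$ recognizes via $\phi$ with final set $\{s\}$, and for $q \in Q_\pset$ put $V_q = \{t' \in \tracesdtr \mid \psi(t')(q_{\text{in}}) = q\}$, which $T_2$ recognizes via $\psi$ with final set $\{q\}$; note $\chi_A^{-1}(V_q) = \{t \mid \psi(\chi_A(t))(q_{\text{in}}) = q\}$. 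The displayed identity gives $L = \bigcup_{(s,q)\in F}\big(U_s \cap \chi_A^{-1}(V_q)\big)$, a finite union of the required form.

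I expect the only genuine obstacle to be the well-definedness of $\psi$, that is, the fact that $f_a(s)$ factors through the local state $s_a$; this is precisely what Lemma~\ref{lem:combinatorial} delivers, and it is where the concurrent setting departs from the sequential one (there $f_a$ is simply a function of the single global state, with no locality to verify). The recovery of the identity and the final Boolean decomposition are then routine.
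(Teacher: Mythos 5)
Your proposal is correct and follows essentially the same route as the paper's own proof: Lemma~\ref{lem:combinatorial} to extract the $a$-map $m_a$ and the locality of $f_a$ (so $\psi((a,s_a)) = f_a(s)$ is well defined), Lemma~\ref{lem:extends-asyn-morph} to extend both $\phi$ and $\psi$, and the identity $\pi_2(t)(s_{\text{in}}) = \psi(\chi_A(t))$ recovered by the induction from Theorem~\ref{wpp1} via Lemma~\ref{lem:trfactor}. The only (immaterial) difference is bookkeeping: you take the union over all final pairs $(s,q) \in F$ directly, while the paper first reduces to a single final global state.
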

\begin{proof}
	We write $T_1 = 
	(\{ \ls_i \},M)$ and $T_2 = ( \{Q_i\},N )$. 
	Consider $a \in \Sigma$ and the $a$-map $\eta(a) = (m_a, f_a) \in M \times \trans(\gs, N)$.
	This means that $\eta(a)$ is an extension of a map from $S_a \times Q_a$ to $S_a \times Q_a$.
	By Lemma~\ref{lem:combinatorial}, $m_a \in M$ is an $a$-map (of $T_1$) and $f_a\colon  S_\pset \to N$ is
	such that, for $s \in S_\pset$, $f_a(s) \in N$ is an $a$-map (of $T_2$) and it depends only on $s_a$.
	In particular, $f_a\colon  S_\pset \to N$ may be viewed as $f_a\colon  S_a \to N$. Below we will use
	$f_a$ in this sense.
	
	Now we define an asynchronous morphism 
	$\psi\colon  \tracesdtr \to T_2$ as follows:
	$\psi((a,s_a)) = f_a(s_a)$.
	Note that, by Lemma~\ref{lem:extends-asyn-morph}, $\psi$ is indeed an asynchronous morphism as $f_a(s_a)$ is an $a$-map. Further,
	as $m_a$ is an $a$-map, $\phi\colon \traces \to T_1$, defined by $\phi(a) = m_a$, also extends to an asynchronous morphism.
	
	Our aim is to express $L$ in terms of languages recognized by $T_1$ and $T_2$.
	It suffices to show the result when $L$ is recognized with a
	single final global state, say 
	$(s_{\text{fin}}, q_{\text{fin}})$. Then $L = \{t \in \traces \mid 
	\eta(t)((s_{\text{in}}, q_{\text{in}})) = (s_{\text{fin}}, q_{\text{fin}}) \}$.
	
	For $t \in \traces$, we write $\eta(t) = (\pi_1(t), \pi_2(t))$. 
	It follows from the definition of $\phi$ that $\phi(t)=\pi_1(t)$.
	Hence, we can 
	alternatively write $L$ as \[L = \{t \in \traces \mid 
	\phi(t)(s_{\text{in}}) = s_{\text{fin}} \text{ and } \pi_2(t)(s_{\text{in}})(q_{\text{in}}) = q_{\text{fin}}
	\}\]
	Let $U = \{t \in \traces \mid \phi(t)(s_{\text{in}}) = s_{\text{fin}}\}$. Then, with
	$W = \{t \in \traces \mid \pi_2(t)(s_{\text{in}})(q_{\text{in}}) = q_{\text{fin}}\}$, 
	$L = U \cap W$.
	By using essentially the same ideas as in the proof of 
	Theorem~\ref{wpp1}, we can show that $\pi_2(t)(s_{\text{in}}) = 
	\psi(\chi_A(t))$. Therefore,
	$W = \{t \in \traces \mid \psi(\chi_A(t))(q_{\text{in}}) = q_{\text{fin}} \}$.
	
	It follows that, with $V = \{ t' \in \tracesdtr \mid \psi(t')(q_{\text{in}}) =
	q_{\text{fin}} \}$, $W = \chi_A^{-1}(V)$. Clearly, $U$ is recognized by 
	the atm $T_1$ (via $\phi$), $V$ is recognized by the atm $T_2$
	(via $\psi$) and $L = U \cap \chi_A^{-1}(V)$. This completes the proof.
\end{proof}

\section{Towards a Decomposition Result}\label{sec:dec}
In this section, we use the algebraic framework developed so far 
to propose an analogue of the fundamental Krohn-Rhodes decomposition theorem 
over traces. We first recall the Krohn-Rhodes theorem in the purely algebraic setting
of transformation monoids. We briefly explain how it is used to
analyze/decompose morphisms from the free monoid and point out some difficulties
that arise when we consider morphisms from the trace monoid.

Let $M$ and $N$ be monoids. We say that $M$ divides $N$ 
(in notation, $M \divides N$) 
if $M$ is a homomorphic image of some submonoid of $N$.
This notion can be extended to transformation monoids.
Let $\tmi$ and $\tmii$ be two tm's. We say that $\tmi$
\emph{divides} $\tmii$, denoted $\tmi \divides \tmii$, if
 there exists a pair of mappings
$(f, \phi)$ where $f \colon Y \rightarrow X$ is a surjective 
function and $\phi\colon  N' \rightarrow M$ is a surjective morphism 
from a submonoid $N'$ of $N$, such that
 $\phi(n)(f(y)) = f(n(y))$ for all $n \in N'$ and all $y \in Y$. 

Recall that $U_2= (\{1,2\}, \{\mbox{id}, r_1, r_2\})$ denotes the \emph{reset} transformation monoid
on two elements. Along with it, the following class of transformation monoids plays
an important role in the Krohn Rhodes theorem.

\begin{example}\label{ex:g}
Let $G$ be a group. Then $(G,G)$ is a tm where the monoid element
$g$ \emph{represents} the transformation $m_g\colon G \rightarrow G$ of the set 
$G$, which is
the right multiplication by $g$. In other words, for $h \in G, m_g(h) = hg$.

\end{example}

We are now in a position to state the Krohn-Rhodes theorem \cite{KR}.
See \cite{str_cirBook} for a classical proof of the theorem, and \cite{DBLP:journals/fuin/DiekertKS12}
for a modern proof.

\begin{theorem}[Krohn-Rhodes Theorem]
Every finite transformation monoid $T=(X,M)$ divides a wreath
product of the form $ T_1 \wr \ldots \wr T_n$
where each factor $T_i$ is either $U_2$ or
is of the form $(G,G)$ for some non-trivial simple group
$G$ dividing $M$.
\end{theorem}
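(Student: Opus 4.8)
The plan is to establish the theorem via the \emph{holonomy decomposition}, which I would use as the structural engine and prove by induction on $|X|$. Before starting I would record the two elementary facts about division that make the argument modular: that $\divides$ is reflexive and transitive, and that $\wr$ is monotone for $\divides$, i.e. $T_1 \divides T_1'$ and $T_2 \divides T_2'$ imply $T_1 \wr T_2 \divides T_1' \wr T_2'$. With these in hand it suffices to exhibit, for an arbitrary finite $(X,M)$, a single wreath product of permutation groups and resets that $(X,M)$ divides, after which the group and reset factors may be refined independently. It is convenient to reduce first to the case in which $M$ contains every constant map on $X$, since adjoining constants only enlarges $M$ and the extra generators produced are resets.

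The core step is the holonomy decomposition itself. I would consider the poset $\mathcal{I}$ of \emph{images} of $(X,M)$, namely the subsets $m(X)$ for $m \in M$ together with $X$, ordered by inclusion, and stratify $\mathcal{I}$ by cardinality into levels running from $X$ at the top down to singletons at the bottom. For each image $I$ I would form its \emph{holonomy group} $H_I$: the group of permutations of the set of maximal proper subimages contained in $I$ that are induced by those $m \in M$ restricting to a bijection of $I$. Packaging the holonomy data at each level $\ell$ into a single permutation-with-resets transformation monoid $(\bar H_\ell, H_\ell)$, the holonomy theorem asserts that $(X,M)$ divides the iterated wreath product $(\bar H_h, H_h) \wr \cdots \wr (\bar H_1, H_1)$ taken from the top level downwards. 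Establishing this division --- tracking how each $m$ simultaneously permutes subimages within each level and collapses points to lower levels, and encoding that combined data as a single wreath-product element obeying the composition law of Definition~\ref{def:wp} --- is the technical heart of the proof, and is what I expect to be the main obstacle: one must verify that the holonomy groups are well defined independently of the choice of representative images, and that the componentwise bookkeeping genuinely assembles into a wreath-product division rather than merely a coordinatewise simulation.

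It then remains to decompose each level factor into the allowed primes, and here the arguments are comparatively routine once the holonomy skeleton is in place. Each $(\bar H_\ell, H_\ell)$ is a \emph{permutation--reset} transformation monoid: its units form the holonomy group and its remaining elements are constant maps. For the reset part I would invoke the elementary fact that the monoid consisting of the identity together with all constant maps on an $n$-element set divides a wreath product of finitely many copies of $U_2$, each $U_2$ supplying one binary memory cell and the claim following by induction on $n$. For the group part I would take the permutation group $(\bar H_\ell, H_\ell)$, fix a composition series $H_\ell = G_0 \supseteq G_1 \supseteq \cdots \supseteq G_r = \{1\}$ with each $G_{i+1}$ normal in $G_i$, and iterate the Krasner--Kaloujnine universal embedding theorem, which embeds any extension into the wreath product of the normal subgroup and the quotient, to conclude that $(\bar H_\ell, H_\ell)$ divides a wreath product of the right-regular representations $(G_i/G_{i+1}, G_i/G_{i+1})$ as in Example~\ref{ex:g}. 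Each nontrivial factor $G_i/G_{i+1}$ is a simple group, and since $H_\ell$ is a subquotient of $M$ and divisibility passes to composition factors, each such simple group divides $M$; the trivial factors are discarded. Assembling the reset and group decompositions of every level through the monotonicity of $\wr$ produces a wreath product all of whose factors are $U_2$ or of the form $(G,G)$ with $G$ a nontrivial simple group dividing $M$, as required.
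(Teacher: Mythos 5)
The paper does not actually prove this statement: the Krohn--Rhodes theorem is quoted as a known black box, with pointers to the original paper of Krohn and Rhodes, to Straubing's book for a classical proof, and to Diekert--Kufleitner--Steinberg for a modern proof via local divisors. Your argument is therefore necessarily a different route, and it is a sound one: what you sketch is the Zeiger--Eilenberg \emph{holonomy decomposition}. The essential difference from the cited proofs is the induction parameter: the classical and local-divisor proofs induct on the monoid $M$ and extract wreath factors from its algebraic substructure, whereas holonomy stratifies the poset of images of the state set $X$ and yields permutation--reset factors built from holonomy groups attached to images, together with a bound on the number of wreath factors by the height of that poset. That is what your approach buys --- explicit, state-based factors --- at the cost of the bookkeeping you rightly flag as the technical heart (tiles, images up to mutual divisibility, and the check that an $m \in M$ acting bijectively on an image $I$ permutes its tiles, which uses that some power of such an $m$ is the identity on $I$). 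Two routine steps are glossed and would need to be supplied in a full write-up: first, after separating a level factor into its group part and its reset part, you need the standard lemma that a permutation--reset tm $(A,\, G \cup \{\text{constants}\})$ divides $(G,G) \wr (A, \{\mathrm{id}_A\} \cup \{\text{constants}\})$ --- with the paper's left-to-right composition convention the group factor comes first --- since Krasner--Kaloujnine applies to the regular representation $(G,G)$ and not directly to the holonomy action on tiles; second, the initial adjunction of all constants to $M$ must be seen not to create new simple group divisors, which holds because every nontrivial subgroup of $M \cup \{\text{constants}\}$ already lies in $M$, so the composition factors you produce do divide the original $M$, as the statement demands. With these standard lemmas in place, your plan assembles into a complete and correct proof.
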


Henceforth, we will be dealing with only finite tms and sometimes
we will omit the qualifier `finite'. 
Now we turn our attention to the use of this decomposition theorem
for analysing word languages recognized by morphisms from the free monoid.

\begin{definition}\label{wordsimulation}
Let $\phi\colon  \Sigma^* \to T=(X,M)$ be a morphism.
Further, let $\psi\colon  \Sigma^* \to T'=(Y,N)$ be another morphism.
We say that $\psi$ \emph{simulates} $\phi$ if there exists
a surjective function $f\colon  Y \to X$ such that, for all $a \in \Sigma$ and all $y \in Y$,
$f(\psi(a)(y)) = \phi(a)(f(y))$.
\end{definition}
\begin{figure}[ht]
		\centering
		\begin{tikzpicture}
			\node  at (0,0) (sw) {$X$};
			\node  at (1.4,0) (se) {$X$};
			\node  at (0,1.4) (nw) {$Y$};
			\node  at (1.4,1.4) (ne) {$Y$};
			\draw[->] (sw.east) to node[below] {\small $\varphi(a)$} (se.west); 
			\draw[->] (nw.east) to node[above] {\small $\psi(a)$}(ne.west); 
			\draw[->] (nw.south) to node[left] {\small $f$} (sw.north); 
			\draw[->] (ne.south) to node[right] {\small $f$} (se.north); 
		\end{tikzpicture}
        \caption{Visual illustration of condition $f(\psi(a)(y)) = \phi(a)(f(y))$ 
        in Definition~\ref{wordsimulation}}
		\label{fig:commute-word-simulation}
	\end{figure}
Observe that if $\psi$ simulates $\phi$ then a language recognized by $\phi$
is also recognized by $\psi$.

\begin{proposition}\label{wordkr} 
	Let $\phi\colon  \Sigma^* \to T=(X,M)$ be a morphism.
	Then there exists a morphism $\psi\colon  \Sigma^* \to T'$ 
        which simulates $\phi$ such that 
the tm $T'$ is of 
the form $T_1 \wr \ldots \wr T_n$
where each factor $T_i$ is either $U_2$ or
$(G,G)$ for some non-trivial simple group
$G$ dividing $M$.
\end{proposition}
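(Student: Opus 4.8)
The plan is to deduce this directly from the Krohn-Rhodes theorem by lifting $\phi$ through the division it provides. First I would apply the Krohn-Rhodes theorem to the finite tm $T=(X,M)$, obtaining a wreath product $T' = T_1 \wr \ldots \wr T_n = (Y,N)$ of exactly the required shape---each factor $T_i$ being $U_2$ or $(G,G)$ for a non-trivial simple group $G$ dividing $M$---together with a division $T \divides T'$. By the definition of division recalled just above the statement, this is witnessed by a pair $(f,\theta)$ where $f\colon Y \to X$ is surjective and $\theta\colon N' \to M$ is a surjective morphism from a submonoid $N'$ of $N$, satisfying the intertwining identity $\theta(n)(f(y)) = f(n(y))$ for all $n \in N'$ and all $y \in Y$.

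Next I would lift the images of the generators. For each letter $a \in \Sigma$, the element $\phi(a) \in M$ lies in the image of the surjective morphism $\theta$, so I may choose some $n_a \in N'$ with $\theta(n_a) = \phi(a)$. Since $\Sigma^*$ is free, the assignment $a \mapsto n_a$ extends uniquely to a monoid morphism $\psi\colon \Sigma^* \to N$, whose image in fact lies in $N'$. I take this $\psi$, together with the same surjection $f$, as the candidate simulation.

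It then remains to verify the simulation condition of Definition~\ref{wordsimulation}, which is demanded only on the generators. For $a \in \Sigma$ and $y \in Y$ I would compute
\[
f(\psi(a)(y)) = f(n_a(y)) = \theta(n_a)(f(y)) = \phi(a)(f(y)),
\]
where the middle equality is exactly the intertwining identity applied to $n = n_a \in N'$, and the outer equalities are the definitions of $\psi(a)$ and of $n_a$. Hence $\psi$ simulates $\phi$, and $T'$ already has the desired form as a wreath product of copies of $U_2$ and of groups $(G,G)$ with $G$ simple dividing $M$.

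Since the real content is carried by the Krohn-Rhodes theorem, which I am free to invoke, I do not expect a genuine obstacle here. The only points requiring care are bookkeeping ones: that $\theta$ surjects onto all of $M$, so that every $\phi(a)$ admits a preimage; that the chosen preimages $n_a$ lie in the submonoid $N'$ on which the intertwining identity is available; and that the left-to-right composition convention $(f_1 f_2)(\cdot) = f_2(f_1(\cdot))$ is respected so that the identity is applied in the correct order. Finally I would remark that, although the simulation identity is only imposed on generators, it propagates to all words by the morphism property of $\phi$ and $\psi$, which is precisely what underlies the earlier observation that any language recognized by $\phi$ is then recognized by $\psi$.
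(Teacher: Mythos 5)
Your proposal is correct and follows essentially the same route as the paper: invoke Krohn--Rhodes to obtain $T'$ and the division pair $(f,\theta)$, lift each $\phi(a)$ to an arbitrary preimage $n_a \in \theta^{-1}(\phi(a))$, extend by freeness of $\Sigma^*$, and verify the simulation identity on generators via the intertwining condition. You merely spell out the verification that the paper leaves as "easily checked," so there is nothing to flag.
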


\begin{proof}
Given $T$, we get $T'=T_1 \wr \ldots \wr T_n =(Y,N)$ by the Krohn-Rhodes theorem. 
Since $T \divides T'$, there exists a pair of mappings
$(f, \theta)$ where $f \colon Y \rightarrow X$ is a surjective 
function and $\theta\colon  N' \rightarrow M$ is a surjective morphism 
from a submonoid $N'$ of $N$, such that
$\theta(n)(f(y)) = f(n(y))$ for all $n \in N'$ and all $y \in Y$. 
Construct $\psi\colon  \Sigma \to N$ by mapping 
$\psi(a)$, for each $a$ in $\alphabet$, to an arbitrary element in $\theta^{-1}(\phi(a))$. 
Thanks to the fact that $\Sigma^*$ is a free monoid, $\psi$ uniquely extends to
a morphism $\psi\colon  \Sigma^* \to T'$. It is easily checked that
$\psi$ simulates $\phi$.
\end{proof}
Combined with the wreath product principle, the above proposition provides a 
powerful inductive tool to prove many non-trivial results in the theory
of finite words. See \cite{starfreenote-meyer,cohen1993expressive}. 

Motivated by these applications, we look
for an analogue of the above proposition for the setting of traces.
We now point to some problems that arise if one tries to naively lift the Krohn-Rhodes theorem
to the setting of traces. 
The first problem is that, unlike in the word scenario, division
does not imply simulation of morphisms from the \emph{trace monoid}.
By simulation of morphisms from the trace monoid, we simply mean an obvious
adaptation of the Definition~\ref{wordsimulation} to the morphisms from the trace monoid.
\begin{example}[Example for Problem~$1$ of lifting Krohn Rhodes theorem to trace monoid]
	\begin{figure}[ht]
	\centering
	\resizebox{.6\textwidth}{!}{\begin{tikzpicture}
[->,>=stealth',shorten >=1pt,auto,node distance=2cm,
                    semithick, initial text=]
\tikzstyle{every state}=[draw=black]
  \node[initial,state] (1)                    {$q_1$};
  \node[state]         (a) [above right of=1] {$q_a$};
  \node[state]         (b) [below right of=1] {$q_b$};
  \node[state]         (ab) [below right of=a] {$q_{ab}$};
\begin{scope}[node distance=0.8cm]
\node [below of=b] {\text{Automata }A};
\end{scope}

  \path (1) edge              node {a} (a)
            edge              node [swap] {b} (b)
        (a) edge              node {b} (ab)
        (b) edge              node [swap]  {a} (ab);

  \node[initial,state] (1') [right of=ab]       {$q'_1$};
  \node[state]         (a') [above right of=1'] {$q'_a$};
  \node[state]         (b') [below right of=1'] {$q'_b$};
  \node[state]         (ab') [right of=a'] {$q'_{ab}$};
  \node[state]         (ba') [right of=b']       {$q'_{ba}$};
\begin{scope}[node distance=0.8cm]
\node [below of=b'] {\text{Automata }B};
\end{scope}
  \path (1') edge              node {a} (a')
            edge              node [swap] {b} (b')
        (a') edge              node {b} (ab')
        (b') edge              node [swap]  {a} (ba');
\end{tikzpicture}}
	\caption{Automata $A$ and $B$ on the alphabet $\{a,b\}$}%
\label{fig:pb1}%
	\end{figure}
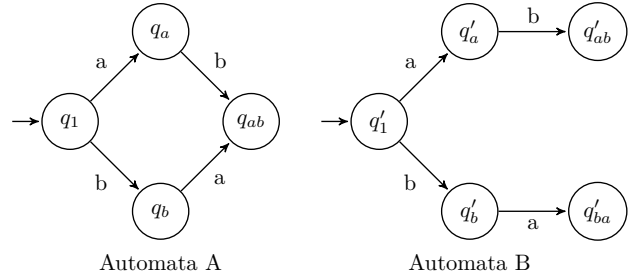
Consider the transition tm $(X,M)$ (resp.\ $(Y,N)$) of the automata $A$ (resp.\ $B$)
in Figure~\ref{fig:pb1};
assume both are complete, with any transition not shown in the figure going to trap state $q_t$ in $A$ and $q'_t$ in $B$.
So $X = \{q_1, q_a, q_b, q_{ab}, q_t\}$ and $M = \{1_M, m_a, m_b, m, 0_M\}$ where $1_M$ is the identity transformation of the 
empty word, and
$m_a, m_b$ and $m$ respectively represent the state transformations by $a$, $b$ and $ab$ (or equivalently, $ba$). 
$0_M$ represents the transformation by any other word. Hence the multiplication table of $M$ is the left one in Table~\ref{tab:MN-table}.
Similarly, $Y = \{q'_1, q'_a, q'_b, q'_{ab}, q'_{ba}, q'_t\}$ and $N = \{1_N, n_a, n_b, n_{ab}, n_{ba}, 0_N\}$ with its
multiplication given by the right one in Table~\ref{tab:MN-table}.
\begin{table}[ht]
\caption{Multiplication table of $M$ and $N$}%
\label{tab:MN-table}
\begin{tabular}{l|lllll}
            $M$ & $1_M$  & $m_a$ & $m_b$ & $m$ & $0_M$ \\ \hline
	$1_M$ & $1_M$ & $m_a$ & $m_b$ & $m$ & $0_M$ \\
	$m_a$ & $m_a$ & $0_M$ & $m$ & $0_M$ & $0_M$ \\
 $m_b$ & $m_b$ & $m$  & $0_M$ & $0_M$ & $0_M$\\
 $m$ & $m$ & $0_M$  & $0_M$ & $0_M$ & $0_M$ \\
 $0_M$ & $0_M$ & $0_M$ & $0_M$ & $0_M$ & $0_M$
\end{tabular}
\hspace{35pt}
\begin{tabular}{l|llllll}
	    $N$  & $1_N$  & $n_a$ & $n_b$ & $n_{ab}$ & $n_{ba}$ & $0_N$ \\ \hline
	$1_N$ &$1_N$  & $n_a$ & $n_b$ & $n_{ab}$ & $n_{ba}$ & $0_N$ \\
	$n_a$ & $n_a$ & $0_N$ & $n_{ab}$ & $0_N$ & $0_N$ & $0_N$\\
	$n_b$ & $n_b$ & $n_{ba}$  & $0_N$ & $0_N$ & $0_N$ & $0_N$ \\
	$n_{ab}$ & $n_{ab}$ & $0_N$  & $0_N$ & $0_N$ & $0_N$ & $0_N$ \\
	$n_{ba}$ & $n_{ba}$ & $0_N$  & $0_N$ & $0_N$ & $0_N$ & $0_N$ \\
	$0_N$ & $0_N$ & $0_N$ & $0_N$ & $0_N$ & $0_N$ & $0_N$
\end{tabular}
\end{table}
Observe that $(X,M) \divides (Y,N)$ by the pair $(f,\psi)$ where
	$\psi(n_a) = m_a$ and $\psi(n_b) = m_b$ extends to a surjective monoid
	morphism from $N$ to $M$. The surjective function $f$ maps $q'_{ab}$ and
	$q'_{ba}$ both to $q_{ab}$. Remaining details of the function are obvious.
	In particular, both the tm's can recognize the language $L = \{ab, ba\}$.

	Now consider the distributed alphabet $\tilde{\Sigma}=(\Sigma_1\!=\!\{a\}, \Sigma_2\!=\!\{b\})$. Clearly $a \ind b$,
	and $L$ is a trace language.
	Consider the function $\phi\colon  \alphabet \to M$ where $\phi(a) = m_a$ and $\phi(b) = m_b$. 
	As $m_a$ and $m_b$ commute, $\phi$ indeed extends to a morphism from the trace monoid, and can recognize $L$,
	for example.
	However, there is no morphism
	from $\traces$ to $(Y,N)$ that simulates $\phi$.
	Note that the `lifts' $n_a$ and $n_b$, of $m_a$ and $m_b$ resp.,  
	don't commute, and so the function that extends to 
	a simulating morphism $\psi$ in the word case as in the proof of Proposition~\ref{wordkr}, does not work here for traces.
	 \end{example}
The second problem is that even if there is a morphism from $\traces$ to a wreath 
product of tm's, in general it does not induce morphisms from trace monoids to the individual tm's
beyond the first one. This is primarily because the output of the \emph{sequential} transducer 
associated with the first tm is \emph{not} a trace.
\begin{example} 
	Assume the DFA in Figure~\ref{fig:pb2} represents the 
induced morphism to the first tm in a wreath product chain. 
The figure below shows the outputs of the sequential transducer associated
with this DFA on three different linearizations of a single input trace.
These outputs have different sets of letters and can not constitute a single trace.

\end{example}
	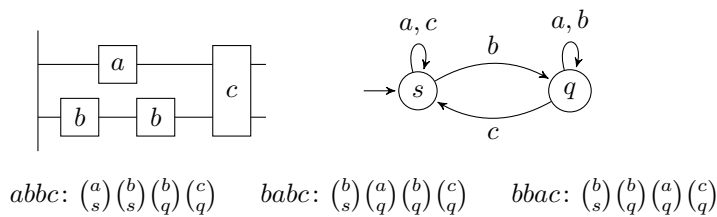
\begin{figure}[ht]%
		\centering
		\begin{tikzpicture}
\draw (0,0) -- (0, 3*0.2 + 2*0.5);

\draw (0.3,0.2) rectangle (0.8,0.7);
\draw (1.3,0.2) rectangle (1.8,0.7);
\draw (0.8,0.9) rectangle (1.3,1.4);
\draw (2.3,0.2) rectangle (2.8,1.4);

\draw (0,0.2 + 0.5/2) -- (0.3, 0.2 +0.5/2); 
\draw (0.8,0.2 + 0.5/2) -- (1.3,0.2 + 0.5/2); 
\draw (1.8,0.2 + 0.5/2) -- (2.3,0.2 + 0.5/2); 
\draw (2.8,0.2 + 0.5/2) -- (3.0,0.2 + 0.5/2);

\draw (0,2*0.2+0.5+0.5/2) -- (0.8,2*0.2+0.5+0.5/2); 
\draw (1.3,2*0.2+0.5+0.5/2) -- (2.3,2*0.2+0.5+0.5/2); 
\draw (2.8,2*0.2+0.5+0.5/2) -- (3.0,2*0.2+0.5+0.5/2); 

\node at (0.3 + 0.5/2, 0.2 + 0.5/2) {$b$};
\node at (0.3 + 1.0 + 0.5/2, 0.2 + 0.5/2) {$b$};
\node at (0.3 + 2.0 +  0.5/2, 0.2 + 0.5 + 0.2/2) {$c$};
\node at (0.3 + 0.5 + 0.5/2, 2*0.2 + 0.5 + 0.5/2) {$a$};

\begin{scope}
[->,>=stealth',shorten >=1pt,auto,node distance=2cm,
                    initial text=]
\tikzstyle{every state}=[draw=black, inner sep = 1mm, minimum size=0]

  \node[initial,state] at (5.0,0.8)  (1)                    {$s$};
  \node[state]                       (2) [right of=1]       {$q$};

\path (1) edge [bend left] node {$b$} (2)
          edge [loop above] node {$a,c$} (1)
      (2) edge [bend left] node {$c$} (1)
          edge [loop above] node {$a,b$} (2);

\end{scope}

\node at (0.5 + 0.5, -0.6) (l)  {$abbc \colon {\binom{a}{s}}
{\binom{b}{s}}{\binom{b}{q}}{\binom{c}{q}}$};
\node (m) [right of=l, xshift=2.3cm]    {$babc \colon {\binom{b}{s}}
{\binom{a}{q}}{\binom{b}{q}}{\binom{c}{q}}$};
\node (r) [right of=m, xshift=2.3cm]    {$bbac \colon {\binom{b}{s}}
{\binom{b}{q}}{\binom{a}{q}}{\binom{c}{q}}$};

\end{tikzpicture}
	\caption{Sequential transducer outputs for all linearizations of a trace}%
        \label{fig:pb2}
\end{figure}%
Prior work in~\cite{GUAIANA1998277} devised a wreath product principle for traces, but it 
uses non-trace structures to circumvent the second problem, thus limiting its applicability.

As seen in the previous sections, the new algebraic framework of asynchronous structures
supports true concurrency and is well suited to reason about trace languages. Most importantly,
an asynchronous morphism to a wreath product chain gives rise to asynchronous morphisms to
individual atm's of the chain (see the proof of Theorem~\ref{thm:wpp2} for an illustration).
This can be seen as a resolution of the second problem mentioned
above. 

Going ahead, we extend the notion of simulation to the case when 
the `simulating' morphism is an asynchronous morphism to an atm.

\begin{definition}
Let $\phi\colon  \traces \to T=(X,M)$ be a morphism to a tm.
Further, let $T'=(\{S_i\}, N)$ be an atm and $\psi\colon  \traces \to T'$ be an asynchronous morphism.
We say that $\psi$ is an asynchronous simulation of $\phi$ (or simply $\psi$ simulates $\phi$) if there exists
a surjective function $f\colon  S_\pset \to X$ such that, for all $a \in \Sigma$ and all $s \in \gs$,
$f(\psi(a)(s)) = \phi(a)(f(s))$.
\end{definition}

The fundamental theorem of Zielonka~\cite{zielonka1987notes} states that every recognizable language is accepted by 
some asynchronous automata. See~\cite{mukund2012automata} for another proof of the theorem. 
From the viewpoint of our algebraic setup and the previous definition, it 
guarantees the existence of a simulating asynchronous morphism.

\begin{theorem}[Zielonka Theorem]\label{zielonka}
Let $\phi\colon  \traces \to T$ be a morphism to a finite tm. 
There exists an asynchronous morphism $\psi\colon  \traces \to T'$, to a finite atm,
which simulates $\phi$.
\end{theorem}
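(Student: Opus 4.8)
The plan is to read this statement as the algebraic reformulation of Zielonka's classical theorem and to deduce it from the classical form, using the dictionary between asynchronous automata and asynchronous morphisms developed earlier. By Lemma~\ref{lem:atmza} it suffices to exhibit a deterministic asynchronous automaton $\aaa$ over $\dalphabet$ together with a surjection $f\colon \gs \to X$ such that $f(\gt_a(s)) = \phi(a)(f(s))$ for every $a \in \alphabet$ and every global state $s \in \gs$; one then takes $T' = T_\aaa$ and $\psi = \phi_\aaa$, and the identity propagates from generators to all of $\traces$ by a one-line induction on the number of events, giving $f(\psi(t)(s)) = \phi(t)(f(s))$, which is exactly the simulation condition. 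Observe that, because $f$ is onto $X$, this condition also forces $\phi_\aaa(t) = \phi_\aaa(t') \Rightarrow \phi(t) = \phi(t')$, so we are really demanding two things: that the transition monoid of $\aaa$ \emph{refine} $\phi$, and that its global states \emph{cover} $X$ equivariantly through $f$.

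First I would build $\aaa$ so that $\phi(t)$ can be recovered from the global state reached after reading $t$. Since $M$ is finite, each $\phi^{-1}(m)$ with $m \in M$ is a recognizable trace language, so the classical Zielonka theorem~\cite{zielonka1987notes,mukund2012automata} supplies a deterministic asynchronous automaton $\aaa_m$ accepting it; the process-wise product $\aaa = \prod_{m \in M}\aaa_m$ is again a deterministic asynchronous automaton, whose global state after $t$ records, for each $m$, whether $t \in \phi^{-1}(m)$, and hence determines $\phi(t)$. Passing to $T_\aaa = (\{S_i\}, M_\aaa)$ and $\phi_\aaa$ via Lemma~\ref{lem:atmza}, the equality $\gt_t = \gt_{t'}$ forces $\gt_t(s_{\text{in}}) = \gt_{t'}(s_{\text{in}})$ and therefore $\phi(t) = \phi(t')$; thus $\theta(\gt_t) = \phi(t)$ is a well-defined surjective morphism $M_\aaa \surj \phi(\traces)$. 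This settles the refinement half of the requirement.

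The remaining step, producing the equivariant surjection $f\colon \gs \to X$ compatible with $\theta$ — equivalently, realizing the tm $(X, \phi(\traces))$ as a divisor of $(\gs, M_\aaa)$ — is where I expect the real obstacle to lie. Defining $f(\gt_t(s_{\text{in}})) = \phi(t)(x_{\text{in}})$ only on the states reachable from one initial state would cover just the orbit of $x_{\text{in}}$ and would not extend to all of $\gs$, which is a full product of local state sets rather than a subproduct of its reachable part. To handle surjectivity onto every orbit of $X$ I would simulate the right regular representation of $\phi(\traces)$ instead of the action on $X$: arrange a map $f_0\colon \gs \surj \phi(\traces)$ with $f_0(\gt_a(s)) = f_0(s)\,\phi(a)$ for all $s$, adjoin a frozen component recording an orbit representative $x$ of $X$, and set $f(s) = f_0(s)(x)$, so that $f(\gt_a(s)) = (f_0(s)\,\phi(a))(x) = \phi(a)(f_0(s)(x)) = \phi(a)(f(s))$ and every point of $X$ is reached. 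The genuinely hard part is then making $f_0$ total and equivariant on \emph{all} global states: this forces the automaton to reconstruct $\phi$ of overlapping causal pasts correctly at each synchronization event, where the $\phi$-values carried by the participating processes cannot simply be multiplied. That distributed recombination is exactly what the time-stamping/gossip mechanism of Mukund--Sohoni~\cite{DBLP:journals/dc/MukundS97} underlying Zielonka's construction accomplishes, so at this point I would invoke the classical theorem rather than reprove it.
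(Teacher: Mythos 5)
Your proposal is correct and takes essentially the paper's own route: the paper gives no independent argument for Theorem~\ref{zielonka}, presenting it as the algebraic reformulation of the classical Zielonka theorem (citing \cite{zielonka1987notes,mukund2012automata}) via exactly the automaton--morphism dictionary of Lemmas~\ref{lem:atmza} and~\ref{lem:zaatm} that you invoke. Your additional scaffolding --- the product of acceptors for the languages $\phi^{-1}(m)$, and the right-regular representation with a frozen $X$-valued component to secure totality and surjectivity of $f$ on all of $\gs$ --- is sound and merely makes explicit the bridging the paper leaves implicit, before delegating, just as the paper does, the remaining distributed reconstruction of $\phi$ (for which one needs the morphism-computing form of Zielonka's construction, not only its language-acceptance statement) to the classical theorem.
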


Recall that the atm $U_2[\ell]$, defined in Example~\ref{ex:u2l}, is a natural extension of
the tm $U_2$ to the process $\ell$. In a similar vein, for a group $G$,
the atm $G[\ell]$ denotes the natural extension of the tm $(G,G)$ from Example~\ref{ex:g}
to the process $\ell$. We will use a similar notation to extend a tm to an atm localized
to a particular process.

Now we formulate the following decomposition question:

\begin{question}\label{qs-kr} 
	Let $\phi\colon \traces \to (X,M)$ be a morphism to a finite tm.
Does there exist an asynchronous morphism $\psi \colon \traces \to T'$ to a finite atm,
        such that $\psi$ simulates $\phi$, and  
the atm $T'$ is of 
the form $T_1 \wr \ldots \wr T_n$
where each factor $T_i$ is, for some $\ell \in \pset$, either the atm $U_2[\ell]$ or
is of the form $G[\ell]$ for some non-trivial simple group
$G$ dividing $M$ ?
\end{question}
	
In view of our discussion so far, it is clear that the above question asks for
a simultaneous generalization of the Krohn-Rhodes theorem for the setting of words (that is, Proposition~\ref{wordkr}), and
the Zielonka theorem for the setting of traces (that is, Theorem~\ref{zielonka}).
Question~\ref{qs-kr} in general remains open. However we answer it positively in a
particular case, namely that of acyclic architectures, which is general enough to include the common client-server settings.

\begin{definition}
	Let $\dalphabet = {\{\alphabet_i\}}_{i \in \pset}$ be
	a distributed alphabet. Then its communication graph 
	is $G = (\pset,E)$ where
  $E = \{ (i,j) \in \pset \times \pset \mid i \neq j \mbox{~and~} \alphabet_i
  \cap \alphabet_j \neq \emptyset \}$.  If the communication graph is acyclic,
  then the distributed alphabet is called an acyclic architecture.
\end{definition}
Observe that if $\dalphabet$ is an acyclic architecture, then no action
is shared by more than two processes. The work~\cite{acycliczielonka-anca} provides a simpler proof
of Zielonka's theorem in this case.

\begin{theorem}\label{thm:acyclickr}%
	If $\dalphabet$ is an acyclic architecture, then Question~\ref{qs-kr} admits a positive answer.
\end{theorem}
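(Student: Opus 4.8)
The plan is to reduce the question to a statement purely about asynchronous morphisms and then induct on the communication forest. First I would note that simulation is transitive: if $\psi$ simulates $\phi$ and $\psi'$ simulates $\psi$, then composing the two surjective maps shows that $\psi'$ simulates $\phi$. Hence, invoking the Zielonka theorem (Theorem~\ref{zielonka}), it suffices to prove that every asynchronous morphism $\psi\colon \traces \to T$ into a finite atm can itself be simulated by an asynchronous morphism into a localized wreath product $T_1 \wr \cdots \wr T_n$ with each $T_i$ of the form $U_2[\ell]$ or $G[\ell]$. Equivalently, by Lemma~\ref{lem:atmza}, I may start from the transition atm $T_A$ of an asynchronous automaton $A$ and decompose it. The simple groups produced can be kept among the divisors of $M$, because Zielonka's construction adds only aperiodic (group-free) time-stamping, and wreath products introduce no new simple group divisors beyond those of their factors.

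The induction is on $|\pset|$. If the communication graph is disconnected, the alphabet splits as $\Sigma = \Sigma' \sqcup \Sigma''$ over disjoint process groups with no shared action, $\traces$ is the corresponding direct product of trace monoids, and I would decompose each factor separately and concatenate the two localized chains; factors for one group act trivially on the other, so the $a$-map and commutation conditions are preserved (Lemma~\ref{simplelemma}, Lemma~\ref{lem:extends-asyn-morph}). The base case $|\pset|=1$ is exactly the word setting, where $\traces = \Sigma^*$ and Proposition~\ref{wordkr} yields a chain of $U_2$ and $(G,G)$ factors, all automatically localized to the single process. For the inductive step, if the graph is a tree I pick a leaf $\ell$ with its unique neighbour $j$. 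Here acyclicity is used twice: no action is shared by more than two processes, and $\ell$ synchronises only with $j$, so the synchronisation letters of $\Sigma_\ell$ are totally ordered in every trace, say $c_1 < c_2 < \cdots$.

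To peel off $\ell$ I would separate its behaviour into two parts. The first is $\ell$'s purely local evolution inside each synchronisation-free segment: this is a word computation over $\Sigma_\ell$ that resets at each $c_k$, hence is decomposed by Proposition~\ref{wordkr} into $\ell$-localized $U_2[\ell]$/$G[\ell]$ factors whose accumulated transformations record the segment effects. The second is the synchronisation dynamics between $\ell$ and $j$. The reduced architecture on $\pset\setminus\{\ell\}$, obtained by turning each $c_k$ into a $j$-local action and deleting the $\ell$-local letters, is again acyclic, and the inductive hypothesis supplies a localized chain computing all of its process states, in particular $j$'s state just before each $c_k$. Using the local asynchronous transducer $\chi_A$ together with the wreath product principle (Theorem~\ref{wpp1}), I would assemble the reduced-architecture chain, the $\ell$-local chain, and, on top, a short chain realising the synchronisation dynamics, and then verify via the simulation map that each genuine local state of $A$ is reconstructed from the segment accumulators and the synchronisation state.

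The main obstacle is the apparent circular information flow at a synchronisation: $\ell$'s post-synchronisation state depends on $j$'s state and conversely, so neither trajectory can be computed strictly before the other, which blocks any fixed ordering of wreath factors. The resolution, and the precise point where acyclicity is essential, is that each $c_k$ is shared between exactly the two processes $\ell$ and $j$, so a single factor fired at $c_k$ may read, by Lemma~\ref{lem:combinatorial} and the wreath product action, the current ($\loc(c_k)$-indexed) states of earlier factors at both $\ell$ and $j$. Consequently the coupled update $(s_\ell,s_j)\mapsto(g_\ell(s_\ell,s_j),g_j(s_\ell,s_j))$ is a single letter of an ordinary sequential automaton read along $c_1 < c_2 < \cdots$, once the intervening segment effects have been pre-computed as transformations by the $\ell$-local and reduced-architecture chains lying below it. This sync-level sequential automaton is then decomposed by a final application of Proposition~\ref{wordkr}, with its factors localized to $\ell$ or $j$. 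The remaining work is bookkeeping: checking the $a$-map and commutation conditions throughout (Lemma~\ref{lem:extends-asyn-morph}, Lemma~\ref{simplelemma}), confirming that the final union of constraints matches $L$ as in the proof of Theorem~\ref{thm:wpp2}, and confirming that every simple group used divides $M$ because it divides a transition monoid that does.
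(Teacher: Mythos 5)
Your skeleton coincides with the paper's: peel off a leaf process $\ell$, handle its synchronization-free local segments by the word Krohn--Rhodes theorem (Proposition~\ref{wordkr}) as accumulators reset at each synchronization and localized to $\ell$, and apply the inductive hypothesis to the reduced acyclic architecture on $\pset\setminus\{\ell\}$. But the assembly has a genuine gap: the reduced-architecture morphism you induct on is never well defined. If each $c_k$ is kept as a plain $j$-local action, its action is not determined by the letter, because the effect of a synchronization on the rest of the system depends on process $\ell$'s accumulated contribution since the previous synchronization. The paper's key device fixes exactly this \emph{before} invoking the induction: it enriches each synchronization letter $a$ to pairs $(a,n)$ with $n$ ranging over the submonoid $N$ of $M$ generated by the images of the $\ell$-local letters, defines $\varphi_2((a,n)) = n\varphi(a)$ on the enriched reduced alphabet, places the inductively decomposed chain for $\varphi_2$ \emph{after} the $\ell$-localized accumulator chain so that at a synchronization it reads the accumulator value $f_1(y)\in N$ through the wreath action ($\gamma_a(y)=\psi_2((a,f_1(y)))$), and recovers the simulated state as $f(y,z)=f_1(y)(f_2(z))$. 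With this enrichment the reduced chain absorbs the coupled update itself; no separate synchronization factor exists, and no product structure on $X$ is ever needed (the paper never invokes Zielonka).

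Your top-level sequential ``sync automaton'' cannot play that role, because information in a wreath/cascade product flows only from earlier factors to later ones, while the outcome of the update at $c_k$ --- in particular $j$'s post-synchronization state --- is needed by the reduced-architecture chain you placed \emph{below} it: $j$'s subsequent joint actions with third processes depend on that outcome. Your escape route, precomputing the reduced architecture's ``segment effects as transformations'', fails on two counts: those effects are state-dependent (a letter shared by $j$ and a third process acts on their joint state, so the accumulated action on $S_j$ is not determined by the letters alone), and the interval ``between $c_k$ and $c_{k+1}$'' is not trace-theoretically well defined for processes not adjacent to $j$, whose events are unordered relative to the synchronizations. Only the $\ell$-side accumulator is state-independent, which is precisely why the paper tracks an element of $N$ rather than a state of process $\ell$. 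Two further soft spots: your preliminary Zielonka reduction rests on the unsubstantiated claim that the construction adds only aperiodic machinery and hence no new simple-group divisors --- nothing in Theorem~\ref{zielonka} supports this, and the open status of questions around aperiodicity in Section~\ref{sec:aperiodic} suggests it is not available off the shelf; and the divisor condition $G \divides M$ requires an argument, supplied in the paper by noting that constant maps are idempotent, so any simple group dividing $N \cup \overline{N}$ already divides $N$ and hence $M$, rather than your assertion that the relevant transition monoid divides $M$.
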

\begin{proof}
	The proof is by induction on the number of 
	processes. The base case with only one process follows from Proposition~\ref{wordkr}.
	
	For the general case, let $\pset = \{1, 2, \ldots, k \}$.
	Since the communication graph is acyclic, there exists a 
	`leaf' process which communicates with at most one other process.
	Without loss of generality, let the leaf process
	be $1$, and its only neighbouring process be $2$ (if process $1$ has no neighbour, then process $2$ can be any other process). 
	We `split' the given morphism $\varphi \colon \traces \to (X,M)$ based on the chosen leaf process~$1$.
	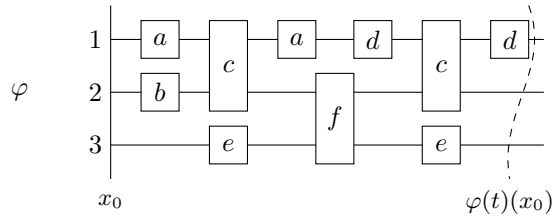
\begin{figure}[ht]
		\centering
		\begin{tikzpicture}
\draw (0,0) -- (0, 4 * 0.2 + 3 * 0.5);

\draw (0.4, 0.5 + 2*0.2) rectangle (0.4 + 0.5, 2*0.5 + 2*0.2);
\draw (0.4, 2*0.5 + 3*0.2) rectangle (0.4 + 0.5, 3*0.5 + 3*0.2);

\draw (2*0.4+0.5, 0.5 + 2*0.2) rectangle (2*0.4 + 2*0.5, 3*0.5 + 3*0.2);
\draw (2*0.4+0.5, 0.2) rectangle (2*0.4 + 2*0.5, 0.2 + 0.5);

\draw (3*0.4+2*0.5, 3*0.2 + 2*0.5) rectangle (3*0.4 + 3*0.5, 3*0.2 + 3*0.5);
\draw (3*0.4+3*0.5, 0.2 ) rectangle (3*0.4 + 4*0.5, 2*0.2 + 2*0.5);
\draw (3*0.4+4*0.5, 3*0.2 + 2*0.5 ) rectangle (3*0.4 + 5*0.5, 3*0.2 + 3*0.5);

\draw (4*0.4+5*0.5, 0.2 ) rectangle (4*0.4 + 6*0.5, 0.2 + 0.5);
\draw (4*0.4+5*0.5, 2*0.2 + 0.5 ) rectangle (4*0.4 + 6*0.5, 3*0.2 + 3*0.5);

\draw (5*0.4+6*0.5, 3*0.2 + 2*0.5 ) rectangle (5*0.4 + 7*0.5, 3*0.2 + 3*0.5);

\draw (0, 0.2 + 0.5/2) -- (2*0.4 + 0.5, 0.2 + 0.5/2);
\node at (-0.2, 0.2 + 0.5/2) {$3$};
\draw (2*0.4 + 2*0.5, 0.2 + 0.5/2) -- (3*0.4 + 3*0.5, 0.2 + 0.5/2);
\draw (3*0.4 + 4*0.5, 0.2 + 0.5/2) -- (4*0.4 + 5*0.5, 0.2 + 0.5/2);
\draw (4*0.4 + 6*0.5, 0.2 + 0.5/2) -- (5*0.4 + 7*0.5 + 0.2, 0.2 + 0.5/2);

\draw (0, 2*0.2 + 0.5 + 0.5/2) -- (0.4, 2*0.2 + 0.5 + 0.5/2);
\node at (-0.2, 2*0.2 + 0.5 + 0.5/2) {$2$};
\draw (0.4 +0.5, 2*0.2 + 0.5 + 0.5/2) -- (2*0.4 + 0.5, 2*0.2 + 0.5 + 0.5/2);
\draw (2*0.4 + 2*0.5, 2*0.2 + 0.5 + 0.5/2) -- (3*0.4 + 3*0.5, 2*0.2 + 0.5 + 0.5/2);
\draw (3*0.4 + 4*0.5, 2*0.2 + 0.5 + 0.5/2) -- (4*0.4 + 5*0.5, 2*0.2 + 0.5 + 0.5/2);
\draw (4*0.4 + 6*0.5, 2*0.2 + 0.5 + 0.5/2) -- (5*0.4 + 7*0.5 + 0.2, 2*0.2 + 0.5 + 0.5/2);

\draw (0, 3*0.2 + 2*0.5 + 0.5/2) -- (0.4, 3*0.2 + 2*0.5 + 0.5/2);
\node at (-0.2, 3*0.2 + 2*0.5 + 0.5/2) {$1$} ;
\draw (0.4 +0.5, 3*0.2 + 2*0.5 + 0.5/2) -- (2*0.4 + 0.5, 3*0.2 + 2*0.5 + 0.5/2);
\draw (2*0.4 + 2*0.5, 3*0.2 + 2*0.5 + 0.5/2) -- (3*0.4 + 2*0.5,3*0.2 + 2*0.5 + 0.5/2);
\draw (3*0.4 + 3*0.5, 3*0.2 + 2*0.5 + 0.5/2) -- (3*0.4 + 4*0.5,3*0.2 + 2*0.5 + 0.5/2);
\draw (3*0.4 + 5*0.5, 3*0.2 + 2*0.5 + 0.5/2) -- (4*0.4 + 5*0.5,3*0.2 + 2*0.5 + 0.5/2);
\draw (4*0.4 + 6*0.5, 3*0.2 + 2*0.5 + 0.5/2) -- (5*0.4 + 6*0.5,3*0.2 + 2*0.5 + 0.5/2);
\draw (5*0.4 + 7*0.5, 3*0.2 + 2*0.5 + 0.5/2) -- (5*0.4 + 7*0.5 + 0.2,3*0.2 + 2*0.5 + 0.5/2);

\node at (0.4 + 0.5/2, 3*0.2 + 2*0.5 + 0.5/2) {$a$};
\node at (0.4 + 0.5/2, 2*0.2 + 0.5 + 0.5/2) {$b$};

\node at (2*0.4 + 0.5 + 0.5/2, 2*0.2 + 2*0.5 + 0.2/2) {$c$};
\node at (2*0.4 + 0.5 + 0.5/2, 0.2 + 0.5/2) {$e$};

\node at (3*0.4 + 2*0.5 +0.5/2, 3*0.2 + 2*0.5 + 0.5/2) {$a$};

\node at (3*0.4 + 3*0.5 +0.5/2, 0.2 + 0.5 + 0.2/2) {$f$};

\node at (3*0.4 + 4*0.5 +0.5/2, 3*0.2 + 2*0.5 + 0.5/2) {$d$};
\node at (4*0.4 + 5*0.5 + 0.5/2, 2*0.2 + 2*0.5 + 0.2/2) {$c$};
\node at (4*0.4 + 5*0.5 + 0.5/2, 0.2 + 0.5/2) {$e$};

\node at (5*0.4 + 6*0.5 +0.5/2, 3*0.2 + 2*0.5 + 0.5/2) {$d$};

\node at (0,-0.3) {\small ${x_0}$};

\draw[dashed] (7*0.5 + 5*0.4, 3*0.5 + 4*0.2) to [out=-70, in = 100]  (6*0.5 + 5*0.4 + 0.5/2, 0);

\node at (6*0.5 + 5*0.4 + 0.5/2,-0.3) {\small $\phi(t)(x_0)$};

\node at (-1.2, 2*0.2+ 0.5 +0.5/2) {$\varphi$};

\end{tikzpicture}
		\caption{Initial and final states of $(X,M)$ under $\varphi$}
		\label{fig:tree-varphi-over-t}
	\end{figure}

	\begin{figure}[ht]
\centering
\begin{tikzpicture}
\draw (0,0) -- (0, 3 * 0.2 + 2 * 0.5);
			
\draw (0,6*0.2 +3*0.5) -- (0, 8 * 0.2 + 4 * 0.5);

\draw (0.4, 0.5 + 2*0.2) rectangle (0.4 + 0.5, 2*0.5 + 2*0.2);
\draw (0.4, 7*0.2 +3*0.5) rectangle (0.4 + 0.5, 4*0.5 + 7*0.2); 

\draw (2*0.4+0.5 - 0.1, 0.5 + 2*0.2) rectangle (2*0.4 + 2*0.5, 2*0.5 + 3*0.2);
\draw (2*0.4+0.5, 0.2) rectangle (2*0.4 + 2*0.5, 0.2 + 0.5);

\draw (3*0.4+3*0.5, 0.2 ) rectangle (3*0.4 + 4*0.5, 2*0.2 + 2*0.5);

\draw (4*0.4+5*0.5, 0.2 ) rectangle (4*0.4 + 6*0.5, 0.2 + 0.5);
\draw (4*0.4+5*0.5 - 0.3, 2*0.2 + 0.5 ) rectangle (4*0.4 + 6*0.5, 2*0.5 + 3*0.2);


\draw (0, 0.2 + 0.5/2) -- (2*0.4 + 0.5, 0.2 + 0.5/2);
\node at (-0.2, 0.2 + 0.5/2) {$3$} ;
\draw (2*0.4 + 2*0.5, 0.2 + 0.5/2) -- (3*0.4 + 3*0.5, 0.2 + 0.5/2);
\draw (3*0.4 + 4*0.5, 0.2 + 0.5/2) -- (4*0.4 + 5*0.5, 0.2 + 0.5/2);
\draw (4*0.4 + 6*0.5, 0.2 + 0.5/2) -- (5*0.4 + 7*0.5 + 0.2, 0.2 + 0.5/2);

\draw (0, 2*0.2 + 0.5 + 0.5/2) -- (0.4, 2*0.2 + 0.5 + 0.5/2);
\node at (-0.2,2*0.2 + 0.5 + 0.5/2) {$2$} ;
\draw (0.4 +0.5, 2*0.2 + 0.5 + 0.5/2) -- (2*0.4 + 0.5 -0.1, 2*0.2 + 0.5 + 0.5/2);
\draw (2*0.4 + 2*0.5, 2*0.2 + 0.5 + 0.5/2) -- (3*0.4 + 3*0.5, 2*0.2 + 0.5 + 0.5/2);
\draw (3*0.4 + 4*0.5, 2*0.2 + 0.5 + 0.5/2) -- (4*0.4 + 5*0.5 -0.3, 2*0.2 + 0.5 + 0.5/2);
\draw (4*0.4 + 6*0.5, 2*0.2 + 0.5 + 0.5/2) -- (5*0.4 + 7*0.5 + 0.2, 2*0.2 + 0.5 + 0.5/2);

\draw (0, 7*0.2 +3*0.5 + 0.5/2 ) -- (0.4, 7*0.2 + 3*0.5 + 0.5/2);
\node at (-0.2,7*0.2 +3*0.5 + 0.5/2) {$1$} ;
\draw (0.4 +0.5, 7*0.2 +3*0.5 + 0.5/2) -- (2*0.4 + 0.5,7*0.2 +3*0.5 + 0.5/2);
\draw (2*0.4+0.5, 7*0.2 +3*0.5) rectangle (2*0.4 + 2*0.5, 4*0.5 + 7*0.2); 
\draw (2*0.4 + 2*0.5, 7*0.2 +3*0.5 + 0.5/2) -- (3*0.4 + 2*0.5,7*0.2 +3*0.5 + 0.5/2);
\draw (3*0.4 + 2*0.5, 7*0.2 +3*0.5) rectangle (3*0.4 + 3*0.5, 4*0.5 + 7*0.2); 
\draw (3*0.4 + 3*0.5, 7*0.2 +3*0.5 + 0.5/2) -- (3*0.4 + 4*0.5,7*0.2 +3*0.5 + 0.5/2);
\draw (3*0.4 + 4*0.5, 7*0.2 +3*0.5) rectangle (3*0.4 + 5*0.5, 4*0.5 + 7*0.2); 
\draw (3*0.4 + 5*0.5, 7*0.2 +3*0.5 + 0.5/2) -- (4*0.4 + 5*0.5,7*0.2 +3*0.5 + 0.5/2);
\draw (4*0.4 + 5*0.5, 7*0.2 +3*0.5) rectangle (4*0.4 + 6*0.5, 4*0.5 + 7*0.2); 
\draw (4*0.4 + 6*0.5, 7*0.2 +3*0.5 + 0.5/2) -- (5*0.4 + 6*0.5,7*0.2 +3*0.5 + 0.5/2);
\draw (5*0.4 + 6*0.5, 7*0.2 +3*0.5) rectangle (5*0.4 + 7*0.5, 4*0.5 + 7*0.2); 
\draw (5*0.4 + 7*0.5, 7*0.2 +3*0.5 + 0.5/2) -- (5*0.4 + 7*0.5 + 0.2,7*0.2 +3*0.5 + 0.5/2);

\node at (0.4 + 0.5/2, 7*0.2 +3*0.5 + 0.5/2) {$a$};
\node at (0.4 + 0.5/2, 2*0.2 + 0.5 + 0.5/2) {$b$};
\node at (2*0.4 + 0.5 + 0.5/2 -0.04, 2*0.2 + 0.5 +0.17) {\small $\varphi(a)$};

\node at (2*0.4 + 0.5 + 0.5/2 -0.05, 2*0.2 + 0.5 + 0.5) {$c$};
\node at (2*0.4 + 0.5 + 0.5/2,7*0.2 +3*0.5 + 0.5/2 ) {$c$};
\node at (2*0.4 + 0.5 + 0.5/2, 0.2 + 0.5/2) {$e$};

\node at (3*0.4 + 2*0.5 +0.5/2, 7*0.2 +3*0.5 + 0.5/2) {$a$};

\node at (3*0.4 + 3*0.5 +0.5/2, 0.2 + 0.5 + 0.2/2) {$f$};

\node at (3*0.4 + 4*0.5 +0.5/2, 7*0.2 +3*0.5 + 0.5/2) {$d$};
\node at (4*0.4 + 5*0.5 + 0.5/2 -0.15,2*0.2 + 0.5 + 0.5 ) {$c$};
\node at (4*0.4 + 5*0.5 + 0.5/2,7*0.2 +3*0.5 + 0.5/2 ) {$c$};
\node at (4*0.4 + 5*0.5 + 0.5/2 -0.15, 2*0.2 + 0.5 +0.17) {\small $\varphi(ad)$};
\node at (4*0.4 + 5*0.5 + 0.5/2, 0.2 + 0.5/2) {$e$};

\node at (5*0.4 + 6*0.5 +0.5/2,7*0.2 +3*0.5 + 0.5/2 ) {$d$};

\node at (0,-0.3) {\small $x_0$};

\draw[dashed] (7*0.5 + 5*0.4,8*0.2 +4*0.5) to [out=-70, in = 70]  (7*0.5 + 5*0.4, 3*0.5 + 6*0.2 );
\draw[dashed] (0.5 + 0.4 +0.1, 8*0.2 +4*0.5) to [out=-70, in = 60]  (0.5 + 0.4, 3*0.5 + 6*0.2 );
\draw[dashed] (5*0.5 + 3*0.4 +0.1, 8*0.2 +4*0.5) to [out=-70, in = 70]  (5*0.5 + 3*0.4, 3*0.5 + 6*0.2 );


\node at (0.5 + 0.4 -0.1 ,5*0.2 + 3*0.5) {\small $\phi(a)$};

\node at (4*0.5 + 3*0.4 + 0.5/2 -0.1 , 5*0.2 + 3*0.5) {\small $\phi(ad)$};

\node at (6*0.5 + 5*0.4 + 0.5/2 , 5*0.2 + 3*0.5) {\small $\phi(d)$};
\node at (0, 4*0.2 + 3*0.5 +0.2) {\small $\mbox{id}$};

\draw[dashed] (7*0.5 + 5*0.4, 2*0.5 + 2*0.2) to [out=-70, in = 100]  (6*0.5 + 5*0.4 + 0.5/2, -0.1);
\node at (6*0.5 +5*0.4 +0.5/2, -0.3) {\small $x$};

\node at (-1.3, 0.8) {\small $\varphi_2$};
\node at (-1.3,7*0.2 +3*0.5 + 0.5/2) {\small $\varphi_1$};


\draw[thick,->] (0.5 + 0.4 -0.1 ,4*0.2 + 3*0.5) to [out=-90, in=120] (2*0.4+0.5 - 0.17, 2*0.5 + 2*0.2); 
\draw[thick,->] (4*0.5 + 3*0.4 + 0.5/2 -0.1 ,4*0.2 + 3*0.5) to [out=-90, in=120] (4*0.4+5*0.5 - 0.37, 2*0.5 + 2*0.2); 

\draw[dotted, thick] (-0.3, 3*0.2 + 2*0.5 + 0.3) -- (5*0.4 + 6*0.5 +0.5 +0.2, 3*0.2 + 2*0.5 + 0.3);

\end{tikzpicture}
\caption{Transfer of state information from $\varphi_1$ to $\varphi_2$. The final states of the two atm's are $\varphi(d)$ and $x$.
Note that $\varphi(d)(x) = \varphi(t)(x_0)$.}
\label{fig:treeproofsec}
\end{figure}
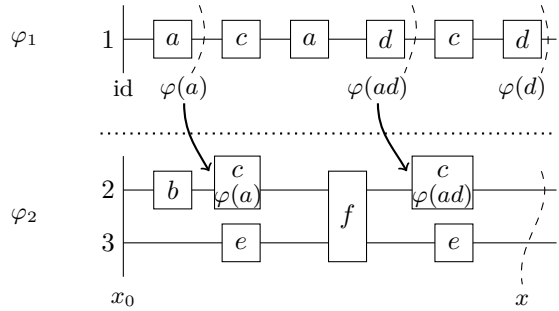
	\paragraph*{Defining $\varphi_1$ and $\varphi_2$}
	Let $N$ be the submonoid of $M$ generated by 
	$\{\phi(a) \mid \loc(a) = \{1\}\}$. Also let 
	$\overline{N}$ be the semigroup of \emph{reset} (that is, constant)
	functions from $N$ into itself. If $n \in N$, we denote by $\overline{n}$
	the function on $N$ which maps every element to $n$.

	We define $\varphi_1 \colon  \alphabet_1^* \to (N,N \cup \overline{N})$ by setting 
	\begin{align*}
		\varphi_1(a) &= \varphi(a) &&\text{ if } \loc(a) = \{1\} \\
		\varphi_1(a) &= \overline{\mbox{id}}  &&\text{ if } \loc(a) = \{1,2\}
	\end{align*}
        Note that at any point, $\varphi_1$ records in the state of the tm, 
        the evaluation $\varphi(w)$ where $w \in (\Sigma_1 \setminus \Sigma_2)^*$ is
	the word read by process~$1$ since the last joint action with its neighbour.
        As a result, the sequential transducer 
	associated with $\varphi_1$ adds the corresponding information
	at each process~$1$ event.

	In particular, the information supplied at the joint events of process~$1$ and $2$, will be used
	by $\varphi_2$. For this,
	let us define a suitable distributed alphabet $\wt{\alphabet'} = \{\alphabet'_2, 
	\alphabet_3, \ldots, \alphabet_k \}$ over $\pset \setminus \{1\}$,
	where $\alphabet_2' = (\alphabet_2 \setminus \alphabet_1) \cup ((\alphabet_1 
	\cap \alphabet_2) \times N)$.  We define 
	$\varphi_2 \colon  TR(\wt{\alphabet'}) \to (X,M)$ by letting
	\begin{align*}
		\varphi_2(a) &=  \phi(a) &&\text{ if } 1 \notin \loc(a) \\
		\varphi_2((a,n)) &= n\phi(a)  &&\text{ if } \loc(a) = \{1,2\}
	\end{align*}
	We denote the total alphabet corresponding to $\dalphabet$ and $\wt{\alphabet'}$ by $\alphabet$ and
	$\alphabet'$ respectively. For any two letters
	$a, b \in \alphabet \cap \alphabet' = \alphabet \setminus \alphabet_1$, if $a$ and $b$ 
	are independent in the new
	distributed alphabet $\wt{\alphabet'}$, then they must have been independent in $\dalphabet$. For independent letters $(a,n)$ and 
	$b$ in $\wt{\alphabet'}$, it is easy to show that, process~$1$ being a leaf process, $a$ and $b$ are independent in $\dalphabet$, and that
	$\varphi(b)$ and $n$ commute. Based on these, we can verify that $\varphi_2$ is indeed a morphism. 	
        \paragraph*{\textbf{Simulating $\varphi_1$ and $\varphi_2$}}
	By induction hypothesis, we get a simulating morphism of $\varphi_1$, namely 
	$\widehat{\varphi_1}:
	\alphabet_1^* \to T_1 \wr T_2 
	\wr \ldots \wr T_n$ where each factor is of the form $U_2$ or
	$(G,G)$ for some non-trivial simple group $G$ dividing $N \cup \overline{N}$. Let 
	$T_1 \wr T_2 \wr \ldots \wr T_n = T = (Y, M_1)$. Then, by 
	definition of simulation, there exists a surjective mapping $f_1 : Y \to N$ such
	that for any $y \in Y$ and any $a \in \alphabet_1$,  $f_1(\widehat{\varphi_1}(a)(y)) = \varphi_1(a)(f_1(y))$.

        We define a morphism $\psi_1 : \traces \to T[1]$ by setting
	\begin{align*}
		\psi_1(a) &= \widehat{\varphi_1}(a) &&\text{ if } a \in \alphabet_1 \\
		\psi_1(a) &= \mbox{id} &&\text{ otherwise }
	\end{align*}
        It is easy to check that $\psi_1$ is an asynchronous morphism. Also, it's not difficult
	to see that $T[1] = T_1[1] \wr \ldots \wr T_n[1]$. We write $T[1]$ as $(\{Y_i\}, M_1)$. 
	Since the process~$1$ local
	states represent the global states of $T[1]$, we can consider $f_1$ as a surjective function
	from $Y_\pset$ to $N$ such that for any $y \in Y_\pset$, and $a \in \alphabet_1$, 
	we have $f_1(\psi_1(a)(y)) = \varphi_1(a)(f_1(y))$.

%
        Note that, by construction, each
	$T_m[1]$ is of the form $U_2[1]$ or $G[1]$ for some non-trivial simple group $G$ dividing 
	$N \cup \overline{N}$. If $G \divides N \cup \overline{N}$, namely there exists a surjective 
	morphism $\tau$ from a submonoid $N'$ of $N \cup \overline{N}$ onto $G$, then
	$\tau(\overline{n}) = \mbox{id}_G$ for every $n \in N$. Indeed $\overline{n}$ is an idempotent,
	so its $\tau$-image must be the only 
	idempotent in $G$, namely $\mbox{id}_G$. Clearly $N'' = N' \cap N$ is a sub-monoid 
	of $N$ and $\tau' :  N'' \to G$ defined by $\tau'(n) = \tau(n)$ is a surjective morphism
	from $N''$ to $G$. So, $G \divides N$ and since division is
	transitive, $G \divides M$. 
        
	Similarly for $\varphi_2$, by the induction hypothesis, we have a simulating morphism 
	$\widehat{\varphi_2} \colon TR(\wt{\alphabet'}) \to T'$, where $T'  = T_1' \wr \ldots
	\wr  T'_{n'}$, with
        each factor of the form $U_2[\ell]$ or $G[\ell]$ for some simple group $G$ dividing $M$,
	and some $\ell \in \{2, \ldots ,k\}$. Similar to what we did previously, we tweak these
	atm's to make them work over $\pset$, by adding a singleton set of local states for process~$1$. 
	If we denote this by $T'_m[{\uparrow}1]$, then $T'[{\uparrow}1] = T'_1[{\uparrow}1] 
	\wr \ldots \wr T'_{n'}[{\uparrow}1]$. Consider the distributed alphabet 
        $\wt{\alphabet''} = (\alphabet'_1, \alphabet'_2, \alphabet_2, \ldots, \alphabet_k)$,
	where $\alphabet'_1 = (\alphabet_1 \setminus \alphabet_2) \cup ((\alphabet_1 \cap \alphabet_2) 
	\times N)$.
	We devise a morphism $\psi_2 \colon TR(\wt{\alphabet''}) \to T'[{\uparrow}1]$ by 
	setting
	\begin{align*}
	\psi_2(a) &= \mbox{id} && \text{ if } a \in \alphabet_1 \setminus \alphabet_2 \\
	\psi_2(a) &= \widehat{\varphi_2}(a) && \text{ otherwise }
	\end{align*}
	Let us denote $T'[{\uparrow}1]$ as the atm 
	$(\{Z_i\}, M_2)$. 
	Due to the canonical bijection between global states of $T'$ and $T'[{\uparrow}1]$, there is a
	surjective function $f_2 \colon Z_\pset \to X$ such that for any $a \in \alphabet \setminus
	\alphabet_1$, and any $z \in Z_\pset$, we have $f_2(\psi_2(a)(z)) = \varphi_2(a)(f_2(z))$.
	Furthermore, for $(a,n) \in (\alphabet_1 \cap \alphabet_2) \times N$,
	and any $z \in Z_\pset$, we have $f_2(\psi_2((a,n))(z)) = \varphi_2((a,n))(f_2(z))$.

	\paragraph*{\textbf{Asynchronously simulating $\phi$}}
	
	The final step is to combine $\psi_1$ and $\psi_2$
	to get asynchronous
	morphism $\psi: \traces \to (\{Y_i\}, M_1) \wr (\{Z_i\}, M_2)$ such that $\psi$
	simulates $\phi$. Recall that $(\{Y_i\}, M_1) \wr (\{Z_i\}, M_2) = 
	(\{Y_i \times Z_i\}, M_1 \times {\mathcal F}(Y_\pset, M_2))$. We define 
	$\psi$ as follows:
	\begin{equation*}\psi(a) = (\psi_1(a), \gamma_a) \text{ where } \gamma_a \colon Y_\pset \to M_2 \text{ is given by}\end{equation*}
	\begin{align*}
	& \gamma_a(y) = \mbox{id} & \text{if } \loc(a) = \{1\} \\
		& \gamma_a(y) = \psi_2((a, f_1(y)))& \text{if } \loc(a) = \{1,2\} \\
	& \gamma_a(y) = \psi_2(a) & \text{if } 1 \notin \loc(a)
	\end{align*}
	For $a \in \alphabet_1 \cap \alphabet_2$, recall that in the 
	first atm $T[1]$, any global state is completely determined by its process~$1$ state. So,
	$\psi(a)$ is an $a$-map, and hence $\psi$
	is an asynchronous morphism.
	\begin{figure}[h]
		\centering
		\begin{tikzpicture}
			\node  at (0,0) (sw) {$X$};
			\node at (2.7,0) (se) {$X$};
			\node  at (0,1.4) (nw) {$Y_\pset \times Z_\pset$};
			\node at (2.7,1.4) (ne) {$Y_\pset \times Z_\pset$};
			\draw[->] (sw.east) to node[below] {\small $\varphi(a)$} (se.west); 
			\draw[->] (nw.east) to node[above] {\small $\psi(a)$}(ne.west); 
			\draw[->] (nw.south) to node[left] {\small $f$} (sw.north); 
			\draw[->] (ne.south) to node[right] {\small $f$} (se.north); 
		\end{tikzpicture}
		\caption{$\psi$ simulates $\varphi$}
		\label{fig:commute-tree-final}
	\end{figure}

	We now show that there exists a surjective function $f : Y_\pset \times Z_\pset \to X$
	such that $f(\psi(a)(y,z)) = \phi(a)(f(y,z))$ for all $(y,z) \in Y_\pset \times Z_\pset$.
	We define $f(y,z) = f_1(y)(f_2(z))$. It is surjective because both $f_1$ and $f_2$ are
	surjective, and $N$ contains an identity element. Simple calculations show that $\psi$ simulates 
	$\varphi$. We give a case by case argument as to why
        this should be true. 

	$f(\psi(a)(y,z))$ refers to the new state of $(X,M)$ that we get
	by first reading the letter $a$ at state
	$(y,z)$ in the atm $(\{Y_i\}, M_1) \wr (\{Z_i\}, M_2)$, and then mapping back to the 
	corresponding state of $(X,M)$ using $f$.

	\underline{$\mathbf{\textbf{Case }a \in \alphabet_1 \cap \alphabet_2}$:}
	When $a \in \alphabet_1 \cap \alphabet_2$, it is a joint letter of process~$1$ and process~$2$.
	Recall from the definition of $\varphi_1$, that these joint letters reset the state of the tm 
	$(N, N \cup \overline{N})$ to the $\mbox{id}$ state. Because $\psi_1$ simulates $\varphi_1$ on letters from $\alphabet_1$,
	it should be clear that the new state $\psi_1(a)(y)$ of $(\{Y_i\}, M_1)$ maps to state $\mbox{id}$ in 
	the tm $(N, N \cup \overline{N})$. That is $f_1(\psi_1(a)(y)) = \mbox{id}$.

	The local asynchronous transducer of the first atm adds the process~$1$ state (equivalent to
	global state in the first atm) $y$ to the letter $a$. This coresponds to the letter $(a, f_1(y))$
	as input to $\varphi_2$. Before reading the letter, the tm $(X,M)$ (for $\varphi_2$) is in state
	$f_2(z)$. Thus, from the definition of $\varphi_2$, and because $\psi_2$ simulates $\varphi_2$ on the
	extended letters, we know that the new state $\psi_2(a, f_1(y))(f_2(z))$ in $(\{Z_i\}, M_2)$ should map to the state 
	$f_1(y).\varphi(a)(f_2(z)) = \varphi(a)(f(y,z))$ in $(X,M)$. That is, $f_2(\psi_2(a, f_1(y))(f_2(z))) 
	= \varphi(a)(f(y,z))$. The overall state of $(X,M)$ is then given by $\mbox{id}(\varphi(a)(f(y,z))) =
	\varphi(a)(f(y,z))$. Hence in this case $\psi$ simulates $\varphi$.

        \begin{align*}
	f(\psi(a)(y,z)) &= f(\psi_1(a)(y), \psi_2(a, f_1(y)(z))) \\
			&= f_1(\psi_1(a)(y))(f_2(\psi_2(a,f_1(y)(z)))) \\
			&= \varphi_1(a)f_1(y)(\varphi_2((a,f_1(y))(f_2(z)) \\
			&= \mbox{id}(f_1(y)\varphi(a)(f_2(z))) \\
			&= \varphi(a)(f(y,z))
	\end{align*}

        \underline{$\mathbf{\textbf{Case }a \in \alphabet_1 \setminus \alphabet_2}$:}
         In this case, state of $(X,M)$ (from $\varphi_2$) doesn't change, that is, it remains $f_2(z)$.
	 The state of $(N, N \cup \overline{N})$ should be $\varphi(a)$ applied to the old state $f_1(y)$.
	 That is, the new state of $(X,M)$ should be $\varphi(a)(f_1(y))(f_2(z))$. Note that $\varphi(a)(f_1(y))$
	 is the element $f_1(y).\varphi(a)$ in $N$. Thus new state is $f_1(y).\varphi(a)(f_2(z)) = \varphi(a)(f(y,z))$.
	 Hence, in this case $\psi$ simulates $\varphi$.

        \begin{align*}
		f(\psi(a)(y,z)) &= f(\psi_1(a)(y), \mbox{id}(z)) \\
			&= f_1(\psi_1(a)(y))(f_2(z)) \\
			&= \varphi_1(a)f_1(y)(f_2(z)) \\
			&= \varphi(a)(f(y,z))
	\end{align*}

        \underline{$\mathbf{\textbf{Case }a \notin \alphabet_1}$:}
	In this case, state of tm for $\varphi_1$ should not update. So its new state is the old state $f_1(y)$.
	The new state of $(X,M)$ (for $\varphi_2$) should be $\varphi(a)$ applied to old state $f_2(z)$. Thus, the new
	overall state of $(X,M)$ is $f_1(y)(\varphi(a)(f_2(z))) = \varphi(a).f_1(y)(f_2(z))$. Since $a \notin \alphabet_1$,
	note that $f_1(y) \in N$ commutes with $\varphi(a)$. Thus, new state is $f_1(y)\varphi(a)(f_2(z)) =
	\varphi(a)(f(y,z))$. hence in this case also, $\psi$ simulates $\varphi$.
	and $\varphi(a)$
        
	\begin{align*}
		f(\psi(a)(y,z)) &= f(\psi_1(a)(y), \psi_2(a)(z)) \\
				&= f_1(\psi_1(a)(y))(f_2(\psi_2(a)(z))) \\
				&= f_1(y)(\varphi_2(a)(f_2(z))) \\
				&= \varphi_2(a)f_1(y)(f_2(z)) \\
				&= f_1(y)\varphi_2(a)(f_2(z)) \\
				&= \varphi(a)(f(y,z))
	\end{align*}
\end{proof}
\section{Local and Global Cascade Products}\label{sec:cascade}
In this section, we introduce distributed automata-theoretic operations called local and global cascade products.

\subsection{Local Cascade Product}

\begin{definition}\label{def:localcascade}
Let $\aaa_1 = (\{\ls_i\}, \{\lt_a\}, s_{\text{in}})$ over $\dalphabet$, and $\aaa_2 = 
(\{Q_i\}, \ab \{\lt_{(a,s_a)}\}, q_{\text{in}})$ over $\dtralphabet$ be two asynchronous 
automata. We define
the local cascade product of $A_1$ and $A_2$, denoted $A_1 \lc A_2$, to be the asynchronous automaton $
	(\{\ls_i \times Q_i\}, \{\Delta_a\}, (s_{\text{in}}, q_{\text{in}}))$
	over $\dalphabet$,
	where, for $a \in \Sigma$ and $(s_a, q_a) \in S_a \times Q_a$,
	$\Delta_a((s_a, q_a))= (\lt_a(s_a), \lt_{(a,s_a)}(q_a))$.
\end{definition}

The operational working of $A=A_1 \circ_{\ell} A_2$ can be understood in terms of
$A_1$ and $A_2$ using the local asynchronous transducer $\chi_{A_1}\colon  \traces \to \tracesdtr$ (associated
with $A_1$) as follows:
for an input trace $t \in \traces$, the run of $A$ on $t$ ends in global state $(s,q)$
if and only if the run of $A_1$ on $t$ ends in global state $s$ and the run of $A_2$ on $\chi_{A_1}(t)$
ends in global state $q$. 
\begin{figure}[ht]
\centering
\begin{tikzpicture}
\draw (0,0) rectangle (1.2,1.2);
\node at (0.6,0.6) {$A$};

\draw (1.2, 0.3) -- (1.5, 0.3);
\draw[->] (1.5, 0.3) -- (1.5,-0.2);
\node at (1.5,-0.35) {\small $(s,q)$};
\draw[-stealth'] (-0.5, 0.6) -- (0,0.6) node [midway, above] {$t$};

\draw (2.8,0) rectangle (4,1.2);
\node at (3.4,0.6) {$A_1$};

\draw (4, 0.3) -- (4.3, 0.3);
\draw[->] (4.3, 0.3) -- (4.3,-0.2);
\node at (4.3,-0.35) {\small $s$};
\draw[-stealth'] (2.3, 0.6) -- (2.8,0.6) node [midway, above] {$t$};

\draw (5.4,0) rectangle (6.6,1.2);
\node at (6,0.6) {$A_2$};

\draw (6.6, 0.3) -- (6.9, 0.3);
\draw[->] (6.9, 0.3) -- (6.9,-0.2);
\node at (6.9,-0.35) {\small $q$};
\draw[-stealth'] (4, 0.6) -- (5.4,0.6) node [midway, above] {\small $\chi_{A_1}(t)$};

\end{tikzpicture}
\caption{Operational view of local cascade product}%
\label{fig:local-cascade}
\end{figure}
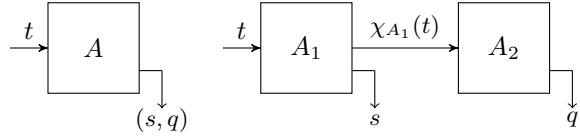
This \emph{operational cascade} of $A_1$ followed by $A_2$ is summarized
in the right part of the Figure~\ref{fig:local-cascade}.
It is not difficult to check that the local cascade product is associative and
$\chi_{A_1 \circ_{\ell} A_2}(t) = \chi_{A_2}( \chi_{A_1}(t))$ for all $t \in \traces$.

Local cascade product is the automata-theoretic counterpart of local wreath product, and Figure~\ref{fig:local-cascade} shows the 
essence of the local wreath product principle discussed in Section~\ref{sec:wpp}. To explain
this further, let $T_1 = (\{ \ls_i \},M)$, $T_2 = ( \{Q_i\},N )$ be two atm's.
Consider an asynchronous morphism $\eta\colon  \traces \to T_1 \wr T_2$. Let $A$ be the asynchronous automaton over 
$\dalphabet$ corresponding to the morphism $\eta$ with a fixed choice of $(s_{\text{in}}, q_{\text{in}}) \in \gs \times Q_\pset$
as the initial global state.

Recall that, as seen in the proof of Theorem~\ref{thm:wpp2}, $\eta$ gives rise to canonical asynchronous morphisms
$\phi\colon  \traces \to T_1$ and $\psi\colon  \tracesdtr \to T_2$ as follows: for $a \in \alphabet$ and $s_a \in S_a$,
\[\eta(a) = (m_a, f_a) \implies \phi(a) = m_a \mbox{~and~} \psi((a,s_a)) = f_a(s_a)\]

Let $A_1$ (resp.\ $A_2$) be the asynchronous automaton over $\dalphabet$ (resp.\ $\dtralphabet$)
corresponding to the morphisms $\phi$ (resp.\ $\psi$) with $s_{\text{in}}$ (resp.\ $q_{\text{in}}$) 
as the initial global state. Then it turns out that $A$ is the local cascade product of
$A_1$ and $A_2$. Following lemma summarizes this. 

\begin{lemma}\label{localcascade}
    Let $\eta \colon \traces \to (\{S_i\}, M)  \wr (\{Q_i\}, N)$ canonically give rise to the two
    asynchronous morphism $\varphi \colon \traces \to (\{S_i\}, M)$ and $\psi \colon \tracesdtr \to (\{Q_i\}, N)$ defined by
    $\eta(a) = (m_a, f_a) \implies \varphi(a) = m_a \text{ and } \psi((a,s_a)) = f_a(s_a)$. If $A$ is the asynchronous
    automata corresponding to $\eta$ with $(s_{\text{in}}, q_{\text{in}})$ as its initial state, and $A_1$ (resp. $A_2$) is the
    asynchronous automata corresponding to $\varphi$ (resp. $\psi$) with $s_{\text{in}}$ (resp. $q_{\text{in}}$) as its initial
    state, then $A = A_1 \lc A_2$.
\end{lemma}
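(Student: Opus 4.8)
The plan is to reduce the claimed equality of asynchronous automata to an equality of their local transition functions, and then to read off that equality directly from the wreath-product action together with Lemma~\ref{lem:combinatorial}. All three automata arise from asynchronous morphisms through the construction of Lemma~\ref{lem:zaatm}: $A = A_\eta$, $A_1 = A_\varphi$, and $A_2 = A_\psi$. By the canonical identification of states underlying the asynchronous wreath product, the local state set of $T_1 \wr T_2$ at process $i$ is $S_i \times Q_i$, which is precisely the local state set of $A_1 \lc A_2$; moreover both $A$ and $A_1 \lc A_2$ carry the initial global state $(s_{\text{in}}, q_{\text{in}})$. Since an asynchronous automaton is completely determined by its local state sets, its initial global state, and its family of local transition functions, it remains only to prove that the local transitions coincide for every $a \in \Sigma$.

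First I would identify the local transition of $A$ at $a$. By Lemma~\ref{lem:zaatm} it is the unique map $\zeta_a \colon S_a \times Q_a \to S_a \times Q_a$ whose extension is $\eta(a)$. To compute it, I fix a global state $(s,q) \in S_\pset \times Q_\pset$ and apply the wreath-product action, obtaining $\eta(a)((s,q)) = (m_a(s), f_a(s)(q))$. Since $\eta(a)$ is an $a$-map, Lemma~\ref{lem:combinatorial} guarantees that $m_a$ is an $a$-map in $T_1$ and that $f_a(s)$ is an $a$-map in $T_2$ which depends on $s$ only through $s_a$. Consequently $m_a$ is the extension of the transition $\lt_a$ of $A_1$, so that ${[m_a(s)]}_{\loc(a)} = \lt_a(s_a)$, and $f_a(s) = f_a(s_a) = \psi((a, s_a))$ is the extension of the transition $\lt_{(a,s_a)}$ of $A_2$, so that ${[f_a(s)(q)]}_{\loc(a)} = \lt_{(a,s_a)}(q_a)$.

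Reading off the $\loc(a)$-component of $\eta(a)((s,q))$ now yields $\zeta_a(s_a, q_a) = (\lt_a(s_a), \lt_{(a,s_a)}(q_a))$, which is exactly the transition $\Delta_a(s_a, q_a)$ prescribed for $A_1 \lc A_2$ in Definition~\ref{def:localcascade}. As $a$ was arbitrary, the local transition families agree, and together with the already-matching state sets and initial states this gives $A = A_1 \lc A_2$. The one point requiring care is the dependence of $f_a(s)$ on $s_a$ alone: this is precisely the second assertion of Lemma~\ref{lem:combinatorial}, and it is what makes the definition $\psi((a,s_a)) = f_a(s_a)$ legitimate and lets the $Q$-component of the action match the second coordinate of the cascade transition. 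Once this is in place, the argument is a direct unwinding of the two definitions and carries no further obstacle.
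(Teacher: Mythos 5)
Your proof is correct and follows exactly the route the paper intends: the paper itself dismisses this lemma with ``follows easily from the definitions,'' and your argument is a faithful fleshing-out of that claim, reducing the equality $A = A_1 \lc A_2$ to agreement of the local transition families and reading this off from the wreath-product action together with Lemma~\ref{lem:combinatorial}. You also correctly isolate the one genuinely non-obvious point --- that $f_a(s)$ depends on $s$ only through $s_a$, which is what makes $\psi((a,s_a)) = f_a(s_a)$ well-defined --- and this is precisely the role Lemma~\ref{lem:combinatorial} already plays in the paper's proof of Theorem~\ref{thm:wpp2}, so nothing further is needed.
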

\begin{proof} The proof follows easily from the definitions and skipped.
\end{proof}

\subsection{Global Asynchronous Transducer and its Local Implementation}
Let $\aaa = (\{\ls_i\}, \{\lt_a\}, s_{\text{in}})$ be an asynchronous 
automaton over $\dalphabet$. Recall that the local asynchronous transducer $\chi_A$
preserves the underlying set of events and, at an event, simply records
the previous local states of the processes \emph{participating} in that event.

Now we introduce a natural variant of $\chi_A$ which is called the \emph{global
asynchronous transducer}. In this variant,
at an event, we record the \emph{best global state} that causally precedes the
current event. This is the best global state that the processes participating
in the current event are (collectively) aware of. It is important to note that
the global and local asynchronous transducers coincide in the sequential 
setting.

We first setup some notation.
Based on $\aaa$ and 
$\dalphabet$, we define the alphabet
$\alphabet^{\gs} =  \alphabet \times \gs$
where each letter in $\alphabet$ is extended with global state
 information of $\aaa$. This can naturally be viewed as a distributed
  alphabet $\widetilde{\alphabet^{\gs}}$ where for all $a \in \alphabet$ and $s \in \gs$, we have
  $(a, s) \in \gtralphabet_{i}$ if and only if $a \in \alphabet_i$.

  \begin{definition}[Global Asynchronous Transducer]\label{def:gat}
	  Let $\aaa$ be an asynchronous automaton over $\dalphabet$. The global asynchronous transducer of $\aaa$
 is the map $\theta_\aaa \colon \ab \traces \to \tracesdgtr$ defined as follows.
 If $t = (E, \leq, \labf) \in \traces$, then $\theta_\aaa(t) = (E, \leq, \mu) \in \tracesdgtr$ with the labelling 
$\mu \colon E \to \alphabet \times \gs$ defined by:
\begin{equation*}\forall e \in E, ~\mu(e) = (a,s) \text{ where } a = \labf(e) 
	\text{ and } s = \run_t(\dcset{e}\setminus\{e\})\end{equation*}
\end{definition}
\begin{example} For the same trace $t$ and asynchronous automata $A_\varphi$ from
    Example~\ref{ex:lat}, Figure~\ref{fig:gat-example} shows its global asynchronous
    transducer output $\theta(t)$. Note the difference from Figure~\ref{fig:lat-example}. 
    For example, here the $p_3$-event has process $p_1$ state $2$ in its label
    (which is the best process $p_1$ state in its causal past) even though process
    $p_1$ and process $p_3$ never interact directly.
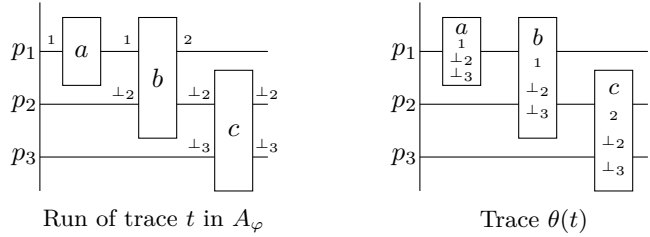
\begin{figure}[ht]
	\centering
	\begin{tikzpicture}
\draw (0,0) -- (0, 5*0.2 + 3*0.5);

\draw (0.3,1.4) rectangle (0.8,2.3);
\draw (1.3,0.7) rectangle (1.8,2.3);
\draw (2.3,0) rectangle (2.8,1.6);

\draw (0,0.2 + 0.5/2) -- (2.3, 0.2 +0.5/2); 
\draw (2.8,0.2 + 0.5/2) -- (3.0,0.2 + 0.5/2); 

\draw (0,2*0.2+0.5+0.5/2) -- (1.3,2*0.2+0.5+0.5/2); 
\draw (1.8,2*0.2+0.5+0.5/2) -- (2.3,2*0.2+0.5+0.5/2); 
\draw (2.8,2*0.2+0.5+0.5/2) -- (3.0,2*0.2+0.5+0.5/2); 

\draw (0, 3*0.2 + 2*0.5 + 0.5/2) -- (0.3, 3*0.2 + 2*0.5 + 0.5/2);
\draw (0.8, 3*0.2 + 2*0.5 + 0.5/2) -- (1.3, 3*0.2 + 2*0.5 + 0.5/2);
\draw (1.8, 3*0.2 + 2*0.5 + 0.5/2) -- (3.0, 3*0.2 + 2*0.5 + 0.5/2);

\node at (0.3 + 0.5/2, 1.6 + 0.5/2) {$a$};
\node at (1.3 + 0.5/2, 0.9 + 0.5 + 0.2/2) {$b$};
\node at (2.3 +  0.5/2, 0.2 + 0.5 + 0.2/2) {$c$};

\node at (-0.2, 0.2 + 0.5/2) {$p_3$};
\node at (-0.2, 2*0.2 + 0.5 + 0.5/2) {$p_2$};
\node at (-0.2, 3*0.2 + 2*0.5 + 0.5/2) {$p_1$};

\node at (0.15, 3*0.2 + 2*0.5 +0.5/2 + 0.15) {\tiny $1$};
\node at (1.15, 3*0.2 + 2*0.5 +0.5/2 + 0.15) {\tiny $1$};
\node at (1.95, 3*0.2 + 2*0.5 +0.5/2 + 0.15) {\tiny $2$};

\node at (1.1, 2*0.2 + 0.5 +0.5/2 + 0.15) {\tiny $\bot_2$};
\node at (2.1, 2*0.2 + 0.5 +0.5/2 + 0.15) {\tiny $\bot_2$};
\node at (2.99, 2*0.2 + 0.5 +0.5/2 + 0.15) {\tiny $\bot_2$};

\node at (2.1, 0.2 + 0.5/2 + 0.15) {\tiny $\bot_3$};
\node at (2.99, 0.2 + 0.5/2 + 0.15) {\tiny $\bot_3$};

\node at (1.5, -0.4) {\small Run of trace $t$ in $A_\varphi$};

\begin{scope}[shift={(5,0)}]
\draw (0,0) -- (0, 5*0.2 + 3*0.5);

\draw (0.3,1.4) rectangle (0.8,2.3);
\draw (1.3,0.7) rectangle (1.8,2.3);
\draw (2.3,0) rectangle (2.8,1.6);

\draw (0,0.2 + 0.5/2) -- (2.3, 0.2 +0.5/2); 
\draw (2.8,0.2 + 0.5/2) -- (3.0,0.2 + 0.5/2); 

\draw (0,2*0.2+0.5+0.5/2) -- (1.3,2*0.2+0.5+0.5/2); 
\draw (1.8,2*0.2+0.5+0.5/2) -- (2.3,2*0.2+0.5+0.5/2); 
\draw (2.8,2*0.2+0.5+0.5/2) -- (3.0,2*0.2+0.5+0.5/2); 

\draw (0, 3*0.2 + 2*0.5 + 0.5/2) -- (0.3, 3*0.2 + 2*0.5 + 0.5/2);
\draw (0.8, 3*0.2 + 2*0.5 + 0.5/2) -- (1.3, 3*0.2 + 2*0.5 + 0.5/2);
\draw (1.8, 3*0.2 + 2*0.5 + 0.5/2) -- (3.0, 3*0.2 + 2*0.5 + 0.5/2);

\node at (0.3 + 0.5/2, 1.6 + 0.5/2 + 0.3) {\small $a$}; 
\node at (0.3 + 0.5/2, 1.6 + 0.5/2 + 0.1) {\tiny $1$}; 
\node at (0.3 + 0.5/2, 1.6 + 0.5/2 -0.1) {\tiny $\bot_2$};
\node at (0.3 + 0.5/2, 1.6 + 0.5/2 - 0.3) {\tiny $\bot_3$};

\node at (1.3 + 0.5/2, 1.6 + 0.5/2 +0.2) {\small $b$};
\node at (1.3 + 0.5/2, 0.9 + 0.5 + 0.2/2 +0.2) {\tiny $1$};
\node at (1.3 + 0.5/2, 0.9 + 0.5/2 +0.2) {\tiny $\bot_2$};
\node at (1.3 + 0.5/2, 0.9 + 0.5/2-0.1) {\tiny $\bot_3$};

\node at (2.3 +  0.5/2, 0.2 + 0.5 + 0.2 +0.5/2 +0.2) {\small $c$};
\node at (2.3 +  0.5/2, 0.2 + 0.5 + 0.2/2+0.2) {\tiny $2$};
\node at (2.3 +  0.5/2, 0.2 + 0.5/2 +0.2) {\tiny $\bot_2$};
\node at (2.3 +  0.5/2, 0.2 + 0.5/2 -0.15) {\tiny $\bot_3$};

\node at (-0.2, 0.2 + 0.5/2) {$p_3$};
\node at (-0.2, 2*0.2 + 0.5 + 0.5/2) {$p_2$};
\node at (-0.2, 3*0.2 + 2*0.5 + 0.5/2) {$p_1$};

\node at (1.5, -0.4) {\small Trace $\theta(t)$};
\end{scope}
\end{tikzpicture}
	\caption{Global asynchronous transducer output on a trace}%
	\label{fig:gat-example}
\end{figure}
\end{example}
Below we give a uniform translation from the automaton $A$ to another automaton
$\gossip(A)$ such that the global
asynchronous transducer of $A$ is realized by the local asynchronous transducer of $\gossip(A)$.
\paragraph*{\textbf{Gossip Automaton}} 
It turns out that one must make crucial use of the latest information that the 
agents have about each other when defining the automaton $\gossip(\aaa)$. It has
been shown in \cite{DBLP:journals/dc/MukundS97} that this information can be kept track of by a deterministic 
asynchronous automaton whose size depends only on $\dalphabet$.

To bring out the relevant properties of this automaton, we start with more notation.
Let $t = (E, \leq, \labf) \in \traces$, $c \in \configs{t}$ and $i,j \in \pset$. 
Then $\view{i}{c}$ is the $i$-view of $c$ and it is defined by:
$\view{i}{c} = {\downarrow}(c \cap E_i)$. We note that $\view{i}{c}$ is also a configuration.
It is the “best”
configuration that the agent $i$ is aware of at $c$. It is easy to see that if $\view{i}{c} \neq \emptyset$,
then there exists $e \in E_i$ such that $\view{i}{c} = \dcset{e}$.
For $P \subset \pset$ and $c \in \configs{t}$ , we let
$\view{P}{c}$ denote the set
$\bigcup_{i \in P} \view{i}{c}$. Once again, $\view{P}{c}$ is a configuration. It represents
the collective knowledge of the processes in $P$ about $c$.

For each subset $P$ of processes, the function $\latest_{t,P} \colon 
\configs{t} \times \pset \to P$ is given by $\latest_{t,P}(c,j) = \ell$
if and only if $\ell$ is the least\footnote{we assume an arbitrary total order on $\pset$} member of $P$ with 
$\view{j}{\view{k}{c}} \subseteq ~ \view{j}{\view{\ell}{c}}$
for all $k$ in $P$. In other words, among the agents in $P$, process $\ell$ 
has the best information about $j$ at $c$.

\begin{theorem}[Gossip Automaton~\cite{DBLP:journals/dc/MukundS97}]\label{thm:gossip}	There exists an asynchronous automaton $\gossip = (\{\Upsilon_i\},
	\{\nabla_a\}, v_{\text{in}})$ such that for each $P = \{ i_1, i_2, 
	\ldots, i_n\}$, there exists a function $\mathrm{gossip}_P \colon
	\Upsilon_{i_1} \times \Upsilon_{i_2} \times \ldots \times \Upsilon_{i_n} \times \pset
	\to P$ with the following property. Let $t \in \traces$, $c \in \configs{t}$,
	$j \in \pset$ and let $\run_t$ be the unique run of $\gossip$ over
	$t$ with $\run_t(c) = v$. Then $\latest_{t,P}(c,j) = \mathrm{gossip}_P(v(i_1),
	\ldots, v(i_n),j)$.
\end{theorem}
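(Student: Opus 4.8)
The plan is to follow the ``latest gossip'' construction of Mukund and Sohoni. The central difficulty is that the datum we ultimately want to recover --- which process in $P$ has seen the most recent $j$-event --- is phrased in terms of actual events of an unbounded trace, whereas an asynchronous automaton has only finitely many local states per process. The first step is therefore to reduce the target quantity to \emph{relative} information that is intrinsically bounded. For a configuration $c$ and processes $i,j$, write $e_i^j(c)$ for the $\leq$-maximal $j$-event in $\view{i}{c}$ (undefined if $\view{i}{c}$ contains no $j$-event). Since $E_j$ is totally ordered, $\view{j}{\view{k}{c}} = \dcset{e_k^j(c)}$, and hence $\view{j}{\view{k}{c}} \subseteq \view{j}{\view{\ell}{c}}$ holds exactly when $e_k^j(c) \leq e_\ell^j(c)$ in $E_j$. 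Thus $\latest_{t,P}(c,j)$ is simply the least $\ell \in P$ for which $e_\ell^j(c)$ is $\leq$-maximal among $\{e_k^j(c) : k \in P\}$, and the whole theorem reduces to building a bounded asynchronous automaton from whose $P$-states one can, for every $j$, compare the events $e_k^j(c)$ for $k \in P$.

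Next I would fix the information each process stores. Following Mukund--Sohoni, the local $i$-state records the \emph{primary} information of $i$ --- the mutual ordering and coincidence pattern of the events $\{e_i^j(c)\}_{j \in \pset}$ --- together with one level of \emph{secondary} information, namely, for each $j$, the primary pattern associated with the configuration $\dcset{e_i^j(c)}$ (what the owner of $i$'s latest $j$-event knew at that point). Crucially this is stored not as events but as a finite relational tag, so that each $\Upsilon_i$ is finite and depends only on $\dalphabet$. Two facts must then be proved. \textbf{Local updatability:} when an action $a$ occurs, the new events $e_i^j$ for $i \in \loc(a)$ are obtained by pooling the old latest $j$-events of the participants (and inserting the fresh $a$-event when appropriate), and the new primary/secondary tags of the participants can be recomputed purely from their old tags; this defines the transition $\nabla_a$ on $\loc(a)$-states and yields an honest asynchronous automaton. \textbf{Sufficiency:} from the primary and secondary tags of any two processes $k,\ell$ one can decide, for each $j$, whether $e_k^j(c) \leq e_\ell^j(c)$.

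With these in hand the automaton $\gossip = (\{\Upsilon_i\}, \{\nabla_a\}, v_{\text{in}})$ is defined, with $v_{\text{in}}$ the tag of the empty configuration. A routine induction on $|c|$ along the action-based transition relation $\at{t}$ shows that, for the unique run $\run_t$, the tag $\run_t(c)(i)$ equals the true primary and secondary information of $i$ at $c$. Finally I would define $\mathrm{gossip}_P(v(i_1),\ldots,v(i_n),j)$ to be the least $\ell \in P$ such that, applying the sufficiency fact pairwise to the tags $v(i_1),\ldots,v(i_n)$, the event $e_\ell^j(c)$ dominates every $e_k^j(c)$ with $k \in P$. By the reduction of the first paragraph this equals $\latest_{t,P}(c,j)$, as required.

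I expect the sufficiency-and-update lemma to be the main obstacle. One must show that the bounded relational tags genuinely determine, without any access to absolute event identities, the comparison $e_k^j(c) \leq e_\ell^j(c)$, and that these comparisons are preserved and recomputable when participants synchronise. The delicate case is when $k$ and $\ell$ have their latest $j$-knowledge routed through different intermediary processes; resolving it is exactly where the secondary information is needed, and verifying both its correctness and its closure under the local update is the combinatorial heart of the argument.
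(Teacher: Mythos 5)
The paper does not prove this theorem at all: it is imported verbatim from Mukund and Sohoni \cite{DBLP:journals/dc/MukundS97}, and your proposal is a faithful outline of exactly that cited construction --- the reduction of $\latest_{t,P}(c,j)$ to pairwise comparisons of the latest $j$-events $e_k^j(c)$, finite primary/secondary information as local states, and the update-and-comparison lemmas defining $\nabla_a$ and $\mathrm{gossip}_P$. So your approach coincides with the paper's (i.e., with its reference), the only caveat being that the sufficiency-and-update lemma you defer is where essentially all of the work of \cite{DBLP:journals/dc/MukundS97} lies, and you have correctly located it as the combinatorial heart rather than glossed over it.
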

Henceforth, we refer to $\gossip$ as the \emph{gossip automaton}.

\paragraph*{\textbf{Translation}}
Now we describe the construction of $\gossip(A)$ from $A$. Roughly speaking,
in the automaton $\gossip(A)$, each process $i$ keeps track of the best global
state of $A$ that it is aware of, and its local gossip state in the gossip automaton. 
When processes synchronize, they use the joint gossip-state information to
correctly update the best global state that they are aware of at the synchronizing
event. Of course, they also update their own local gossip states. 

Recall that $\aaa = (\{\ls_i\}, \{\lt_a\}, s_{\text{in}})$. For each $i \in \pset$, let
$\ls_i^\gossip = \Upsilon_i \times \gs$. Further, let $P = \{i_1, i_2, \ldots, i_n\}$. 
We define the function $\globalstate_P \colon \ls_P^\gossip \to \gs$ as follows. Let
$(v_{i_1}, s_{i_1}) \in \ls_{i_1}^\gossip$, $(v_{i_2}, s_{i_2}) \in \ls_{i_2}^\gossip$,
$\ldots$, $(v_{i_n}, s_{i_n}) \in \ls_{i_n}^\gossip$. Then 
\begin{equation*}
	\globalstate_P(((v_{i_1}, s_{i_1}), (v_{i_2}, s_{i_2}),
	\ldots, (v_{i_n}, s_{i_n}) )) = s \in \gs
\end{equation*}
where, for each $i \in \pset$, 
$$s(i) = s_{\ell}(i) \mbox{~with~}  \ell = \mathrm{gossip}_P(v_{i_1}, \ab 
v_{i_2}, \ldots, v_{i_n}, i)$$

We define the asynchronous automaton $\gossip(\aaa)$ to be $(\{\ls_i^\gossip\},\ab 
\{\lt_a^\gossip\}, s_{\text{in}}^\gossip)$. The initial state $s_{\text{in}}^\gossip$
is defined by letting, for each $i \in \pset$, $s_{\text{in}}^\gossip(i) = v_{\text{in}}(i) 
\times s_{\text{in}}$. Now we describe the transition functions $\{\lt_a^\gossip 
\colon \ls_a^\gossip \to \ls_a^\gossip\}_{a \in \alphabet}$. Let $a \in \alphabet$
with $\loc(a) = \{i_1, i_2, \ldots, i_n\}$ and $s_a^{\gossip} \in S_a^{\gossip}$ with
$s_a^{\gossip}(i) = (v_i, s_i)$ for each $i \in \loc(a)$. Suppose $\nabla_a((v_{i_1}, v_{i_2},
\ldots, v_{i_n})) = ((v'_{i_1}, v'_{i_2},\ldots, v'_{i_n}))$. Now we set $\lt_a^{\gossip}
(s_a^{\gossip}) = {s'_a}^{\gossip}$ , such that, for each $i \in \loc(a)$, ${s'_a}^{\gossip}
(i) = (v'_i, s')$ where $s' = \gt_a(s)$ (recall that $\gt_a$ is the global transition function 
of $\aaa$) and $s = \globalstate_{\loc(a)}(s_a^{\gossip})$.

The next proposition says that, the best global state of $A$ that a subset $P$ of processes
are collectively aware of, can be recovered from the {\em local} $P$-joint state of $\gossip(A)$.
Thanks to Theorem~\ref{thm:gossip}, the proof is not very difficult albeit notationally 
somewhat cumbersome.
We skip the proof here.
\begin{proposition}
	With the notation above, the family of functions $\{\globalstate_P\}_{P \subset \pset}$ has the following property. Let 
	$t \in \traces$ and $c \in \configs{t}$. Further, let $\run_t$ and $\run_t^\gossip$ be
	the unique runs of $\aaa$ and $\gossip(\aaa)$ over $t$ with $\run_t^\gossip(c) = s^\gossip$.
	Then $\run_t(\view{P}{c}) = \globalstate_P(s^\gossip_P)$, for each $P \subset \pset$.
\end{proposition}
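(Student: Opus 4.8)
The plan is to prove the equality by induction on the size of the configuration $c$, establishing it for \emph{all} $P \subset \pset$ simultaneously, and to reduce the general-$P$ case to the singleton cases at the \emph{same} configuration via the gossip automaton. Throughout I write $s^\gossip(i) = (v_i, s_i)$ for the local $\gossip(\aaa)$-state at process $i$, so the $v_i$ are exactly the gossip-automaton components and the $s_i \in \gs$ are the recorded best global states. First I would isolate two elementary trace facts. (1) The $j$-component of a run depends only on the $j$-view: for any configuration $c'$, $\run_t(c')(j) = \run_t(\view{j}{c'})(j)$, since $c' \setminus \view{j}{c'}$ contains no $j$-event and $\gt_a$ leaves the $j$-component untouched whenever $j \notin \loc(a)$. (2) Downward closure commutes with union, so $\view{j}{\view{P}{c}} = \bigcup_{k \in P} \view{j}{\view{k}{c}}$; and by the definition of $\latest_{t,P}(c,j) = \ell$ this union equals $\view{j}{\view{\ell}{c}}$, because $\view{j}{\view{\ell}{c}}$ contains every $\view{j}{\view{k}{c}}$ and $\ell \in P$.

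Granting the singleton equalities $s_i = \run_t(\view{i}{c})$ for all $i$ at a fixed $c$, I would derive the general-$P$ equality at that same $c$ with no further induction. Fix $P = \{i_1, \dots, i_n\}$ and $j \in \pset$. By Theorem~\ref{thm:gossip} the index $\ell := \mathrm{gossip}_P(v_{i_1}, \dots, v_{i_n}, j)$ chosen inside $\globalstate_P$ equals $\latest_{t,P}(c,j)$. Hence $\globalstate_P(s^\gossip_P)(j) = s_\ell(j) = \run_t(\view{\ell}{c})(j)$, and applying facts (1) and (2) gives $\run_t(\view{\ell}{c})(j) = \run_t(\view{j}{\view{\ell}{c}})(j) = \run_t(\view{j}{\view{P}{c}})(j) = \run_t(\view{P}{c})(j)$. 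As $j$ is arbitrary, $\globalstate_P(s^\gossip_P) = \run_t(\view{P}{c})$.

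It remains to prove the singleton statements $s_i = \run_t(\view{i}{c})$ by induction on $|c|$. The base case $c = \emptyset$ is immediate from the definition of $s_{\text{in}}^\gossip$. For the step write $c = c' \cup \{e\}$ with $e$ maximal, $a = \labf(e)$ and $P_0 = \loc(a)$. For $i \notin P_0$ the $\gossip(\aaa)$-transition does not touch process $i$ and $\view{i}{c} = \view{i}{c'}$, so the claim transfers verbatim from the induction hypothesis. For $i \in P_0$, the new global component recorded at $i$ is $s' = \gt_a(s)$ with $s = \globalstate_{P_0}(s^\gossip_{P_0} \text{ at } c')$; here I invoke the \emph{general}-$P_0$ equality at $c'$ (available from the induction hypothesis through the reduction of the previous paragraph applied at $c'$) to get $s = \run_t(\view{P_0}{c'})$. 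The remaining combinatorial point is that $\dcset e \setminus \{e\} = \view{P_0}{c'}$: every immediate predecessor of $e$ lies on some process of $P_0 = \loc(e)$, and for $i \in P_0$ the maximal $i$-event below $e$ is exactly the maximal $i$-event of $c'$, so the two downsets coincide. Consequently $\run_t(\view{i}{c}) = \run_t(\dcset e) = \gt_a(\run_t(\dcset e \setminus \{e\})) = \gt_a(s) = s'$, as required.

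The main obstacle, and the step I would be most careful about, is the coordination inside a single inductive step: to compute the updated participant states at $c$ I must already have the general-$P_0$ equality at $c'$, which I obtain only by first having the singleton equalities at $c'$ and then applying the gossip-based reduction. Arranging the induction so that the implications ``singletons at $c'$ $\Rightarrow$ general at $c'$ $\Rightarrow$ singletons at $c$ $\Rightarrow$ general at $c$'' stay acyclic — together with the clean identification $\dcset e \setminus \{e\} = \view{\loc(e)}{c'}$ and the correct extraction of $\latest$ from the gossip automaton — is where the real content lies; the remaining manipulations are routine.
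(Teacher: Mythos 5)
The paper offers no proof of this proposition to compare against --- it explicitly skips it, remarking only that the argument follows from Theorem~\ref{thm:gossip} and is ``notationally somewhat cumbersome.'' Your proof is correct and is evidently the intended argument: an induction on $|c|$ for the singleton views, with the general-$P$ case reduced to the singleton case at the same configuration via $\latest_{t,P}$ and the gossip automaton. I checked the key steps and they hold: fact (1) is sound because $c'\setminus\view{j}{c'}$ contains no $j$-events and any linearization of it applies only transitions $\gt_a$ with $j\notin\loc(a)$; fact (2) is sound because downward closure commutes with union and $\latest_{t,P}(c,j)=\ell$ makes $\view{j}{\view{\ell}{c}}$ the maximum of the $\view{j}{\view{k}{c}}$; the identity $\dcset{e}\setminus\{e\}=\view{\loc(e)}{c'}$ follows, as you say, from the trace axiom that immediate predecessors of $e$ are dependent on $e$ and hence share a process with it; and your ordering ``singletons at $c'$ $\Rightarrow$ general-$P_0$ at $c'$ $\Rightarrow$ singletons at $c$'' is acyclic, which is precisely the coordination the inductive step needs, since $\lt_a^\gossip$ computes the new recorded state as $\gt_a(\globalstate_{\loc(a)}(\cdot))$ from the joint $\loc(a)$-state at $c'$. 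One point you use silently and should state: to invoke Theorem~\ref{thm:gossip} you need that the $\Upsilon$-components $v_i$ of $\run_t^\gossip(c)$ really constitute the run of the gossip automaton $\gossip$ at $c$; this is immediate because the first components of $\lt_a^\gossip$ evolve exactly by $\nabla_a$, but it is a lemma, not a definition. With that sentence added, your write-up is a complete proof of the proposition the paper left unproved.
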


Let $\globalstate \colon \tralphabetgossip \to \gtralphabet$ be defined as follows.
For $a \in \alphabet$ and $s^\gossip_a \in S^\gossip_a$, we set 
\begin{equation*}\globalstate((a,s^\gossip_a)) =
(a, \globalstate_{\loc(a)}(s^\gossip_a))\end{equation*}

Now we are ready to state one of the main results of this section. It asserts that
the global asynchronous transducer output of $A$ can be obtained from the local
asynchronous transducer output of $\gossip(A)$ by a simple relabelling letter-to-letter morphism
given by the $\globalstate$ function. 
Its proof is immediate and skipped.

\begin{theorem}\label{thm:local-simulates-global}
	Let $A$ be an asynchronous automaton, let $\theta_A \colon \ab \traces \to \tracesdgtr$ be its global asynchronous transducer
	and let $\chi_{\mathcal{G}(A)} \colon \ab \traces \to TR(\wt{\tralphabetgossip})$ be the local asynchronous transducer of
	$\gossip(A)$. Then if $t = (E,\le,\lambda) \in \traces$ and $\chi_{\gossip(A)}(t) = (E,\le,\mu)$, then $\theta_A(t) 
	= (E, \leq, \nu) \in \tracesdgtr$ where, for $e \in E$, $\nu(e) = \globalstate (\mu(e))$
\end{theorem}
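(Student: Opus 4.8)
The plan is to verify the labelled-poset identity $\theta_A(t) = (E,\le,\nu)$ directly, event by event. Since $\chi_{\gossip(A)}(t)$ and $\theta_A(t)$ carry the same underlying poset $(E,\le)$ (both transducers preserve the events and the ordering by construction), it suffices to check that $\nu(e) = \globalstate(\mu(e))$ holds for each $e \in E$. First I would unfold both labellings at a fixed event, reduce the desired equality to a statement about the global state of $A$ reached on a single configuration, and then close the argument by combining the Proposition stated just above with one elementary fact about views in traces.

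Concretely, fix $e \in E$ and set $a = \labf(e)$ and $c = \dcset{e}\setminus\{e\}$; this $c$ is a configuration, namely the strict causal past of $e$ (it is downward closed because $e$ is the maximal element of $\dcset{e}$). By the definition of the local asynchronous transducer applied to $\gossip(A)$, we have $\mu(e) = (a, s^\gossip_{\loc(a)})$, where $s^\gossip = \run_t^\gossip(c)$ is the state of $\gossip(A)$ on $c$. Applying the relabelling morphism then gives $\globalstate(\mu(e)) = (a,\ \globalstate_{\loc(a)}(s^\gossip_{\loc(a)}))$, while by Definition~\ref{def:gat} we have $\nu(e) = (a,\ \run_t(c))$. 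Thus the theorem reduces to the single equality $\globalstate_{\loc(a)}(s^\gossip_{\loc(a)}) = \run_t(c)$. The preceding Proposition, applied with $P = \loc(a)$, yields $\globalstate_{\loc(a)}(s^\gossip_{\loc(a)}) = \run_t(\view{\loc(a)}{c})$, so the only thing left to prove is $\run_t(\view{\loc(a)}{c}) = \run_t(c)$, for which it suffices to show $\view{\loc(a)}{c} = c$.

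The hard part — really the only non-routine step — is this combinatorial identity $\view{\loc(a)}{c} = c$: the joint view of the participating processes $\loc(a) = \loc(e)$, taken at the strict past $c = \dcset{e}\setminus\{e\}$, recovers all of $c$. The inclusion $\view{\loc(a)}{c} \subseteq c$ is immediate, since $\view{P}{c}$ is always a sub-configuration of $c$. For the reverse inclusion I would argue that every maximal element $f$ of $c$ is an immediate predecessor of $e$ (if $f < g < e$ for some $g$, then $g \in c$ would contradict maximality of $f$), so that $f \isucc e$ and hence, by the trace axiom, $(\labf(f),\labf(e)) \in \dep$; this forces some process $i \in \loc(f)\cap\loc(e)$, giving $f \in E_i \cap c$ with $i \in \loc(a)$ and therefore $f \in \view{i}{c} \subseteq \view{\loc(a)}{c}$. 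Because $c$ is the downward closure of its maximal elements and $\view{\loc(a)}{c}$ is downward closed, we conclude $c \subseteq \view{\loc(a)}{c}$. Once $\view{\loc(a)}{c} = c$ is established, applying $\run_t$ completes the reduction above, and hence the theorem; all remaining steps are pure definition-unfolding and present no difficulty.
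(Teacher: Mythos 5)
Your proof is correct and takes exactly the route the paper intends: the paper declares this theorem's proof ``immediate and skipped,'' the intended argument being precisely your reduction via the preceding Proposition with $P=\loc(e)$. You additionally supply the one step that makes ``immediate'' rigorous --- the view identity $\view{\loc(e)}{\dcset{e}\setminus\{e\}} = \dcset{e}\setminus\{e\}$ --- and your argument for it (maximal elements of the strict past are $\isucc$-predecessors of $e$, hence dependent on $\labf(e)$ by the trace axiom, so they lie in some $\view{i}{c}$ with $i\in\loc(e)$, and downward closure finishes it) is sound, including the degenerate case $c=\emptyset$.
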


\subsection{Global Cascade Product}
Now we are ready to define a cascade model which uses
the global asynchronous transducer.

\begin{definition}[Operational Global Cascade Product]\label{def:gc}
Let $\aaa_1 = (\{\ls_i\}, \{\lt_a\}, 
s_{\text{in}})$ be an asynchronous automaton over $\dalphabet$, and $\aaa_2 = 
(\{Q_i\}, \ab \{\lt_{(a,s)}\}, q_{\text{in}})$ be an asynchronous automaton over 
$\wt{\alphabet^{\gs}}$. Then
	their \emph{operational} global cascade product, denoted by $\aaa_1 \gc \aaa_2$, is a cascade model where, 
for any input trace
$t \in \traces$, $\aaa_1$ runs on $t$ (and `outputs'  $\theta_{\aaa_1}(t)$) and $\aaa_2$ 
runs on $\theta_{\aaa_1}(t)$. See Figure~\ref{fig:global-cascade}.
\end{definition}
\begin{figure}[ht]
\centering
\begin{tikzpicture}

\draw (2.8,0) rectangle (4,1.2);
\node at (3.4,0.6) {$A_1$};

\draw (4, 0.3) -- (4.3, 0.3);
\draw[->] (4.3, 0.3) -- (4.3,-0.2);
\node at (4.3,-0.35) {\small $s$};
\draw[-stealth'] (2.3, 0.6) -- (2.8,0.6) node [midway, above] {$t$};

\draw (5.4,0) rectangle (6.6,1.2);
\node at (6,0.6) {$A_2$};

\draw (6.6, 0.3) -- (6.9, 0.3);
\draw[->] (6.9, 0.3) -- (6.9,-0.2);
\node at (6.9,-0.35) {\small $q$};
\draw[-stealth'] (4, 0.6) -- (5.4,0.6) node [midway, above] {\small $\theta_{A_1}(t)$};
\end{tikzpicture}%
\caption{Operational view of global cascade product}
\label{fig:global-cascade}
\end{figure}
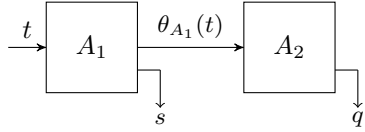
Note that $A_1 \gc A_2$  is not, a priori, an asynchronous automaton, but 
in view of the discussion in the preceeding subsection,
it is simulated by the asynchronous automaton $\gossip(A_1) \circ_{\ell} A_2$. 

For simplicity, we view 
$A_1 \gc A_2$ as an  (asynchronous) 
automaton with same state space as $A_1 \lc A_2$, and extend the notions of run, acceptance
etc.\ to it in a natural way.
Thus, a run of $A_1 \circ_g A_2$ on $t \in \traces$ is a tuple $(\rho_t, \rho_{\theta_{A_1}(t)})$.
As $t$ and $\theta_{A_1}(t)$ have the same set of underlying events with identical causality and concurrent relationships,
both $t$ and $\theta_{A_1}(t)$ admit the same set of configurations. In view of this, we abuse the notation
slightly and write the run as $(\rho^1_t, \rho^2_t)$ where $\rho^1_t\colon  \configs{t} \to S_\pset$
and $\rho^2_t\colon  \configs{t} \to Q_\pset$. Similarly, the label of any event $e$ in $t$ 
	(resp.\ $\theta_{\aaa_1}(t)$) is denoted by $\labf^1(e)$ (resp.\ $\labf^2(e)$). 
	We also use $A_1 \circ_g A_2$ to accept a language by specifying a final subset (of global states) $F \subset \gs \times Q_\pset$.
As expected, an input trace $t = (E, \leq, \lambda) \in \traces$ is accepted if $(\rho^1_t(E), \rho^2_t(E)) \in F$.
Henceforth, we refer to the operational global cascade product as the simply global cascade product.

We consider an asynchronous automaton $A$ 
as a \emph{base} global cascade product.
Now we use structural induction to define the \emph{binary} global cascade product $B_1 \circ_g B_2$ where $B_1$ and $B_2$ are themselves
global cascade products: global states of $B_1 \gc B_2$ is the product of global states of $B_1$ and $B_2$; its run on
$t$ consists of a run of $B_1$ on t and a run of $B_2$ on $\theta_{B_1}(t)$; its global asynchronous transducer is
$\theta_{B_1} \circ \theta_{B_2}$. We also define the notion of a language being accepted by $B_1 \gc B_2$ as expected.

It is easy to see that, for automata $A_1, A_2, A_3$, the global cascade products 
$(A_1 \circ_g A_2) \circ_g A_3$ and $A_1 \circ_g (A_2 \circ_g A_3)$ can be identified naturally in terms of global states, runs, accepted
languages etc. In this sense, the global cascade product is associative. See Figure~\ref{fig:app-global-cascade-associative} for an intuitive explanation of this associativity.

\begin{figure*}[t]
\centering
\begin{subfigure}{\textwidth}
\centering
	\begin{tikzpicture}
\draw (2.6,0) rectangle (3.8,1.2);
\node at (3.2,0.6) {$A_1$};

\draw (3.8, 0.3) -- (4.1, 0.3);
\draw[->] (4.1, 0.3) -- (4.1,-0.2);
\node at (4.1,-0.35) {\small $s'$};
\draw[-stealth'] (2.3, 0.6) -- (2.6,0.6);

\draw (4.8,0) rectangle (6,1.2);
\node at (5.4,0.6) {$A_2$};

\draw (6, 0.3) -- (6.3, 0.3);
\draw[->] (6.3, 0.3) -- (6.3,-0.2);
\node at (6.3,-0.35) {\small $q'$};
\draw[-stealth'] (3.8, 0.6) -- (4.8,0.6) node [midway, above] {\small $\theta_{A_1}(t)$};

\draw (2.3, -0.6) rectangle (6.6, 1.6);
\draw[-stealth'] (6, 0.6) -- (9,0.6);
\node at (7.8, 0.6) [above] {\small $\theta_{A_1}\circ \theta_{A_2}(t) = t'$};

\draw (9,0) rectangle (10.2,1.2);
\node at (9.6,0.6) {$A_3$};
\draw (10.2, 0.3) -- (10.5,0.3);
\draw[->] (10.5,0.3) -- (10.5, -0.2);
\node at (10.5, -0.35) {\small $r'$};

\draw (6.6, 0.3) -- (7, 0.3);
\draw[->] (7, 0.3) -- (7, -0.2);
\node at (7, -0.35) [xshift=2mm] {\small $(s',q')$};

\draw (1.8, 0.6) -- (2.3, 0.6) node [midway, above] {\small $t$};

\draw[-stealth'] (10.2, 0.6) -- (12,0.6) node [midway, above] {\small $\theta_{A_3}(t')$};
\end{tikzpicture}
\caption{$(A_1 \circ_g A_2) \circ_g A_3$}%
\vspace{1cm}%
\label{fig:global-cascade-associative}
\end{subfigure}
\begin{subfigure}{\textwidth}
	\centering
\begin{tikzpicture}
\draw (12,0) rectangle (13.2,1.2);
\node at (12.6,0.6) {$A_1$};

\draw (13.2, 0.3) -- (13.5, 0.3);
\draw[->] (13.5, 0.3) -- (13.5,-0.2);
\node at (13.5,-0.35) {\small $s''$};
\draw[-stealth'] (11.5, 0.6) -- (12,0.6) node [midway, above] {$t$};

\draw (15.8,0) rectangle (17,1.2);
\node at (16.4,0.6) {$A_2$};

\draw (17, 0.3) -- (17.3, 0.3);
\draw[->] (17.3, 0.3) -- (17.3,-0.2);
\node at (17.3,-0.35) {\small $q''$};
\draw[-stealth'] (17, 0.6) -- (18.4,0.6) node [midway, above]
 {\small $\theta_{A_2}(t'')$};

\draw (18.4,0) rectangle (19.6,1.2);
\node at (19,0.6) {$A_3$};
\draw (19.6, 0.3) -- (19.9,0.3);
\draw[->] (19.9,0.3) -- (19.9, -0.2);
\node at (19.9, -0.35) {\small $r''$};

\draw (15, -0.6) rectangle (20.2, 1.6);
\draw[-stealth'] (13.2, 0.6) -- (15,0.6) node [midway, above] 
{\small $\theta_{A_1}(t) = t''$};
\draw[-stealth'] (15, 0.6) -- (15.8,0.6) node [midway, above] 
{\small $t''$};

\draw (20.2, 0.3) -- (20.6, 0.3);
\draw[->] (20.6, 0.3) -- (20.6, -0.2);
\node at (20.6, -0.35) [xshift=2mm] {\small $(q'',r'')$};

\draw[-stealth'] (19.6,0.6) -- (22.2, 0.6);
\node at (21.2, 0.6) [above] {$\theta_{A_2} \circ \theta_{A_3}(t'')$};
\end{tikzpicture}
\caption{$A_1 \circ_g (A_2 \circ_g A_3)$}%
\vspace{1cm}%
\label{fig:global-cascade-associative2}
\end{subfigure}
\begin{subfigure}{\textwidth}
	\centering
\begin{tikzpicture}
\draw (2.8,0) rectangle (4,1.2);
\node at (3.4,0.6) {$A_1$};

\draw (4, 0.3) -- (4.3, 0.3);
\draw[->] (4.3, 0.3) -- (4.3,-0.2);
\node at (4.3,-0.35) {\small $s = s' = s''$};
\draw[-stealth'] (2.3, 0.6) -- (2.8,0.6) node [midway, above] {$t$};

\draw (5.4,0) rectangle (6.6,1.2);
\node at (6,0.6) {$A_2$};

\draw (6.6, 0.3) -- (6.9, 0.3);
\draw[->] (6.9, 0.3) -- (6.9,-0.2);
\node at (6.9,-0.35) {\small $q = q' = q''$};
\draw[-stealth'] (4, 0.6) -- (5.4,0.6) node [midway, above] {\small $\theta_{A_1}(t)$};

\draw (8.6,0) rectangle (9.8,1.2);
\node at (9.2,0.6) {$A_3$};
\draw (9.8, 0.3) -- (10.1,0.3);
\draw[->] (10.1,0.3) -- (10.1, -0.2);
\node at (10.1, -0.35) {\small $r = r' = r''$};

\draw[-stealth'] (6.6, 0.6) -- (8.6,0.6) node [midway, above] 
{$\theta_{A_1} \circ \theta_{A_2}(t)$};

\draw[-stealth'] (9.8, 0.6) -- (13,0.6) node [midway, above] 
{$\theta_{A_1} \circ \theta_{A_2} \circ \theta_{A_3}(t)$};
\end{tikzpicture}
\caption{$A_1 \circ_g A_2 \circ_g A_3$}%
\vspace{1cm}%
\label{fig:global-cascade-associative3}
\end{subfigure}
\caption{Associativity of global cascade product}%
\label{fig:app-global-cascade-associative}
\end{figure*}
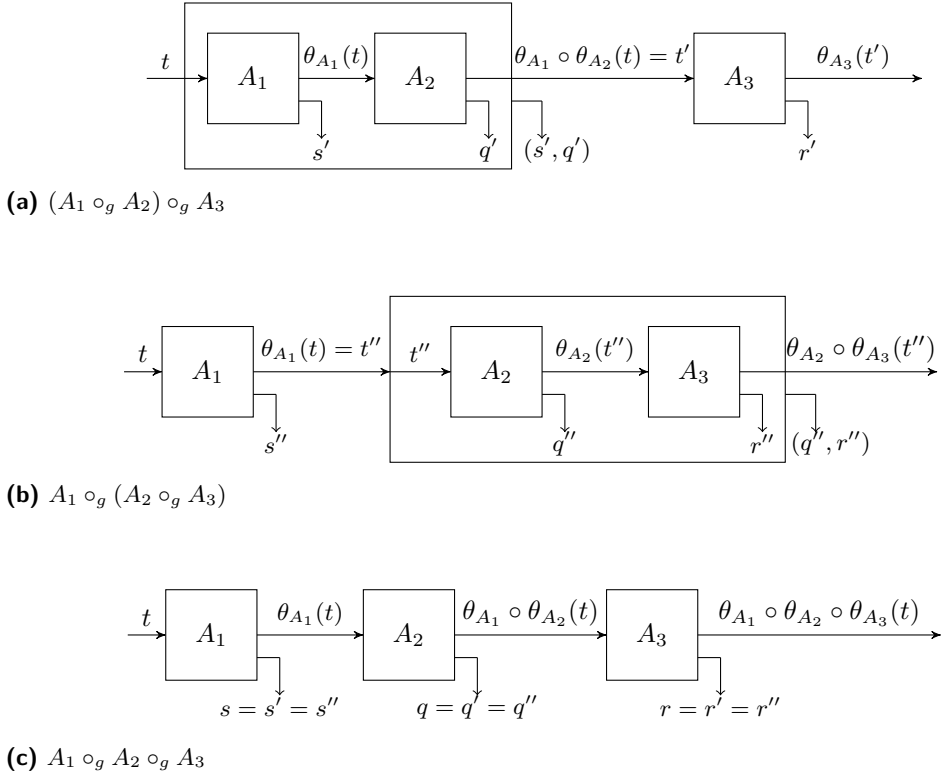

Thanks to this, we can talk about 
the global cascade product of a \emph{sequence} of asynchronous automata. The earlier 
notations, such as $(\rho_t^1, \rho_t^2)$ of a run etc., are easily extended. In this
general setting, the distributed alphabet for the $m$-th asynchronous automata in the
sequence is denoted by $\widetilde{\Sigma^m}$. So, for instance, in Definition~\ref{def:gc}, 
we have $\widetilde{\Sigma^1} = \widetilde{\Sigma}$ and $\widetilde{\Sigma^2} = 
\widetilde{\Sigma^{\gs}}$.

The following \emph{global cascade product principle} is an easy consequence of the definitions.
\begin{theorem}\label{thm:gcwpp2}
	Let $A$ (resp.\ $B$) be a global cascade product over $\dalphabet$ (resp.\ $\dgtralphabet$),
	where $\gs$ is the set of global states of $A$. 
        Then any language $L \subseteq \traces$ accepted by $\aaa \gc \bbb$ 
is a finite union of languages of the form $U \cap \theta_{\aaa}^{-1}(V)$ where $U \subseteq \traces$ is accepted by $\aaa$,
and $V \subseteq \tracesdgtr$ is accepted by $\bbb$. 
\end{theorem}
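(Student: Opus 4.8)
The plan is to follow the blueprint of the local wreath product principle (Theorem~\ref{thm:wpp2}), the only difference being that the role played there by the local transducer $\chi_A$ is now taken by the global transducer $\theta_{\aaa}$. The single structural fact I would exploit is definitional: by the way a run of $\aaa \gc \bbb$ on a trace $t = (E,\leq,\lambda)$ was set up above, it splits as a pair $(\rho^1_t, \rho^2_t)$ with $\rho^1_t \colon \configs{t} \to \gs$ the run of $\aaa$ on $t$, and $\rho^2_t \colon \configs{t} \to Q_\pset$ \emph{equal to the run of $\bbb$ on the transduced trace} $\theta_{\aaa}(t)$. The acceptance condition therefore decouples into a condition on $\aaa$ alone and a condition on $\bbb$ reading $\theta_{\aaa}(t)$, and the theorem is essentially a restatement of this decoupling.

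Concretely, suppose $L$ is accepted by $\aaa \gc \bbb$ with a finite set of final global states $F \subseteq \gs \times Q_\pset$. First I would reduce to a single accepting pair: writing $L = \bigcup_{(s_f,q_f)\in F} L_{(s_f,q_f)}$ with $L_{(s_f,q_f)} = \{t \in \traces \mid \rho^1_t(E) = s_f \text{ and } \rho^2_t(E) = q_f\}$, it suffices to handle one pair $(s_f,q_f)$, since a finite union of languages of the claimed form is again of that form. For a fixed pair, set $U = \{t \in \traces \mid \rho^1_t(E) = s_f\}$, which is accepted by $\aaa$ with final set $\{s_f\}$, and set $V = \{t' \in \tracesdgtr \mid \rho_{t'}(E') = q_f\}$, the language accepted by $\bbb$ with final set $\{q_f\}$, where $\rho_{t'}$ denotes the run of $\bbb$ on $t'$. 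The key identity is that $\rho^2_t$ is precisely the run of $\bbb$ on $\theta_{\aaa}(t)$, so $\rho^2_t(E) = q_f$ holds if and only if $\theta_{\aaa}(t) \in V$, i.e.\ $t \in \theta_{\aaa}^{-1}(V)$. Hence $L_{(s_f,q_f)} = U \cap \theta_{\aaa}^{-1}(V)$, and taking the union over $F$ exhibits $L$ as a finite union of languages of the desired form.

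Since every step is forced by the definitions, I do not expect a genuine obstacle here; the work is purely in the bookkeeping. The two points that require care are, first, keeping the component runs $\rho^1_t$ and $\rho^2_t$ distinct and invoking the identity $\rho^2_t = \rho_{\theta_{\aaa}(t)}$ at the right moment, and second, tracking types correctly: $\bbb$ reads over the extended distributed alphabet $\dgtralphabet$, so $V \subseteq \tracesdgtr$ while $\theta_{\aaa}^{-1}(V) \subseteq \traces$. I would also remark, to justify the phrasing, that because $\aaa$ and $\bbb$ are themselves global cascade products, the words ``accepted by $\aaa$'' and ``accepted by $\bbb$'' refer to the inductively extended notion of acceptance introduced above; this broadened meaning is exactly what makes $U$ and $V$ admissible, and it does not alter the argument.
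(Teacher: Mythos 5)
Your proposal is correct and is exactly the argument the paper intends: the paper states Theorem~\ref{thm:gcwpp2} as ``an easy consequence of the definitions'' without writing out a proof, and your decoupling of the run of $\aaa \gc \bbb$ into $\rho^1_t$ (the run of $\aaa$ on $t$) and $\rho^2_t = \rho_{\theta_\aaa(t)}$ (the run of $\bbb$ on $\theta_\aaa(t)$), followed by the per-final-state decomposition $L_{(s_f,q_f)} = U \cap \theta_\aaa^{-1}(V)$, is precisely that definitional unwinding. Your closing remark about the inductively extended notion of acceptance for composite $\aaa$ and $\bbb$ is also the correct reading of the statement.
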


\section{Temporal Logics, Aperiodic Trace Languages \& Cascade Products}\label{sec:aperiodic}
An automata-theoretic consequence of Theorem~\ref{thm:acyclickr} is that any aperiodic trace language
(that is, a trace language recognized by an aperiodic
monoid) over an acyclic architecture
is accepted by a local cascade product of localized two-state reset automata. 
In this section, we generalize this result to any distributed alphabet, but using global cascade product of the same distributed resets. We call these automata $U_2[\ell]$ as well; that is,
in this section, $U_2[\ell]$ refers to an asynchronous automata whose transition atm (recall discussion of Lemma~\ref{lem:atmza})
is a sub-atm of $U_2[\ell]$ from Example~\ref{ex:u2l}. So $U_2[\ell]=(\{S_i\},\{\lt_a\},
  s_{\text{in}})$ where $S_\ell=\{1,2\}$ and $S_i$ is a singleton set for
  $i\neq\ell$.  A global state of $U_2[\ell]$ is identified with its
  $\ell$-component. Furthermore $\alphabet_\ell$ has two disjoint subsets $R_1, R_2$ that reset the
  state of the automata to $1$ and $2$ respectively. All remaining letters do not change states.

Our proof uses a process-based past local temporal logic (over traces) called
$\loctl$ that exactly defines aperiodic trace languages.
This expressive completeness property of $\loctl$ is an easy consequence of a non-trivial result 
from~\cite{DBLP:journals/iandc/DiekertG06},
where the future version of a similar local temporal logic is shown to coincide with first-order logic 
definable, equivalently, aperiodic trace languages.
The syntax of $\loctl$ is as follows.
\begin{align*}
\text{Event formula }\alpha &= a \mid \neg\alpha \mid \alpha\vee\alpha 	\mid \Y_i{\alpha} 
\mid  \alpha \Si_i \alpha \hspace{9mm}a \in \alphabet, i \in \pset \\
\text{Trace formula }\beta &= \exists_i \alpha \mid \neg \beta \mid \beta \vee \beta 
\end{align*}
The semantics of the logic is given below. 
Each event formula is evaluated at an event
of a trace. Let $t = (E, \leq, \labf) \in \traces$ be a trace with $e \in E$. For any event $x$ in $t$ and $i \in \pset$, we
denote by $x_i$ the unique maximal event of $(\dcset{x} \setminus \{x\}) \cap E_i$, if it exists.
\begin{align*}
t,e & \models a && \text{if } \labf(e) = a \\
t,e & \models \neg \alpha && \text{if } t,e \not\models \alpha \\
t,e & \models \alpha \vee \beta && \text{if } t,e \models \alpha \mbox{~or~} t,e \models \beta \\
t,e & \models \Y_i{\alpha}  && \text{if } e_i \text{ exists, and } t, e_i \models \alpha \\
t,e &\models \alpha \Si_i \alpha' && \text{if } e \in E_i \text{ and }
\exists f \in E_i \text{ such that } f < e \text{ and } t,f \models \alpha'
\\
&&& \text{and } \forall g \in E_i ~~ f < g < e \Rightarrow t,g \models \alpha
\end{align*}
Note that the since operator is a strict version. 
$\loctl$ trace formulas are evaluated for traces, with the following
semantics.
\begin{align*}
t & \models \exists_i \alpha && \text{if there exists a maximal $i$-event $e$ in $t$ such that } t,e \models \alpha
\end{align*}
The semantics of the boolean combinations of trace formulas are obvious. 
Any $\loctl$  trace formula $\beta$ over $\dalphabet$ defines the trace language $L_\beta = \{ t \in \traces \mid t \models \beta \}$. 
The following lemma shows that $\loctl$ is expressively complete.
\begin{lemma}\label{lem:loctlfo}
A trace language is defined by a $\loctl$ formula if and only if it is defined by an
$\mathrm{FO}$ formula.
\end{lemma}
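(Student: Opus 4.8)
The plan is to establish the two implications separately; the forward implication is a routine translation, while the converse reduces, via a reversal argument, to the expressive completeness of a \emph{future} local temporal logic proved in~\cite{DBLP:journals/iandc/DiekertG06}.

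First I would handle the direction from $\loctl$ to $\mathrm{FO}$. By structural induction I would associate with every event formula $\alpha$ an $\mathrm{FO}$ formula $\widehat\alpha(x)$ with a single free variable such that $t,e \models \alpha$ iff $t \models \widehat\alpha(e)$. The atomic case sets $\widehat{a}(x) = (\labf(x) = a)$; the translation commutes with Boolean connectives; for $\Y_i$ I would use that the relation ``$y = x_i$'' (the maximal $i$-event strictly below $x$) is $\mathrm{FO}$-definable, namely $y \in E_i \wedge y < x \wedge \neg\exists z\,(z \in E_i \wedge y < z < x)$; and for $\alpha \Si_i \alpha'$ I would quantify over the totally ordered chain $E_i \cap (\dcset{x}\setminus\{x\})$. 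A trace formula $\exists_i \alpha$ then becomes $\exists x\,(x \text{ is the maximal } i\text{-event} \wedge \widehat\alpha(x))$, and Boolean combinations of trace formulas are immediate. This yields an $\mathrm{FO}$ sentence for every $\loctl$ trace formula.

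For the converse I would exploit reversal. Define $t^{\mathrm{rev}} = (E, \geq, \labf)$, the trace with the same events and labels but the opposite order; since the dependence relation $\dep$ is symmetric, $t^{\mathrm{rev}}$ is again a trace, and $t \mapsto t^{\mathrm{rev}}$ is an involution on $\traces$. Replacing $\leq$ by its converse in an $\mathrm{FO}$ sentence $\varphi$ produces a sentence $\varphi^{\mathrm{rev}}$ with $t \models \varphi^{\mathrm{rev}}$ iff $t^{\mathrm{rev}} \models \varphi$, so $\mathrm{FO}$-definability is preserved under reversal: if $L$ is $\mathrm{FO}$-definable then so is $L^{\mathrm{rev}} = \{t^{\mathrm{rev}} \mid t \in L\}$. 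In parallel I would introduce the future counterpart of $\loctl$, obtained by replacing $\Y_i$ with a next-operator $\Oo_i$, $\Si_i$ with a strict until-operator $\mathsf{U}_i$, and by anchoring trace formulas at \emph{minimal} $i$-events. Reversal turns maximal $i$-events into minimal ones and the causal past into the causal future, so there is a syntactic mirror $\gamma \mapsto \gamma^{\mathrm{mir}}$ between this future logic and $\loctl$ satisfying $t, e \models \gamma^{\mathrm{mir}}$ iff $t^{\mathrm{rev}}, e \models \gamma$ (and likewise for trace formulas).

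With these pieces in place the argument closes quickly. The result of~\cite{DBLP:journals/iandc/DiekertG06} gives that the future local temporal logic above is expressively complete for $\mathrm{FO}$ over traces. Hence, given an $\mathrm{FO}$-definable $L$, the language $L^{\mathrm{rev}}$ is $\mathrm{FO}$-definable and therefore defined by some future formula $\gamma$; its mirror $\gamma^{\mathrm{mir}} \in \loctl$ then defines $L$, since $t \models \gamma^{\mathrm{mir}}$ iff $t^{\mathrm{rev}} \models \gamma$ iff $t^{\mathrm{rev}} \in L^{\mathrm{rev}}$ iff $t \in L$. I expect the main obstacle to be purely a matter of bookkeeping at the interface with~\cite{DBLP:journals/iandc/DiekertG06}: one must check that the future logic whose completeness is established there is (logically equivalent to) the exact mirror of $\loctl$ described above --- in particular that its local modalities and its process-indexed anchoring match $\Oo_i$, $\mathsf{U}_i$ and the minimal-event quantifier --- so that expressive completeness transfers across reversal without gaining or losing expressive power.
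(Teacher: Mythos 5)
Your overall route is the same as the paper's: both reduce the hard direction to the expressive completeness of the pure-future local temporal logic of Diekert and Gastin, transferred to the past. Your forward translation ($\loctl$ into $\mathrm{FO}$) is routine and correct, and your explicit reversal argument (symmetry of $\dep$, hence $t \mapsto t^{\mathrm{rev}}$ is an involution on $\traces$, and $\mathrm{FO}$-definability is closed under reversal) is actually spelled out more carefully than in the paper, which compresses all of this into the single phrase ``we consider its past dual here.''

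However, the step you defer as ``purely a matter of bookkeeping'' is precisely where the paper's proof has its only mathematical content, and the assumption you hope to verify is false as stated: the mirror of the logic whose completeness is proved in the cited work is \emph{not} $\loctl$ on the nose. In the paper's rendering, the past dual of that logic has a \emph{non-strict} since $\mathbb{S}_i$, whose semantics asks for some $f \in E_i$ with $f \leq e$ (and $\alpha$ at all $g \in E_i$ with $f < g \leq e$), and which may be evaluated at an arbitrary event $e$, not necessarily an $i$-event; by contrast, $\Si_i$ in $\loctl$ is strict ($f < e$) and is only satisfiable at events of $E_i$. So your mirrored future logic with a strict $\mathsf{U}_i$ does not match, and the transfer needs an explicit inter-translation. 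This is exactly what the paper supplies: setting $\gamma = \alpha' \vee (\alpha \wedge \alpha \Si_i \alpha')$ and $i = \bigvee_{a \in \Sigma_i} a$, one checks $\alpha \,\mathbb{S}_i\, \alpha' \equiv (i \wedge \gamma) \vee (\neg i \wedge \Y_i \gamma)$, using that when $e \notin E_i$ the $i$-events weakly below $e$ are those weakly below $e_i$. A similar (easy) check is needed for the anchoring of trace formulas at maximal $i$-events. Without some such derivation your argument is incomplete at exactly this interface; with it, your reversal scaffolding plus this translation yields a complete proof equivalent to the paper's.
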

\begin{proof}
    In paper~\cite{DBLP:journals/iandc/DiekertG06}, a process based pure future 
    local temporal logic is shown to be expressively equivalent to first order logic
    over traces. We consider its past dual here with following syntax and semantics 
    for the event formulas.
\begin{align*}
\alpha &= \top \mid a \mid \neg\alpha \mid \alpha\vee\alpha 	
\mid \Y_i{\alpha} 
\mid  \alpha \mathbb{S}_i \alpha \hspace{9mm}a \in \alphabet, i \in \pset
\end{align*}
\begin{align*}
    t,e \models \alpha \mathbb{S}_i \alpha'~~ &\text{ if } \exists f \in E_i
\text{ such that } f \leq e \text{ and } t,f \models \alpha'
\\
 & \text{and } \forall g \in E_i \text{ if } f < g \leq e \text{ then } t,g \models \alpha
\end{align*}
We only need to show that $\loctl$ is expressive enough to define the $\mathbb{S}_i$ operator.
Consider the $\loctl$ formulas $\gamma = \alpha' \vee (\alpha \wedge \alpha \Si_i \alpha')$ and $i = \vee_{a \in \Sigma_i} a$.
Then $\alpha \mathbb{S}_i \alpha'$ is equivalent to the $\loctl$ formula $(i \wedge \gamma) \vee (\neg i \wedge \Y_i \gamma)$.
\end{proof}
\begin{theorem}\label{thm:gcascade}
	A trace language is defined by a $\loctl$ formula if and only if it is accepted by a global cascade product of $U_2[\ell]$.
\end{theorem}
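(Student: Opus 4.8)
The plan is to prove the two implications separately by induction, using the global cascade product principle (Theorem~\ref{thm:gcwpp2}) as the bridge in one direction and a structural compilation of formulas in the other. The expressive completeness of Lemma~\ref{lem:loctlfo} is not needed, since I argue directly between $\loctl$ formulas and cascades of resets.

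For the direction \emph{cascade $\Rightarrow$ formula}, I would induct on the length $k$ of a global cascade product $A_1 \gc \cdots \gc A_k$ of automata of the form $U_2[\ell]$, proving over every distributed alphabet that an accepted language is $\loctl$-definable. In the base case $A_1 = U_2[\ell]$, acceptance reads only the final $\ell$-component, and the $\ell$-component after an event $f\in E_\ell$ records ``the most recent reset letter on process $\ell$ lay in $R_2$''; this is captured by an event formula $\varphi_2 \equiv r_2 \vee (\neg\mathrm{reset} \wedge (\neg\mathrm{reset} \Si_\ell r_2))$ (with $r_2=\bigvee_{a\in R_2}a$, $\mathrm{reset}=\bigvee_{a\in R_1\cup R_2}a$, plus a disjunct for the no-reset/initial case), and the accepted language is $\exists_\ell\varphi_2$. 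For the step I would write $A_1\gc(A_2\gc\cdots\gc A_k)=A_1\gc B$ and apply Theorem~\ref{thm:gcwpp2}: every accepted $L$ is a finite union of sets $U\cap\theta_{A_1}^{-1}(V)$ with $U$ accepted by $A_1$ (base case) and $V$ accepted by $B$ over $\dgtralphabet$ (induction). The remaining task is to pull a $\loctl$ formula over $\dgtralphabet$ back along $\theta_{A_1}$: since $\theta_{A_1}$ preserves events, the order and each $E_i$, one keeps $\Y_i,\Si_i,\exists_i$ and the booleans and only rewrites each atom $(a,s)$ as $a\wedge\sigma_s$, where $\sigma_s$ asserts that the best global state of $A_1$ just before the current event is $s$. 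For $A_1=U_2[\ell]$ this is $\sigma_2 \equiv \Y_\ell\varphi_2 \vee (\neg\Y_\ell\top \wedge [\mathrm{init}_\ell=2])$, and the point to stress is that $\Y_\ell$ reaches the maximal $\ell$-event in the strict causal past of $e$ even when $e\notin E_\ell$ — exactly the best-$\ell$-knowledge that $\theta_{A_1}$ exposes by Definition~\ref{def:gat}. Closure of $\loctl$ trace formulas under finite boolean combinations finishes this direction.

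For the direction \emph{formula $\Rightarrow$ cascade}, I would do a structural induction on event formulas, maintaining the invariant that the truth value $\alpha(e)$ of every subformula $\alpha$ is a function of the \emph{extended} letter at $e$, i.e.\ of the base letter at $e$ together with the ``before'' global states $\run_t(\dcset e\setminus\{e\})$ exposed by the global transducer. Atoms $a$ are functions of the base letter, and booleans are functions of already-computed bits and need no new reset. The operator $\Y_i\alpha$ is realized by one reset on process $i$ that overwrites its state with $\alpha(g)$ at each $i$-event $g$ (legitimate since $\alpha(g)$ is read off $g$'s extended letter); then the $i$-component of the best global state just before any event $e$ equals $\alpha(e_i)=\Y_i\alpha(e)$, so $\Y_i\alpha$ is a before-value one level up. The operator $\alpha\Si_i\alpha'$ is computed by a reset on process $i$ maintaining the reflexive since $b(h)=\alpha'(h)\vee(\alpha(h)\wedge b(h_i))$ via $R_2=\{i\text{-events with }\alpha'\}$, $R_1=\{i\text{-events with }\neg\alpha'\wedge\neg\alpha\}$ and ``keep'' otherwise; the strict value $(\alpha\Si_i\alpha')(e)$ then equals the before-value $b(e_i)$ conjoined with $[e\in E_i]$, again a function of the extended letter one level up. Finally a trace formula $\exists_i\alpha$ is realized by the same overwrite-with-$\alpha$ reset, whose final $i$-component is $\alpha$ at the maximal $i$-event; the acceptance set $F$ selects this bit, and boolean combinations of trace formulas are obtained by choosing $F$ in the product.

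The main obstacle, I expect, is the bookkeeping discipline in the second direction: everything an event may consult about its past must be a before-value produced at a strictly lower cascade level, because the global transducer only ever reveals $\run_t(\dcset e\setminus\{e\})$ and never the state reached at $e$ itself. This is what forces the reflexive-since/strict-since device above and dictates compiling subformulas into cascade levels in order of increasing modal depth. The matching subtlety in the first direction is merely checking that the best-global-state predicate $\sigma_s$ is itself $\loctl$-definable, which succeeds precisely because $\Y_i$ already performs the cross-process lookup that $\theta_{A_1}$ builds in.
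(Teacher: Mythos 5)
Your proposal is correct and follows essentially the same route as the paper's own proof: the same induction on the number of factors via the global cascade product principle (Theorem~\ref{thm:gcwpp2}), with atoms $(a,s)$ pulled back along $\theta$ as $a\wedge\Y_\ell(\cdots)$ exactly as in the paper, and the same level-by-level compilation of event formulas into reset automata in the converse direction, where $\Si_i$ rests on the strict/reflexive-since bookkeeping (the paper's identity $\beta \Si_j \gamma \equiv \Oo_j(\gamma\vee(\beta\wedge(\beta\Si_j\gamma)))$) and $\Y_i$ is precisely where the global transducer's best-state $i$-component is consumed. The only cosmetic difference is that you redo the base case and the since case inline, whereas the paper delegates them to its Theorem~\ref{thm:lcascade} for the $\sprtl$ fragment and only treats $\Y_j$ anew.
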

In conjunction with Lemma~\ref{lem:loctlfo}, Theorem~\ref{thm:gcascade} gives a new 
characterization of first order definable trace languages. Before giving its proof,
we provide a local temporal logic characterization for local cascade product of
$U_2[\ell]$.  The corresponding local temporal logic $\sprtl$ is simply the fragment of
$\loctl$ where $\Y_i$ is disallowed.  The semantics is inherited.
It is unknown whether the logic $\sprtl$ is as expressive as $\loctl$.

\begin{theorem}\label{thm:lcascade}%
	A trace language is defined by a $\sprtl$ formula if and only if it is
	recognized by local cascade product of $U_2[\ell]$.
\end{theorem}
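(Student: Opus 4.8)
The plan is to prove the two implications separately: the logic-to-automata direction by a direct bottom-up construction of a cascade, and the automata-to-logic direction by induction on the length of the cascade, using Lemma~\ref{localcascade} together with the local wreath product principle (Theorem~\ref{thm:wpp2}). The guiding dictionary is that one reset $U_2[i]$ sitting on process~$i$ computes exactly a strict since modality $\Si_i$ along the totally ordered $i$-line $E_i$. The crucial point enabling the \emph{local} (rather than global) cascade is that $t,e\models\alpha\Si_i\alpha'$ forces $e\in E_i$; hence the truth value at $e$ of any $\sprtl$ subformula depends only on the local history of the processes in $\loc(e)$, which is precisely what the local asynchronous transducer $\chi$ records. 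This is also why no $\Y_i$ operator appears: $\Y_i\alpha$ would require, even when $i\notin\loc(e)$, information about the latest $i$-event below $e$, i.e.\ the gossip/global information unavailable to a local cascade.

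\emph{From $\sprtl$ to a local cascade.} I would build, by induction on subformulas processed bottom-up, a local cascade product $C$ of resets $U_2[\ell]$ so that the composed transducer $\chi_C$ exposes, in the augmented label of each event $e$ and for every process $i\in\loc(e)$, the truth value at $e$ of each subformula. The base case (a label test $a$) is read directly. For $\delta=\gamma\Si_i\gamma'$, once $\gamma,\gamma'$ are exposed, I append a $U_2[i]$ whose $i$-transitions read $\gamma(e),\gamma'(e)$ from the current augmented letter and set the state to $2$ if $\gamma'(e)$ holds, to $1$ if both $\gamma(e),\gamma'(e)$ fail, and leave it unchanged otherwise; one checks that the state reached \emph{before} an $i$-event $e$ equals the truth of $\delta$ at $e$, so $\chi$ exposes $\delta$ at $e$ (the initial state~$1$ correctly makes $\delta$ false at the first $i$-event). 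Reading a subformula $\gamma$ with outermost $\Si_j$ at an $i$-event is uniform: if $j\notin\loc(e)$ then $\gamma$ is false, and if $j\in\loc(e)$ its value is exposed. For a trace formula $\exists_i\alpha$ I append a final $U_2[i]$ that resets to $\alpha(e)$ at every $i$-event and accept iff the final $i$-state is~$2$, matching the uniqueness of the maximal $i$-event. Boolean combinations of trace formulas follow from closure of cascade-accepted languages under complement (swap the accepting set) and intersection (run a second cascade in parallel, realized inside the framework by a cascade whose later automata ignore the added component).

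\emph{From a local cascade to $\sprtl$.} I argue by induction on the number of factors. For a single $U_2[\ell]$, acceptance is a boolean combination of ``the final $\ell$-state is $2$'', which is defined by $\exists_\ell\,\mathsf{st}_2$, where (treating $R_1,R_2$ as the abbreviations $\bigvee_{a\in R_j}a$) the event formula $\mathsf{st}_2 = R_2 \vee(\neg R_1\wedge\neg R_2\wedge((\neg R_1)\Si_\ell R_2))$ asserts that the state after an $\ell$-event is~$2$ (with an easy adjustment when the initial state is~$2$). For the inductive step, write the cascade as $A_1\lc C'$ with $A_1=U_2[\ell_1]$; by Lemma~\ref{localcascade} it corresponds to an asynchronous morphism into $U_2[\ell_1]\wr T'$, where $T'$ is the transition atm of $C'$. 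Theorem~\ref{thm:wpp2} then expresses the accepted language as a finite union of sets $U_j\cap\chi_{A_1}^{-1}(V_j)$, with $U_j$ recognized by $U_2[\ell_1]$ (hence $\sprtl$-definable by the base case) and $V_j\subseteq\tracesdtr$ accepted by the shorter cascade $C'$, hence $\sprtl$-definable by the induction hypothesis. It remains to pull a formula $\beta_V$ over $\dtralphabet$ back along $\chi_{A_1}$: since $\chi_{A_1}$ preserves the event poset and $\loc$, I translate each atomic test $(a,s_a)$ into $a\wedge\psi_{s_a}$, where $\psi_{s_a}$ is the $\Si_{\ell_1}$-formula (as in the base case) asserting that the $\ell_1$-state of $A_1$ immediately before the event equals $(s_a)_{\ell_1}$, and leave the $\Si_i$ and $\exists_i$ operators untouched; this yields a $\sprtl$ formula for $\chi_{A_1}^{-1}(V_j)$. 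Intersecting with $U_j$ and taking the finite union completes the step.

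The main obstacle is the careful bookkeeping around the distributed local transducer in both directions: verifying that exactly the right bits are exposed at exactly the right events (participating processes only), that subformulas evaluated at events of a foreign process are correctly read as false, and above all that every formula produced uses only the strict since $\Si_i$ and never $\Y_i$. This last point is what keeps the construction within the \emph{local} cascade world and is the precise technical reason the argument cannot be pushed to capture $\loctl$, consistent with the comparative expressiveness of $\sprtl$ and $\loctl$ being left open.
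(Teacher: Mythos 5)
Your proposal is correct and follows essentially the same route as the paper's proof: the automata-to-logic direction via Lemma~\ref{localcascade} and the wreath product principle (Theorem~\ref{thm:wpp2}), with the same base-case formulas for $U_2[\ell]$ and the same pullback of atomic tests $(a,s_a)$ through the local transducer, and the logic-to-automata direction by structural induction using one reset per $\Si_i$ operator driven by the unfolding recurrence of strict since. The only (sound) cosmetic difference is your invariant that the state \emph{before} an event encodes truth \emph{at} that event, which lets each since-component live on a single process, whereas the paper stores truth after the event and has the other processes of $\loc(a)$ mimic the process-$j$ update.
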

\begin{proof}
  $(\Leftarrow)$ Consider a local cascade product $A=U_2[j]\circ_\ell B$.
  By the wreath product principle of Theorem~\ref{thm:wpp2}, and the relation between local cascade of automata and 
  asynchronous morphism into local wreath product of atm, we know that
  any language recognized by $A$ is a union of languages of the form $L_1 \cap
  \chi^{-1}(L_2)$ where $L_1 \subset \traces$ is recognized by $U_2[j]$, the
  language $L_2 \subseteq \tracesdtr$ is recognized by $B$,
  and $\chi$ is the local asynchronous transducer associated to $U_2[j]$ and its initial state, say $1$.

  With global accepting state $2$, the language accepted by $U_2[j]$ is defined
  by the formula $\exists_j(R_2\vee(\neg R_1 \wedge ((\neg R_1)\Si_j R_2)))$.
  The language accepted by $U_2[j]$ with global accepting state 1 can be defined
  with a similar formula.  The difference is that, 1 being the initial state, we
  also have to consider the case where process $j$ contains no events 
  or no events from $R_2$. Hence we use the formula
  $(\neg\exists_j\top) \vee \exists_j
  (R_1 \vee (\neg R_2 \wedge \neg(\neg R_1 \Si_j R_2)))$.
  
  By induction on the number of $U_2[\ell]$s in the local cascade product, we
  know that $L_2$ is $\sprtl$ definable over alphabet $\dtralphabet$.  Thus we
  only need to prove that $\chi^{-1}(L_2)$ is $\sprtl$ definable over
  $\dalphabet$.  We prove this by structural induction on $\sprtl$ formulas over
  $\dtralphabet$.  For $\sprtl$ event formula $\alpha$ over $\dtralphabet$, we
  provide $\hat{\alpha}$ over $\dalphabet$ such that for any trace $t \in
  \traces$, and any event $e$ in $t$, we have $t,e \models \hat{\alpha}$ if and
  only if $\chi(t),e \models \alpha$.  The non-trivial case here is the base
  case of letter formula $\alpha = (a,s_a)$.  If $j \notin \loc(a)$, then
  $\hat{\alpha} = a$, else if ${[s_a]}_j = 2$, then $\hat{\alpha} = a \wedge
  (\neg R_1)\Si_j R_2$.  Other cases can be handled similarly.
	
  \medskip\noindent
	$(\Rightarrow)$ For any $\sprtl$ event formula $\alpha$, we create an
	asynchronous automaton $A_\alpha$ such that for any trace $t$ and its event
	$e$, from the local state ${[\rho_t(\dcset{e})]}_i$ for any $i \in \loc(e)$,
	one can deduce whether $t,e \models \alpha$.
  Furthermore the asynchronous automaton $A_\alpha$ is a local cascade product of $U_2[\ell]$s.
  The construction is done by structural induction on the $\sprtl$ event formulas.

  \subparagraph*{Base Case:} When $\alpha = a \in \alphabet$, let $A_\alpha =
  (\{S_i\}, \{\lt_a\}, s_{\text{in}})$ where $S_i = \{\bot\}$ for all $i \notin
  \loc(a)$, and $S_i = \{\top, \bot\}$ for all $i \in \loc(a)$.  For any
  $P$-state $s$, if for all $i \in P$ we have $s_i = \bot$, then we denote $s =
  \bot$; similarly for $\top$.  Initial state $s_{\text{in}} = \bot$.  For local
  transitions, $\delta_b$ is a reset to $\top$ if $b=a$, and it is a reset to
  $\bot$ otherwise.  By construction we ensure that, for all $i\in\loc(a)$ we
  have ${[\rho_t(\dcset e)]}_i = \top$ if and only if $t,e \models \alpha$.  It
  is also easy to see that $A_\alpha$ is a local cascade product of $U_2[j]$ for
  $j\in\loc(a)$.

	\subparagraph*{Inductive Case:} The non-trivial case is $\alpha = \beta \Si_j
	\gamma$.  By inductive hypothesis, we can assume $A_\beta$ and $A_\gamma$ are
	available.  For simplicity, we assume $A = (\{S_i\}, \{\lt_a\},
	s_{\text{in}})$ simultaneously provides truth value of $\beta$ and $\gamma$ at
	any event.  We construct $B = (\{Q_i\}, \{\lt_{(a,s_a)}\}, q_{\text{in}})$
	over $\dtralphabet$ such that $A\circ_\ell B$ is the required asynchronous 
  automaton.
  Let $Q_i = \{\top, \bot\}$ for all $i \in \pset$. Again, we denote a $P$-state $q$ as
	$\bot$ if $q_i = \bot$ for all $i \in P$; similarly for $\top$. Initial state $q_{\text{in}} = \bot$. For any 
	$a \notin \alphabet_j$, $\lt_{(a,s_a)}$ is a reset to $\bot$. Note that if $\chi$ is the local asynchronous transducer 
	associated with $A$, and in $\chi(t)$ a $j$-event $e$ is 
	labelled $(a,s_a)$, then ${[s_a]}_j$ tells us the truth value of $\beta$ 
	and $\gamma$ at the previous $j$-event $e_j$, if it exists. 
	Let us denote this by ${[s_a]}_j \vdash \beta$ (resp.\ ${[s_a]}_j \vdash \neg\beta$)
	if at the previous
	$j$-event, $\beta$ is true (resp.\ false) according to the $j$ state of $s_a$. Then the transition for $a \in \alphabet_j$ is
	given by
	\begin{align*}
		\lt_{(a,s_a)} &= \text{reset to } \top &&\text{ if } {[s_a]}_j \vdash \gamma \\
		\lt_{(a,s_a)} &= \text{reset to } \bot &&\text{ if } {[s_a]}_j \vdash \neg\gamma 
										\text{ and }{[s_a]}_j \vdash \neg\beta \\
		\lt_{(a,s_a)}(q_a) &= \top &&\text{ if } {[s_a]}_j \vdash \neg\gamma \text{ and } 
									{[s_a]}_j \vdash \beta \text{ and } {[q_a]_j} = \top \\
		\lt_{(a,s_a)}(q_a) &= \bot &&\text{ if } {[s_a]}_j \vdash \neg\gamma \text{ and } 
									{[s_a]}_j \vdash \beta \text{ and } {[q_a]_j} = \bot 
	\end{align*}
	The transitions make sense if we recall the identity 
  $\beta \Si_j \gamma \equiv \Oo_j (\gamma \vee (\beta \wedge (\beta \Si_j 
  \gamma)))$,
  where $\Oo_j \chi \equiv \bot \Si_j \chi$. Note that in the last two transitions above, $\lt_{(a,s_a)}$ is the 
	identity transformation on the process $j$ state; the other processes of $\loc(a)$ can update their states mimicking 
	process $j$ state update if they have the previous process $j$ state information available.
	In view of this, it is easy to verify that $B$ is a local cascade product of
	$U_2[j]$ followed by $U_2[\ell]$ for $\ell \neq j$. 
\end{proof}
We now give the proof of Theorem~\ref{thm:gcascade}.
\begin{proof}[Proof of Theorem~\ref{thm:gcascade}]
  $(\Leftarrow)$ Consider a global cascade product $A = U_2[\ell] \gc B$.
	By the global cascade product principle of Theorem~\ref{thm:gcwpp2}, any
	language recognized by $A$ is a union of languages of the form $L_1 \cap
	\theta^{-1}(L_2)$ where $L_1 \subset \traces$ is recognized by $U_2[\ell]$,
	and the language $L_2 \subseteq \tracesdgtr$ is recognized by $B$, and
	$\theta$ is the global asynchronous transducer associated with $U_2[\ell]$ and
	its initial state, say $1$. We have seen in the proof of Theorem~\ref{thm:lcascade}
	that $L_1$ is $\sprtl$ definable over alphabet $\dalphabet$.
  
  
  By induction on the number of $U_2[j]$s
  in the global cascade product, we know that $L_2$ is $\loctl$ definable over
  alphabet $\dgtralphabet$.  Thus we only need to prove that $\theta^{-1}(L_2)$
  is $\loctl$ definable over $\dalphabet$.  We prove this by structural
  induction on $\loctl$ formulas over $\dgtralphabet$.  For $\loctl$ event
  formula $\alpha$ over $\dgtralphabet$, we provide $\hat{\alpha}$ over
  $\dalphabet$ such that for any trace $t \in \traces$, and any event $e$ in
  $t$, we have $t,e \models \hat{\alpha}$ if and only if $\theta(t),e \models
  \alpha$.  The non-trivial case here is the base case of letter formula, say
  $\alpha = (a,2)$.  In this case $\hat{\alpha} = a \wedge \Y_\ell(R_2\vee(\neg
  R_1 \wedge ((\neg R_1)\Si_\ell R_2)))$.  Inductive cases are trivial. For instance 
  $\widehat{\Y_i\alpha}=\Y_i\hat{\alpha}$. Other cases are similar.

  \medskip\noindent
  $(\Rightarrow)$ For any $\loctl$ event formula $\alpha$, we create an
  asynchronous automaton $A_\alpha$ such that for any trace $t$ and its event
  $e$, from the local state ${[\rho_t(\dcset{e})]}_i$ for any $i \in \loc(e)$, one can deduce
  whether $t,e \models \alpha$. Furthermore $A_\alpha$ is a global cascade
  product of $U_2[\ell]$s.  The construction is done by structural induction on
  the $\loctl$ event formulas. Since the $\sprtl$ proof is done,
  we only need to deal with the inductive case of $\Y_j \beta$.
  
  \subparagraph*{Inductive Case:} 
	Suppose $\alpha = \Y_j \beta$. By inductive hypothesis, we can assume
	$A_\beta$ is available, and provides truth value of $\beta$ at any event.
	We construct $B = (\{Q_i\}, \{\lt_{(a,s_a)}\}, 
	q_{\text{in}})$ over $\dgtralphabet$ such that $A_\alpha = A_\beta \gc B$.
	Let $Q_i = \{\top, \bot\}$ for all $i \in \pset$. We denote a $P$-state $q$ as
	$\bot$ if $q_i = \bot$ for all $i \in P$; similarly for $\top$. Initial state $q_{\text{in}} = \bot$. Let $\theta$ be the 
	global asynchronous transducer associated with $A_\beta$. For any trace $t \in \traces$, let $e$ be an event in $t$. If the 
	label of $e$ in $\theta(t)$ is $(a,s)$, then note that $s_j$ tells us the truth value of $\beta$ 
	at the event $e_j$, if it exists. Let us denote this by $s_j \vdash \beta$ and $s_j \vdash \neg\beta$ 
	depending on whether $\beta$ is
	respectively true and false at $e_j$, according to $s_j$. The transition rules are simple
	\begin{align*}
	\lt_{(a,s)} &= \text{reset to } \top &&\text{ if } s_j \vdash \beta \\
	\lt_{(a,s)} &= \text{reset to } \bot &&\text{ if } s_j \vdash \neg\beta
	\end{align*}
	This state update works same for all processes, and it is easy to see that $B$ is, in fact, 
	a local cascade product of $U_2[\ell]$s. However, $B$ requires the global state information from
	$A_\beta$, and so there is a global cascade product between $A_\beta$ and $B$.

  This completes the proof.
\end{proof}

Note that if our postulated decomposition (see Question~\ref{qs-kr}) were true,
it would imply that $\sprtl$ is expressively complete, which would be a stronger
temporal logic characterization for aperiodic, or equivalently first-order logic
definable trace languages than what is currently known.  In particular, by
Theorem~\ref{thm:acyclickr}, $\sprtl$ is expressively complete over tree
architecture.  And this holds true for any distributed alphabet where
Question~\ref{qs-kr} admits a positive answer.

\section{Conclusion}\label{sec:conclusion}
We have presented an algebraic framework equipped with 
wreath products and proved a wreath product principle which is 
well suited for the analysis of trace languages. Building
on this framework, we have postulated  a natural decomposition
theorem which has been proved for the case of acyclic architectures. 
This special case already provides an interesting generalization
of the Krohn-Rhodes theorem. It simultaneously proves Zielonka's theorem for
acyclic architectures.

The wreath product operation in the new framework, when viewed in terms of automata, 
manifests itself in the form of a local cascade product of asynchronous automata. We have also
proposed global cascade products of asynchronous automata and applied
them to arrive at a novel decomposition of aperiodic trace languages.
This is a non-trivial and truly concurrent generalization of the 
cascade decomposition of aperiodic word languages using two-state reset automata.

\bibliography{single}


\end{document}